\newcommand{\arrowchem}[1]{\xrightarrow{#1}}
\newcommand{\arrowschem}[2]{\xrightleftharpoons[#1]{#2}}
\newcommand{\nnorm}[1]{\mathbf{\big\|} #1 \mathbf{\big\|^*}_{p,Q}}
\newcommand{\norm}[1]{\left\Vert #1 \right\Vert}
\newcommand{\normpQ}[1]{\norm{#1}_{p,Q}}
\newcommand{\MpQ}{M_{p,Q}}
\title{Logarithmic Lipschitz norms \\
and diffusion-induced instability}
\author{{Zahra Aminzare and Eduardo D. Sontag}\\ \\
{\small Department of Mathematics, Rutgers University,}\\
{\small Piscataway, NJ 08854-8019 USA}}
\begin{document}
\maketitle

\begin{abstract}
This paper proves that contractive ordinary differential equation systems
remain contractive when diffusion is added.  Thus, diffusive instabilities, in
the sense of the Turing phenomenon, cannot arise for such systems. An
important biochemical system is shown to satisfy the required conditions.
\end{abstract}

\section{Introduction} \label{introduction}

In this work, we study reaction-diffusion PDE systems
\[
u_t \;= \;F(u)+D\Delta u
\]
as well as their discrete analogues (``compartmental-systems'').
Here,
\[
u(\omega,t)\,=\,(u_1(\omega,t ),\ldots ,u_n(\omega,t )) \,,
\quad
u_t \,=\,
\left(\frac{\partial u_1}{\partial t},\ldots ,\frac{\partial u_n}{\partial t}\right)\,,
\]
$\Delta $ is the Laplacian operator on a suitable spatial domain $\Omega $,
and no flux (Neumann) boundary conditions are assumed.

In biology, a PDE system of this form describes individuals (particles,
chemical species, etc.) of $n$ different types, with respective abundances
$u_i(\omega,t )$ at time $t$ and location $\omega \in \Omega $, that can react instantaneously,
guided by the interaction rules encoded into the vector field $F$, and can
diffuse due to random motion.
Reaction-diffusion PDE's play a key role in modeling intracellular dynamics
and protein localization in cell processes such as cell division and
eukaryotic chemotaxis
(e.g., \cite{kholodenko1999,%
kalab2002,%
kholodenko2006,%
xiong2011})
as well as in the modeling of differentiation in multi-cellular organisms,
through the diffusion of morphogens which control heterogeneity in gene
expression in different cells
(e.g. \cite{murray2002,%
keshet}).
From a bioengineering perspective, reaction-diffusion models can be used to
model artificial mechanisms for achieving cellular heterogeneity in tissue
homeostasis (e.g.,
\cite{basu2005,%
weiss12homeostasis}).

The ``symmetry breaking'' phenomenon of diffusion-induced, or Turing,
instability refers to the case where a dynamic equilibrium $\bar u$ of the
non-diffusing ODE system $u_t=F(u)$ is stable, but, at least for some diagonal
positive matrices $D$, the corresponding uniform state $u(\omega)=\bar u$ is
unstable for the PDE system $u_t=F(u)+D\Delta u$.
This phenomenon has been studied at least since Turing's seminal work on
pattern formation in morphogenesis~\cite{Turing}, where he argued that
chemicals might react and diffuse so as result in heterogeneous spatial
patterns.
Subsequent work by Gierer and Meingardt~\cite{GiererMeinhardt1972,Gierer1981}
produced a molecularly plausible minimal model, using two substances that
combine local autocatalysis and long-ranging inhibition.
Since that early work, a variety of processes in physics, chemistry, biology,
and many other areas have been studied from the point of view of diffusive
instabilities, and the mathematics of the process has been extensively studied
\cite{
Othmer1969,
Segel1972,
Cross1978,
Conway1978,
Othmer1980,
murray2002,
Borckmans1995,
Castets1990,
keshet,
maini2000}.
Most past work has focused on local stability analysis, through the analysis
of the instability of nonuniform spatial modes of the linearized PDE.
Nonlinear, global, results are usually proved under strong constraints on
diffusion constants as they compare to the growth of the reaction part.

In this note, we are interested in conditions on the reaction part $F$ that
guarantee that no diffusion instability will occur, no matter what is the size
of the diffusion matrix $D$.
We show that if the reaction system is ``contractive'' in the sense that
trajectories
globally and exponentially
converge to each other with respect to a diagonally weighted
$L^p$ norm, then the same property is inherited by the PDE. In particular, if
there is an equilibrium $F(\bar u)=0$, it will follow that this equilibrium
is globally exponentially stable for the PDE system.  A similar result is
also established for a discrete analog, in which a set of ODE systems are
diffusively interconnected.
We were motivated by the desire to understand the important biological systems
described in
\cite{ddv07,Russo}
for which, as we will show, contractivity holds for diagonally weighted $L^1$
norms, but not with respect to diagonally weighted $L^p$ norms, for any
$1<p\leq\infty$.

Closely related work in the literature has dealt with the synchronization
problem, in which one is interested in the convergence of trajectories to
their space averages in weighted $L^2$ norms, for appropriate diffusion
coefficients and Laplacian eigenvalues, specifically
\cite{IEEEsysbio_JAS},
which used passivity ideas from control theory for
systems with special structures such as cyclic systems,
\cite{limingwang2012} which extended this approach to more general passive
structures, and~\cite{Arcak} which obtained a generalization involving
a contraction-like diagonal stability condition.
Our work uses very different techniques, from nonlinear functional analysis
for normed spaces, than the quadratic Lyapunov function approaches,
appropriate for Hilbert spaces, followed in these references.

\section{Logarithmic Lipschitz constants and norms}
We start by reviewing several useful concepts from nonlinear functional analysis, and proving certain technical properties for them.
\subsection{General normed spaces}
 
\begin{definition}\cite{Deimling, Soderlind} Let $(X,\|\cdot\|_X)$ be a normed space. For $x_1, x_2\in X$, the right and left semi inner products are defined by
\begin{equation}\label{semi-inner}
(x_1, x_2)_{\pm}=\displaystyle\|x_1\|_X\lim_{h\to 0^{\pm}}\frac{1}{h}\left(\|x_1+hx_2\|_X-\|x_1\|_X\right).
\end{equation}
\end{definition}

\begin{remark}\label{existence-of-norm}
As every norm possesses left and right Gateaux-differentials, the limits in $(\ref{semi-inner})$ exist and are finite. For more details see \cite{sontag-control}.
\end{remark}

\begin{remark}
The right and left semi inner products $(\cdot,\cdot)_{\pm}$, induce the norm $\|\cdot\|_X$ in the usual way: $(x,x)_{\pm}=\|x\|_X^2$. Conversely if the norm arises from an inner product $(\cdot, \cdot)$, as when $X$ is a Hilbert space, $(x_1, x_2)_+= (x_1, x_2)_-= (x_1, x_2)$. Moreover the right and left semi inner products satisfy the Cauchy-Schwarz inequalities:
\[-\|x\|\cdot\|y\|\leq(x, y)_{\pm}\leq\|x\|\cdot\|y\|.\]

\end{remark}

The following elementary properties of semi inner products are consequences of the properties of norms. See \cite{Deimling, Soderlind} for the proof.  
\begin{proposition}For $x, y, z\in X$ and $\alpha\geq 0$, 
\begin{enumerate}
\item $(x, -y)_{\pm}=-(x, y)_{\mp}$;
\item $(x, \alpha y)_{\pm}=\alpha(x, y)_{\pm}$;
\item $(x, y)_-+(x, z)_{\pm}\leq(x, y+z)_{\pm}\leq (x,y)_++(x,z)_{\pm}$.
\end{enumerate}
\end{proposition}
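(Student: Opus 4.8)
The plan is to fix $x\in X$ and, for $v\in X$, abbreviate $d_{\pm}(v):=\lim_{h\to0^{\pm}}\tfrac1h\big(\|x+hv\|_X-\|x\|_X\big)$, so that $(x,v)_{\pm}=\|x\|_X\,d_{\pm}(v)$; the limits exist by Remark~\ref{existence-of-norm}. Since $\|x\|_X\ge 0$ is a common nonnegative multiplier, each of the three assertions reduces to a statement about $d_{\pm}$, and the whole proof then comes down to three facts: $d_{+}$ is positively homogeneous; $d_{-}(v)=-d_{+}(-v)$; and $d_{+}$ is subadditive.

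For Property~1 I would substitute $h=-k$ in the limit defining $d_{+}(-y)$: this sends $h\to0^{+}$ to $k\to0^{-}$ and changes an overall sign, so $d_{+}(-y)=-d_{-}(y)$, i.e.\ $(x,-y)_{+}=-(x,y)_{-}$; the mirror substitution gives $(x,-y)_{-}=-(x,y)_{+}$. For Property~2, the case $\alpha=0$ is immediate since then $\|x+h\alpha y\|_X=\|x\|_X$ identically, and for $\alpha>0$ the change of variables $k=\alpha h$ pulls the factor $\alpha$ out of the limit, giving $d_{\pm}(\alpha y)=\alpha\,d_{\pm}(y)$.

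The substance is Property~3, and there the one genuinely new ingredient is subadditivity of $d_{+}$. I would get it from the identity $x+h(y+z)=\tfrac12\big((x+2hy)+(x+2hz)\big)$ (homogeneity of the norm) together with the triangle inequality, which yield, for $h>0$,
\[
\tfrac1h\big(\|x+h(y+z)\|_X-\|x\|_X\big)\le \tfrac1{2h}\big(\|x+2hy\|_X-\|x\|_X\big)+\tfrac1{2h}\big(\|x+2hz\|_X-\|x\|_X\big),
\]
and then let $h\to0^{+}$ (so $2h\to0^{+}$), using that all three one-sided limits exist, to conclude $d_{+}(y+z)\le d_{+}(y)+d_{+}(z)$, i.e.\ the right-hand inequality of Property~3 in the $(\cdot)_{+}$ case. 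The remaining three inequalities I would extract by applying subadditivity of $d_{+}$ to the decompositions $z=(y+z)+(-y)$, $-z=y+(-y-z)$, and $-y-z=(-y)+(-z)$, each time converting $d_{-}$ to $d_{+}$ through Property~1.

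I expect the only thing requiring care to be the sign/one-sidedness bookkeeping in Property~3: it pairs a left semi inner product with a $(\cdot)_{\pm}$ one, and the safe route is to express every term through the single sublinear functional $d_{+}$ before comparing. No structure on $X$ beyond the norm axioms and the existence of one-sided Gateaux derivatives (Remark~\ref{existence-of-norm}) is needed; in particular convexity of $h\mapsto\|x+hv\|_X$, though true and the conceptual reason the limits behave well, need not be invoked explicitly.
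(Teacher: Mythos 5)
Your proof is correct and complete. Note, though, that the paper itself does not prove this proposition at all --- it simply cites Deimling and S\"oderlind --- so there is no in-paper argument to compare against; your write-up actually supplies the missing details. Your reduction to the one-sided directional derivative $d_{\pm}$ is sound (the factor $\|x\|_X\ge 0$ preserves all inequalities, and the degenerate case $\|x\|_X=0$ is trivial), Properties 1 and 2 follow exactly as you say by the substitutions $h=-k$ and $k=\alpha h$, and your halving trick $x+h(y+z)=\tfrac12\big((x+2hy)+(x+2hz)\big)$ plus the triangle inequality gives subadditivity of $d_{+}$; this is in fact the same device the paper uses later in its proof of Proposition~\ref{subadd}. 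I checked your three decompositions for the remaining inequalities of Property 3: $z=(y+z)+(-y)$ yields $d_{-}(y)+d_{+}(z)\le d_{+}(y+z)$, $-z=y+(-y-z)$ yields $d_{-}(y+z)\le d_{+}(y)+d_{-}(z)$, and $-y-z=(-y)+(-z)$ yields $d_{-}(y)+d_{-}(z)\le d_{-}(y+z)$, which together with plain subadditivity give all four cases of the chain $(x,y)_-+(x,z)_{\pm}\le(x,y+z)_{\pm}\le(x,y)_++(x,z)_{\pm}$. So the bookkeeping you flagged as the delicate point does in fact close correctly.
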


\begin{remark}
In general, the semi inner product is not symmetric: \[(x, y)_{\pm}\neq(y, x)_{\pm}.\]
\end{remark}

\begin{definition}\cite{Soderlind} Let $(X,\|\cdot\|_X)$ be a normed space and $f\colon Y \to X$ be a function, where $Y\subseteq X$. The strong least upper bound logarithmic Lipschitz constants of $f$ induced by the norm $\|\cdot\|_X$, on $Y$, are defined by
$$M_{Y, X}^{\pm}[f]=\displaystyle\sup_{u\neq v\in Y}\frac{(u-v,f(u)-f(v))_{\pm}}{\|u-v\|_X^2},$$

or equivalently
\begin{equation}\label{defM+}
M_{Y, X}^{\pm}[f]=\displaystyle\sup_{u\neq v\in Y}\lim_{h\to0^{\pm}}\frac{1}{h}\left(\frac{\|u-v+h(f(u)-f(v))\|_X}{\|u-v\|_X}-1\right).
\end{equation}
If $X=Y$, we write $M_X^{\pm}$ instead of $M_{X, X}^{\pm}$.
\end{definition}

\begin{proposition}\label{subadd}
 Let $(X,\|\cdot\|_X)$ be a normed space. For any $f$, $g\colon Y\to X$ and any $Y\subseteq X$:
\begin{enumerate}
\item $M_{Y, X}^+[f+g]\leq M_{Y, X}^+[f]+M_{Y, X}^+[g]$;
\item $M_{Y, X}^{\pm}[\alpha f]=\alpha M_{Y, X}^{\pm}[f]$ for $\alpha\geq0$.
\end{enumerate}
\end{proposition}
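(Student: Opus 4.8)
The plan is to reduce both assertions to the pointwise statements about semi inner products already recorded, and then take suprema over $u\neq v\in Y$. Note first that for any such pair $\|u-v\|_X\neq 0$, so every difference quotient below is well defined, and that $M_{Y,X}^{\pm}$ is a supremum of such quotients.

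For part (1), I would fix $u\neq v\in Y$ and write $x=u-v$, $y=f(u)-f(v)$, $z=g(u)-g(v)$, so that $y+z=(f+g)(u)-(f+g)(v)$. The right-hand inequality in item~3 of the semi-inner-product proposition (taking the ambiguous sign to be $+$) gives $(x,y+z)_+\le (x,y)_++(x,z)_+$. Dividing by $\|x\|_X^2$ and bounding each of the two resulting quotients by the corresponding logarithmic Lipschitz constant yields
\[
\frac{(u-v,\,(f+g)(u)-(f+g)(v))_+}{\|u-v\|_X^2}\;\le\;M_{Y,X}^+[f]+M_{Y,X}^+[g].
\]
Since the right side does not depend on $u,v$, taking the supremum over $u\neq v\in Y$ on the left proves $M_{Y,X}^+[f+g]\le M_{Y,X}^+[f]+M_{Y,X}^+[g]$.

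For part (2), I would first dispose of the degenerate case $\alpha=0$: then $\alpha f\equiv 0$, and item~2 of the semi-inner-product proposition with $\alpha=0$ gives $(x,0)_{\pm}=0$ for all $x$, whence $M_{Y,X}^{\pm}[0]=0=0\cdot M_{Y,X}^{\pm}[f]$. For $\alpha>0$, apply item~2 with $y=f(u)-f(v)$ to obtain $(u-v,\alpha(f(u)-f(v)))_{\pm}=\alpha\,(u-v,f(u)-f(v))_{\pm}$ for every $u\neq v\in Y$; dividing by $\|u-v\|_X^2$ and using that a positive scalar factors out of a supremum gives $M_{Y,X}^{\pm}[\alpha f]=\alpha M_{Y,X}^{\pm}[f]$.

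I do not expect a genuine obstacle. The only points to watch are formal: in (1) the supremum of a sum is only $\le$ the sum of the suprema, which is exactly (and only) what is claimed; in (2) the case $\alpha=0$ must be handled separately rather than by writing $0\cdot\infty$, since $M_{Y,X}^{\pm}[f]$ may be $+\infty$; and one must keep in mind that $\|u-v\|_X\neq 0$ on the index set so the quotients make sense. One could instead argue directly from the difference-quotient form~(\ref{defM+}), using the triangle inequality $\|x+h(y+z)\|_X\le\|x+hy\|_X+\|hz\|_X$ together with sub-/super-additivity of $\limsup$/$\liminf$, but routing through the already-established additivity and homogeneity of the semi inner products is cleaner.
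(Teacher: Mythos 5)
Your proof is correct, but it takes a genuinely different route from the paper's. You derive both parts from the algebraic properties of the semi inner products (items 2 and 3 of the proposition preceding Proposition \ref{subadd}), applied with $x=u-v$, $y=f(u)-f(v)$, $z=g(u)-g(v)$, and then take suprema; this is clean and short, but it leans on that semi-inner-product proposition, which the paper states without proof (citing Deimling and S\"oderlind). The paper instead argues self-containedly from the difference-quotient form (\ref{defM+}): for subadditivity it writes $u-v+h\bigl((f+g)(u)-(f+g)(v)\bigr)$ as the average $\frac{1}{2}\bigl[\bigl(u-v+2h(f(u)-f(v))\bigr)+\bigl(u-v+2h(g(u)-g(v))\bigr)\bigr]$ and applies the triangle inequality, so that each of the two pieces is again a difference quotient with step $2h$; for homogeneity it substitutes $h\mapsto\alpha h$ (treating $\alpha=0$ separately, exactly as you do). Your care about the $\alpha=0$ case and about ``sup of a sum $\le$ sum of sups'' is appropriate. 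One caution about the alternative you sketch at the end: the split $\|x+h(y+z)\|_X\le\|x+hy\|_X+h\|z\|_X$ only yields $M^+_{Y,X}[f+g]\le M^+_{Y,X}[f]+L_{Y,X}[g]$, which is weaker (the Lipschitz constant of $g$, not its logarithmic constant, appears); to get $M^+_{Y,X}[g]$ on the right by direct norm manipulation you need the paper's $\frac{1}{2h}$ averaging trick, or else your semi-inner-product argument.
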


\begin{proof}
\begin{enumerate}
\item By the definition of $M_{Y, X}^{\pm}$, and the triangle inequality for norms, we have
\[
\begin{array}{lcl}
M_{Y, X}^+[f+g]&=&\displaystyle\sup_{u\neq v\in Y}\lim_{h\to0^+}\frac{1}{h}\left(\frac{\|u-v+h((f+g)(u)-(f+g)(v))\|_X}{\|u-v\|_X}-1\right)\\
&=&\displaystyle\sup_{u\neq v\in Y}\lim_{h\to0^+}\frac{1}{2h}\left(\frac{\|2(u-v)+2h((f+g)(u)-(f+g)(v))\|_X}{\|u-v\|_X}-2\right)\\
&\leq&\displaystyle\sup_{u\neq v\in Y}\lim_{h\to0^+}\frac{1}{2h}\displaystyle\left(\frac{\|u-v+2h(f(u)-f(v))\|_X}{\|u-v\|_X}-1\right)+\\
& &\displaystyle\sup_{u\neq v\in Y}\lim_{h\to0^+}\frac{1}{2h}\displaystyle\left(\frac{\|u-v+2h(g(u)-g(v))\|_X}{\|u-v\|_X}-1\right)\\
&=& M_{Y, X}^+[f]+M_{Y, X}^+[g]
\end{array}
\]
\item For $\alpha=0$, the equality is trivial, because both sides are equal to zero. For $\alpha>0$:
\[
\begin{array}{lcl}
M_{Y, X}^{\pm}[\alpha f]&=&\displaystyle\sup_{u\neq v\in Y}\lim_{h\to0^{\pm}}\frac{1}{h}\left(\frac{\|u-v+h(\alpha f(u)-\alpha f(v))\|_X}{\|u-v\|_X}-1\right)\\
&=& \displaystyle\sup_{u\neq v\in Y}\lim_{h\to0^{\pm}}\frac{\alpha}{\alpha h}\left(\frac{\|u-v+(\alpha h)(f(u)-f(v))\|_X}{\|u-v\|_X}-1\right)\\
&=&\alpha M_{Y, X}^{\pm}[f].
\end{array}
\]
\end{enumerate}
\end{proof}

\begin{definition}\cite{Soderlind} Let $(X,\|\cdot\|_X)$ be a normed space and $f\colon Y\to X$ be a function, where $Y\subseteq X$. The least upper bound Lipschitz constant of $f$ induced by the norm $\|\cdot\|_X$, on $Y$, is defined by
 $$L_{Y,X}[f]=\displaystyle\sup_{u\neq v\in Y}\frac{\|f(u)-f(v)\|_X}{\|u-v\|_X}.$$
 Note that $L_{Y,X}[f]<\infty$ if and only if $f$ is Lipschitz on $Y$.
\end{definition}

\begin{definition}
\cite{Soderlind}
Let $(X,\|\cdot\|_X)$ be a normed space and $f\colon Y \to X$ be a Lipschitz function. The least upper bound logarithmic Lipschitz constant of $f$ induced by the norm $\|\cdot\|_X$, on $Y\subseteq X$, is defined by
 $$M_{Y,X}[f]=\displaystyle\lim_{h\to0+}\frac{1}{h}\left(L_{Y,X}[I+hf]-1\right),$$
 or equivalently
\begin{equation}\label{defM}
M_{Y, X}[f]=\displaystyle\lim_{h\to0^+}\sup_{u\neq v\in Y}\frac{1}{h}\left(\frac{\|u-v+h(f(u)-f(v))\|_X}{\|u-v\|_X}-1\right)
\end{equation}
 If $X=Y$, we write $M_X$ instead of $M_{X, X}$.
\end{definition}

\begin{lemma}\label{M+<M}
 $M_{Y,X}^+[f]\leq M_{Y,X}[f]$.
 \end{lemma}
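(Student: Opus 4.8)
The plan is to exploit the fact that $M_{Y,X}^+[f]$ and $M_{Y,X}[f]$ are assembled from exactly the same family of difference quotients, the only difference being that the supremum over pairs $(u,v)$ and the limit $h\to0^+$ are taken in opposite orders; the inequality is then an instance of the elementary principle that a supremum of limits is at most the limit of the suprema. Concretely, for $u\neq v$ in $Y$ and $h>0$ put
\[
\phi_{u,v}(h)\;=\;\frac1h\left(\frac{\|u-v+h(f(u)-f(v))\|_X}{\|u-v\|_X}-1\right),
\qquad
\Psi(h)\;=\;\sup_{u'\neq v'\in Y}\phi_{u',v'}(h),
\]
so that by definition $M_{Y,X}^+[f]=\sup_{u\neq v\in Y}\lim_{h\to0^+}\phi_{u,v}(h)$ and $M_{Y,X}[f]=\lim_{h\to0^+}\Psi(h)$.

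First I would fix an arbitrary pair $u\neq v$ in $Y$. By the very definition of $\Psi$ one has $\phi_{u,v}(h)\le\Psi(h)$ for every $h>0$. Now let $h\to0^+$ in this inequality. On the left the limit exists and is finite because every norm possesses one-sided Gâteaux differentials (Remark~\ref{existence-of-norm}), so $\lim_{h\to0^+}\phi_{u,v}(h)=(u-v,f(u)-f(v))_+/\|u-v\|_X^2$; on the right, $\lim_{h\to0^+}\Psi(h)=M_{Y,X}[f]$, the existence of this limit being part of the definition (and in any case following from convexity of the norm, which makes $h\mapsto\phi_{u,v}(h)$, and hence $h\mapsto\Psi(h)$, nondecreasing on $(0,\infty)$, so that the limit equals $\inf_{h>0}\Psi(h)$). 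Passing to the limit in $\phi_{u,v}(h)\le\Psi(h)$ therefore gives $\lim_{h\to0^+}\phi_{u,v}(h)\le M_{Y,X}[f]$.

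Finally, since this bound holds for every pair $u\neq v$ in $Y$ while its right-hand side does not depend on $(u,v)$, I would take the supremum over all such pairs on the left to conclude $M_{Y,X}^+[f]=\sup_{u\neq v\in Y}\lim_{h\to0^+}\phi_{u,v}(h)\le M_{Y,X}[f]$. There is no genuine obstacle here: the only point needing a moment's care is the legitimacy of passing to the limit in the inequality, which is justified as soon as one knows that both one-sided limits exist — the left one by Gâteaux differentiability of the norm, the right one by the definition of $M_{Y,X}[f]$ (or, equivalently, by the monotonicity in $h$ coming from convexity of the norm).
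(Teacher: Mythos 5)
Your proposal is correct and follows essentially the same route as the paper: fix a pair $u\neq v$, bound its difference quotient by the supremum over all pairs, pass to the limit $h\to0^+$ on both sides (the left limit existing by one-sided Gateaux differentiability of the norm), and then take the supremum over pairs, i.e.\ the standard ``supremum of limits $\leq$ limit of suprema'' argument. Your extra remark on monotonicity of $h\mapsto\phi_{u,v}(h)$ via convexity of the norm is a harmless addition not needed for, nor used in, the paper's proof.
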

 
 \begin{proof}
 For any fixed $u\neq v\in Y$, 
 $$\displaystyle\frac{\|u-v+h(f(u)-f(v))\|_X}{\|u-v\|_X}\leq\displaystyle\sup_{u\neq v\in Y}\frac{\|u-v+h(f(u)-f(v))\|_X}{\|u-v\|_X}.$$
 Now using this inequality, we have:
 $$\displaystyle\lim_{h\to0^+}\frac{1}{h}\left(\frac{\|u-v+h(f(u)-f(v))\|_X}{\|u-v\|_X}-1\right)\leq\displaystyle\lim_{h\to0^+}\sup_{u\neq v\in Y}\frac{1}{h}\left(\frac{\|u-v+h(f(u)-f(v))\|_X}{\|u-v\|_X}-1\right).$$

 Since this inequality holds for any $u\neq v\in Y$, taking $\sup$ we have:
 $$\displaystyle\sup_{u\neq v\in Y}\lim_{h\to0^+}\frac{1}{h}\left(\frac{\|u-v+h(f(u)-f(v))\|_X}{\|u-v\|_X}-1\right)\leq\displaystyle\lim_{h\to0^+}\sup_{u\neq v\in Y}\frac{1}{h}\left(\frac{\|u-v+h(f(u)-f(v))\|_X}{\|u-v\|_X}-1\right),$$
 from which the conclusion follows using $(\ref{defM+})$ and $(\ref{defM})$.
\end{proof}

\subsection{Finite dimensional case} 
The least upper bound (lub) logarithmic Lipschitz constant generalizes the usual logarithmic norm; for every matrix $A$ we have $M_X [A] = \mu_X[A]$. For ease of references, we review next the basic properties of logarithmic norms for finite dimensional operators. 

\begin{definition}
Let $(X,\|\cdot\|_X)$ be a finite dimensional normed vector space over $\r$ or $\cp$. The space $\mathcal{L}(X, X)$ of linear transformations $A\colon X \to X$ is also a normed vector space with the induced operator norm $$\|A\|_{X\to X}=\displaystyle\sup_{\|x\|_X=1}\|Ax\|_X.$$ The logarithmic norm $\mu_X(\cdot)$ induced by $\|\cdot\|_X$ is defined as the directional derivative of the matrix norm, that is,
\[
\mu_X(A)=\displaystyle\lim_{h\to 0^+}\frac{1}{h}\left(\|I+hA\|_{X\to X}-1\right),
\]
where $I$ is the identity operator on $X$. 
\end{definition}

\begin{remark}
Since $\displaystyle\sup_{s\in S}(as+b)=a\displaystyle\sup_{s\in S}(s)+b$, whenever $a>0$ and $S\subseteq\r$, it follows that 
\[
\mu_X(A)=\displaystyle\lim_{h\to 0^+}\sup_{{\|x\|_X=1}}\frac{1}{h}\left(\|x+hAx\|_X-1\right). 
\]
\end{remark}

\begin{theorem}\label{mu} For any matrix $A$,
$\mu(A)=\displaystyle\sup_{\|v\|=1}\lim_{h\to 0^+}\frac{1}{h}(\|v+hAv\|-1)$.
\end{theorem}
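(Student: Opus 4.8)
\emph{Setup.} Write $g(h,v)=\frac1h\left(\|v+hAv\|-1\right)$ for $v$ in the unit sphere $S=\{v:\|v\|=1\}$ and $h>0$. By the remark preceding the statement, the definition of the logarithmic norm reads $\mu(A)=\lim_{h\to0^+}\sup_{v\in S}g(h,v)$, whereas the right-hand side of the theorem is $\sup_{v\in S}\lim_{h\to0^+}g(h,v)$. The whole content of the theorem is therefore the legitimacy of exchanging $\sup_{v\in S}$ and $\lim_{h\to0^+}$, an issue entirely analogous to Lemma \ref{M+<M}, but now both inequalities must be proved.

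\emph{Two structural facts.} First I would record: (i) for each fixed $v\in S$, the map $h\mapsto\|v+hAv\|$ is convex, being the composition of the affine map $h\mapsto(I+hA)v$ with the norm, and it equals $\|v\|=1$ at $h=0$; hence the difference quotient $g(h,v)$ is nondecreasing in $h>0$. It follows that $h\mapsto\sup_{v\in S}g(h,v)$ is nondecreasing as well, so both limits above exist and coincide with the respective infima over $h>0$. (ii) For each fixed $h>0$, the map $v\mapsto g(h,v)$ is continuous on $S$, and $S$ is compact since $X$ is finite dimensional; in particular $\sup_{v\in S}g(h,v)$ is attained.

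\emph{The easy inequality.} For fixed $v\in S$ one has $g(h,v)\le\sup_{w\in S}g(h,w)$; letting $h\to0^+$ and then taking the supremum over $v$ gives $\sup_{v\in S}\lim_{h\to0^+}g(h,v)\le\lim_{h\to0^+}\sup_{w\in S}g(h,w)=\mu(A)$, exactly as in Lemma \ref{M+<M}.

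\emph{The reverse inequality (the crux).} Put $L:=\mu(A)=\inf_{h>0}\sup_{v\in S}g(h,v)$. Choose $h_k\downarrow0$, and by (ii) pick $v_k\in S$ with $g(h_k,v_k)=\sup_{v\in S}g(h_k,v)\ge L$; by compactness pass to a subsequence with $v_k\to v^*\in S$. Fix any $h_0>0$; for $k$ large, $h_k<h_0$, so by the monotonicity in (i), $g(h_0,v_k)\ge g(h_k,v_k)\ge L$, and letting $k\to\infty$ with the continuity in (ii) yields $g(h_0,v^*)\ge L$. As $h_0>0$ was arbitrary, $\lim_{h\to0^+}g(h,v^*)=\inf_{h_0>0}g(h_0,v^*)\ge L$, hence $\sup_{v\in S}\lim_{h\to0^+}g(h,v)\ge\mu(A)$. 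Combined with the easy inequality this gives equality. The only genuine obstacle is this exchange of limit and supremum, and it goes through because of the Dini-type combination of monotonicity in $h$, compactness of $S$, and continuity of $g(h,\cdot)$ — all available in the finite-dimensional setting.
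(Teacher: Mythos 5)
Your proof is correct, and it reaches the conclusion by a somewhat different route than the paper. The paper reduces the statement to an abstract minimax principle (Proposition \ref{Joo}, a J\'oo/von Neumann--type criterion for $\sup\inf=\inf\sup$), and then verifies its hypothesis by showing that the superlevel sets $H_{h,c}=\{v\in S:\varphi(v,h)\ge c\}$ are nonempty, closed, and nested by the monotonicity of $\varphi(v,\cdot)$ (its Lemma \ref{f-dec}, proved by a triangle-inequality computation that is exactly the convexity argument you cite), so that compactness of the sphere gives a common point; a final step (Claim 2) converts the infima over $h$ back into limits using the same monotonicity. You instead run the compactness argument directly: take maximizers $v_k$ at $h_k\downarrow 0$, extract a convergent subsequence $v_k\to v^*$, and combine monotonicity in $h$ with continuity of $g(h_0,\cdot)$ to show $g(h_0,v^*)\ge\mu(A)$ for every $h_0>0$. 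The ingredients are the same — monotonicity of the difference quotient, compactness of $S$, continuity in $v$ — but your version bypasses the minimax machinery entirely and is more elementary and self-contained, while the paper's route isolates a reusable general principle (and supplies a proof of it in the appendix). One cosmetic remark: it is worth stating explicitly that $g(h,v)\ge -\Vert A\Vert$ (reverse triangle inequality), so the monotone limits you equate with infima are finite; the paper covers this point by appealing to the existence of one-sided Gateaux derivatives of the norm (Remark \ref{existence-of-norm}).
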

See the Appendix for the proof. 

 \begin{corollary}\label{mu=M+}
  Let $(X, \|\cdot\|_X)$ be a finite dimensional normed space. For any linear operator $\mathcal{L}\colon X\to X$, 
  \[\mu_X(\mathcal{L})=M_X^+[\mathcal{L}]=M_X[\mathcal{L}].\]
   \end{corollary}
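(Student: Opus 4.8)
The plan is to reduce both claimed equalities to elementary computations with the operator norm, exploiting that $\mathcal{L}$ is linear (hence $\mathcal{L}u-\mathcal{L}v=\mathcal{L}(u-v)$) and that the norm is positively homogeneous, and then to invoke Theorem~\ref{mu}.

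First I would treat $M_X[\mathcal{L}]=\mu_X(\mathcal{L})$. For a linear map $(I+h\mathcal{L})u-(I+h\mathcal{L})v=(I+h\mathcal{L})(u-v)$, so writing $w=u-v$ and normalizing $\|w\|_X=1$ by homogeneity, one gets $L_{X,X}[I+h\mathcal{L}]=\sup_{w\neq 0}\|(I+h\mathcal{L})w\|_X/\|w\|_X=\|I+h\mathcal{L}\|_{X\to X}$ for $h\geq0$. Substituting this identity into the definition $M_X[\mathcal{L}]=\lim_{h\to0^+}\frac1h\bigl(L_{X,X}[I+h\mathcal{L}]-1\bigr)$ reproduces verbatim the definition of $\mu_X(\mathcal{L})$, giving the second equality.

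Next, for $M_X^+[\mathcal{L}]=\mu_X(\mathcal{L})$, I would start from the defining formula~(\ref{defM+}). By linearity, the quotient $\|u-v+h(\mathcal{L}u-\mathcal{L}v)\|_X/\|u-v\|_X$ equals $\|v'+h\mathcal{L}v'\|_X$ with $v'=(u-v)/\|u-v\|_X$; as $u\neq v$ ranges over $X$, the vector $v'$ ranges over the unit sphere. Hence $M_X^+[\mathcal{L}]=\sup_{\|v'\|_X=1}\lim_{h\to0^+}\frac1h\bigl(\|v'+h\mathcal{L}v'\|_X-1\bigr)$, which is exactly the right-hand side in Theorem~\ref{mu}, so $M_X^+[\mathcal{L}]=\mu_X(\mathcal{L})$. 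Combining the two computations yields the full chain $\mu_X(\mathcal{L})=M_X^+[\mathcal{L}]=M_X[\mathcal{L}]$; alternatively, one of the inequalities is already supplied by Lemma~\ref{M+<M}.

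I do not expect a real obstacle here: the one genuinely nontrivial ingredient is Theorem~\ref{mu}, i.e.\ that in the formula for $\mu_X$ the supremum over unit vectors may be taken \emph{outside} rather than inside the limit in $h$ (its proof being deferred to the Appendix). Everything else is bookkeeping with linearity and positive homogeneity, with the only minor care needed being the passage from "$\sup$ over $u\neq v$" to "$\sup$ over unit vectors" via the substitution $v'=(u-v)/\|u-v\|_X$.
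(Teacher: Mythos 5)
Your proposal is correct and follows essentially the same route as the paper, which simply declares the corollary immediate from the definitions of $M_X$, $M_X^+$ and Theorem~\ref{mu}: you spell out the two reductions (linearity plus homogeneity turning differences into unit vectors, identifying $L_{X,X}[I+h\mathcal{L}]$ with $\|I+h\mathcal{L}\|_{X\to X}$) and then invoke Theorem~\ref{mu} for the sup--limit interchange, exactly as intended.
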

   
  \begin{proof}
  The proof is immediate from the definition of $M_X$, $M_X^+$, and Theorem \ref{mu}. 
   \end{proof}

When $X=\r^n$ or $\mathbb{C}^n$, we identify operators and their matrix representations on the standard basis, and we call the logarithmic norm the {\em matrix measure}. In Table $\ref{tab-mu}$, the algebraic expression of the logarithmic norms induced by the $L^p$ norm for $p=1,2,$ and $\infty$ are shown for matrices. For proofs, see for instance \cite{Desoer}. 

\begin{table}[htdp]
\caption{\scshape Standard matrix measures for a real $n\times n$ matrix, $A=[a_{ij}]$.}
\begin{center}
\begin{tabular}{|c|c|}
\hline
vector norm, $\|\cdot\|$ & induced matrix measure, $\mu(A)$\\
\hline
$\|x\|_1=\displaystyle\sum_{i=1}^{n}\abs{x_i}$ & $\mu_1(A)=\displaystyle\max_{j}\left(a_{jj}+\displaystyle\sum_{i\neq j}\abs{a_{ij}}\right)$\\
\hline
$\|x\|_2=\left(\displaystyle\sum_{i=1}^{n}\abs{x_i}^2\right)^{\frac{1}{2}}$ & $\mu_2(A)=\displaystyle\max_{\lambda\in\spec{A}}\left(\lambda\left\{\frac{A+A^T}{2}\right\}\right)$\\
\hline
$\|x\|_{\infty}=\displaystyle\max_{1\leq i\leq n} \abs{x_i}$ & $\mu_{\infty}(A)=\displaystyle\max_{i}\left(a_{ii}+\displaystyle\sum_{i\neq j}\abs{a_{ij}}\right)$\\
\hline
\end{tabular}
\end{center}
\label{tab-mu}
\end{table}%

For ease of reference, we summarize the main notations and definitions in Table \ref{tab-def}. 
\setlength{\tabcolsep}{2pt}
\begin{table}[htdp]
\caption{\scshape Basic Concepts}
\begin{center}
\scalebox{0.7}{
\begin{tabular}{|c|c|c|c|c|}
\hline
Notation & Definition & Equivalent Definition & Equivalent Definition\\
\hline
$\mu_X(A)$ & $\displaystyle\lim_{h\to 0^+}\frac{1}{h}\left(\|I+hA\|_{X\to X}-1\right)$ & $\displaystyle\lim_{h\to 0^+}\sup_{{\|x\|_X=1}}\frac{1}{h}\left(\|x+hAx\|_X-1\right)$ & $\displaystyle\sup_{{\|x\|_X=1}}\lim_{h\to 0^+}\frac{1}{h}\left(\|x+hAx\|_X-1\right)$\\
\hline
$L_{Y,X}[f]$ & $\displaystyle\sup_{u\neq v\in Y}\frac{\|f(u)-f(v)\|_X}{\|u-v\|_X}$&&\\
\hline
$M_{Y,X}[f]$ & $\displaystyle\lim_{h\to0+}\frac{1}{h}\left(L_{Y,X}[I+hf]-1\right)$&$\displaystyle\lim_{h\to0^+}\sup_{u\neq v\in Y}\frac{1}{h}\left(\frac{\|u-v+h(f(u)-f(v))\|_X}{\|u-v\|_X}-1\right)$&\\
\hline
$(x,y)_{+}$ & $\displaystyle\|x\|_X\lim_{h\to 0^{+}}\frac{1}{h}\left(\|x+hy\|_X-\|x\|_X\right)$&&\\
\hline
$M_{Y,X}^{+}[f]$ & $\displaystyle\sup_{u\neq v\in Y}\frac{(u-v,f(u)-f(v))_{+}}{\|u-v\|_X^2}$&&$\displaystyle\sup_{u\neq v\in Y}\lim_{h\to0^+}\frac{1}{h}\left(\frac{\|u-v+h(f(u)-f(v))\|_X}{\|u-v\|_X}-1\right)$\\
\hline
\end{tabular}}
\label{tab-def}
\end{center}
\end{table}

\section{Weighted $L^p$ norms}
Suppose $\Omega$, a bounded domain in $\r^m$ with smooth boundary $\partial\Omega$ and outward normal $\mathbf{n}$, and a subset $V\subseteq \r^n$ have been fixed. We denote 
\[
\mathbf{Y}=\displaystyle \left\{v\colon\bar\Omega\to V\mid\;v=(v_1,\cdots, v_n),\quad v_i\in C^2_{\r}\left(\bar\Omega\right),\quad\frac{\partial v_i}{\partial\mathbf{n}}(\xi)=0,\; \forall\xi\in\partial\Omega\;\;\forall i\right\};
\]
where $C^2_{\r}\left(\bar\Omega\right)$ is the set of twice continuously differentiable functions $\bar\Omega\to \r$.
In addition, we denote $\mathbf{X}=C_{\r^n}\left(\bar\Omega\right)$, where $C_{\r^n}\left(\bar\Omega\right)$ is the set of all continuous functions $\bar\Omega\to \r^n$. 

Note that for each $i$, for $1\leq p<\infty$, $\|v_i\|_p=\left(\displaystyle\int_{\Omega}\abs{v_i(\omega)}^p\;d\omega\right)^{\frac{1}{p}}$ and for $p=\infty$, $\|v_i\|_p=\displaystyle\sup_{\omega\in\bar\Omega}\abs{q_iv_i(\omega)}$ and both are finite because $v_i$ is a continuous function on $\bar\Omega$ and $\bar\Omega$ is a compact subset of $\r^m$.

For any $1\leq p\leq\infty$, and any nonsingular, diagonal matrix $Q=\diag{(q_1,\cdots,q_n)}$, we introduce a $Q$-weighted norm on $\mathbf{X}$ as follows:
 \begin{equation}\label{normU}
\normpQ{v}:=\left\|Q\left(\|v_1\|_p, \cdots, \|v_n\|_p\right)^T\right\|_p.
\end{equation}
Since \[\normpQ{v}=\left\{\begin{array}{ccc}\displaystyle\left(\sum_{i}\abs{q_i}^p\|{v_i}\|_p^p\right)^\frac{1}{p} & & 1\leq p<\infty \\
\displaystyle\sup_{i}\abs{q_i}\|{v_i}\|_p &  & p=\infty\end{array}\right.\] without loss of generality we will assume $q_i>0$ for each $i$. 

With a slight abuse of notation, we use the same symbol for a norm in $\r^n$: \[\|x\|_{p,Q}:=\|Qx\|_p.\]
\begin{lemma}\label{U=S}
For any $v\in \mathbf{X}$, $\normpQ{v}=\nnorm{v}$, where
\begin{equation}\label{norm}
 \nnorm{v}=\left\{\begin{array}{ccc}\left(\displaystyle\int_{\Omega}\|Qv(\omega)\|_p^p\;d\omega\right)^{\frac{1}{p}} &  & 1\leq p<\infty \\ 
 \displaystyle\sup_{\omega}\|{Qv(\omega)}\|_{\infty}&  & p=\infty\end{array}\right.
 \end{equation}
 Note that $\|Qv(\omega)\|_p^p=\displaystyle\sum_{i=1}^n\abs{q_iv_i(\omega)}^p$ and $\|{Qv(\omega)}\|_{\infty}=\displaystyle\sup_{i}\abs{q_iv_i(\omega)}$. 
 \end{lemma}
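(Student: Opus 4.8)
The plan is to show the two expressions for $\normpQ{v}$ and $\nnorm{v}$ agree by reducing everything to a statement about ordinary $L^p$ norms of nonnegative functions, handling the cases $1\le p<\infty$ and $p=\infty$ separately. For the finite case, I would start from the definition $\normpQ{v}=\left(\sum_i q_i^p\|v_i\|_p^p\right)^{1/p}$, substitute $\|v_i\|_p^p=\int_\Omega |v_i(\omega)|^p\,d\omega$, and then interchange the finite sum over $i$ with the integral over $\Omega$ (this is legitimate: it is a finite sum of genuine integrals). This yields $\normpQ{v}^p=\int_\Omega\sum_i q_i^p|v_i(\omega)|^p\,d\omega=\int_\Omega\|Qv(\omega)\|_p^p\,d\omega$, using the pointwise identity $\|Qv(\omega)\|_p^p=\sum_i |q_iv_i(\omega)|^p=\sum_i q_i^p|v_i(\omega)|^p$ (recall $q_i>0$). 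Taking $p$-th roots gives $\normpQ{v}=\nnorm{v}$.

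For $p=\infty$, I would argue that $\sup_i q_i\|v_i\|_\infty = \sup_i q_i \sup_\omega |v_i(\omega)| = \sup_i \sup_\omega q_i|v_i(\omega)| = \sup_\omega \sup_i q_i|v_i(\omega)| = \sup_\omega \|Qv(\omega)\|_\infty$, where the middle step just uses that $q_i$ is a nonnegative constant pulled inside the supremum over $\omega$, and the key step is the interchange of the two suprema, which is always valid for a supremum of a nonnegative (or indeed arbitrary real-valued) function of two variables. The continuity of each $v_i$ on the compact set $\bar\Omega$ guarantees all these suprema are finite and attained, so there is no subtlety about the quantities being well-defined.

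The only point that deserves care — and the closest thing to an obstacle — is justifying the Fubini-type interchange of $\sum_i$ and $\int_\Omega$ in the finite-$p$ case and the interchange of the two suprema in the $p=\infty$ case; both are elementary here because the sum is finite and each $|v_i|^p$ is continuous hence integrable on the compact domain $\bar\Omega$, so no dominated-convergence or measure-theoretic machinery is needed. I would simply remark that these interchanges are valid and then present the short chain of equalities above for each case, concluding $\normpQ{v}=\nnorm{v}$ in both regimes.
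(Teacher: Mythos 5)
Your proof is correct and follows essentially the same computation as the paper's: expand the weighted norm into a finite sum of $L^p$ norms, interchange the finite sum with the integral over $\Omega$, and use that $Q$ is diagonal with $q_i>0$; you merely run the chain of equalities in the opposite direction and spell out the $p=\infty$ case, which the paper dismisses as analogous.
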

 
  \begin{proof}
  Let $Q=\diag(q_1, \cdots, q_n)$, $q_i>0$. For $1\leq p<\infty$ (the proof is analogous when $p=\infty$), by the definitions of $\normpQ{\cdot}$ and $\nnorm{\cdot}$
\[
 \begin{array}{lcl}
  \nnorm{v}&=&\displaystyle\left(\int_{\Omega}\|Qv(\omega)\|_p^p\;d\omega\right)^{\frac{1}{p}}\\
   &=&\displaystyle\left(\int_{\Omega}\|(q_1v_1(\omega), \cdots, q_nv_n(\omega))^T\|_p^p\;d\omega\right)^{\frac{1}{p}}\\
   &=&\displaystyle\left(\int_{\Omega}\abs{q_1v_1(\omega)}^p+\cdots+\abs{q_nv_n(\omega)}^p\;d\omega\right)^{\frac{1}{p}}\\
      &=&\displaystyle\left(\|{q_1v_1}\|_p^p+\cdots+\|{q_nv_n}\|_p^p\right)^{\frac{1}{p}}\\
      &=&\displaystyle\left\|\left(q_1\|v_1\|_p, \cdots, q_n\|v_n\|_p\right)^T\right\|_p\\
      &=&\displaystyle\|Q(\|v_1\|_p, \cdots, \|v_n\|_p)^T\|_p\\
      &=&\displaystyle\normpQ{v}.
 \end{array}
   \]
  \end{proof}
  
  Note that this equality between weighted $p$ norms of functions and of vectors 
depends on our having taken the matrix $Q$ to be diagonal. This is the key place where the assumption that $Q$ is diagonal is being used. 
 
\section{Main Result}
In this section, we study the reaction-diffusion PDE:
\begin{equation} \label{re-di}
\displaystyle\frac{\partial u}{\partial t}(\omega, t)=F(u(\omega, t))+D\Delta u(\omega, t)
\end{equation}
subject to the Neumann boundary condition:
\begin{equation} \label{i-c}
\frac{\partial u}{\partial\mathbf{n}}(\xi,t)=0\quad \forall\xi\in\partial\Omega,\;\;\forall t\in[0,\infty).
\end{equation} 

\begin{assumption}\label{as-pde}
In $(\ref{re-di})-(\ref{i-c})$ we assume:
\begin{itemize}
\item $F\colon V\to\r^n$ is a (globally) Lipschitz and twice continuously differentiable vector field with components $F_i$:
$$F(x)=(F_1(x), \cdots, F_n(x))^T$$ for some functions $F_i\colon V\to\r$, where $V$ is a convex subset of $\r^n$.
\item $D=\diag(d_1, \cdots, d_n)$, with $d_i>0$, is called the diffusion matrix. 
\item $\Omega$ is a bounded domain in $\r^m$ with smooth boundary $\partial\Omega$ and outward normal $\mathbf{n}$. 
\end{itemize}
\end{assumption}

\begin{definition}
By a solution of the PDE
\begin{equation*}
\displaystyle\frac{\partial u}{\partial t}(\omega, t)=F(u(\omega, t))+D\Delta u(\omega, t)
\end{equation*}
\begin{equation*} 
\frac{\partial u}{\partial\mathbf{n}}(\xi,t)=0\quad \forall\xi\in\partial\Omega,\;\;\forall t\in[0,\infty),
\end{equation*}
on an interval $[0, T)$, where $0<T\leq\infty$, we mean a function $u=(u_1, \cdots, u_n)^T$, with $u\colon \displaystyle\bar\Omega\times [0,T)\to V$, such that:
\begin{enumerate}
 \item for each $\omega\in\bar\Omega$, $u(\omega, \cdot)$ is continuously differentiable;
 \item for each $t\in[0,T)$, $u(\cdot, t)$ is in $\mathbf{Y}$; and
 \item for each $\omega\in\bar\Omega$, and each $t\in [0,T)$, $u$ satisfies the above PDE.
\end{enumerate}
\end{definition}

Theorems on existence and uniqueness for PDE's such as $(\ref{re-di})-(\ref{i-c})$ can be found in standard references, e.g. \cite{Smith, Cantrell}. One must impose appropriate conditions on the vector field, on the boundary of $V$, to insure invariance of $V$. Convexity of $V$ insures that the Laplacian also preserves $V$. Since we are interested here in estimates relating pairs of solutions, we will not deal with existence and well-posedness. Our results will refer to solutions already assumed to exist. 

Pick any $0<T\leq\infty$ and suppose that $u$ is a solution of $(\ref{re-di})-(\ref{i-c})$ defined on $\displaystyle\bar\Omega\times [0,T)$. Define $\hat u\colon[0,T)\rightarrow \mathbf{Y}$ by $\hat u(t)(\omega)=u(\omega ,t)$. Also define the function $\tilde{F}\colon \mathbf{Y}\to \mathbf{X}$ as follows: for any $u\in \mathbf{Y}$,
\[\tilde{F}(u)(\omega)\;=\;F(u(\omega))\;\;\mbox{for each}\;\;\omega\in\bar\Omega.\]
Let $A_{p,Q}\colon \mathbf{Y}\to \mathbf{X}$ denote an $n\times n$ diagonal matrix of operators on $\mathbf{Y}$ with the operators $d_i\Delta $ on the diagonal.

\begin{lemma}
Suppose that $u$ solves the PDE $(\ref{re-di})-(\ref{i-c})$, on an interval $[0,T)$, for some $T\in (0,\infty]$, and let
\[
v(\omega ,t) \;:=\; \frac{\partial u}{\partial t}(\omega ,t) \,
\]
for each $t\geq0$ and $\omega\in\bar\Omega$. We introduce $\hat{v}\colon[0,T)\to \mathbf{X}$ by $\hat{v}(t)(\omega)=v(\omega,t).$
Then, $\hat v(t)$ is the derivative of $\hat u(t)$ in the space $(\mathbf{X},\normpQ{\cdot })$,
that is:
\[
\lim_{h\rightarrow 0} \normpQ{\frac{1}{h}\left[\hat u(t+h) - \hat u(t)\right] - \hat v(t)}
\;=\; 0,
\]
for all $t\in [0,T)$.
Moreover,
\begin{equation}\label{pde-ode}
\hat v(t) = \tilde F(\hat u(t)) + A_{p,Q} (\hat u(t)).
\end{equation}
\end{lemma}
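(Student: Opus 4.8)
\emph{Approach.} The plan is to dispose of the algebraic identity $(\ref{pde-ode})$ first, since it is essentially a matter of unwinding definitions and using the PDE, and then to establish the differentiability claim by a direct limiting argument, after rewriting $\normpQ{\cdot}$ through the concrete formula of Lemma~\ref{U=S}.

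For $(\ref{pde-ode})$ I would evaluate both sides at an arbitrary $\omega\in\bar\Omega$. By definition $\hat v(t)(\omega)=v(\omega,t)=\frac{\partial u}{\partial t}(\omega,t)$, and since $u$ solves $(\ref{re-di})$ this equals $F(u(\omega,t))+D\Delta u(\omega,t)$. On the other hand $\tilde F(\hat u(t))(\omega)=F(\hat u(t)(\omega))=F(u(\omega,t))$, while the $i$-th component of $A_{p,Q}(\hat u(t))$ evaluated at $\omega$ is $d_i\Delta u_i(\omega,t)$; hence the two sides agree pointwise in $\omega$, and therefore as elements of $\mathbf{X}$. Along the way one checks $\hat v(t)\in\mathbf{X}$: since $\hat u(t)\in\mathbf{Y}$, each $u_i(\cdot,t)$ is $C^2$, so $\Delta u_i(\cdot,t)$ is continuous on $\bar\Omega$, and $F$ is continuous, so $v(\cdot,t)$ is continuous on $\bar\Omega$.

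For the differentiability claim, Lemma~\ref{U=S} reduces the goal to $\normpQ{\frac1h[\hat u(t+h)-\hat u(t)]-\hat v(t)}\to0$ as $h\to0$, and, since $Q$ is diagonal, $\normpQ{w}^p=\sum_i q_i^p\|w_i\|_{L^p(\Omega)}^p$ for $1\le p<\infty$ (and a $q_i$-weighted maximum of the $\|w_i\|_{L^\infty(\Omega)}$ when $p=\infty$). So it suffices to prove, for each index $i$ and each fixed $t$, that $\|\frac1h(u_i(\cdot,t+h)-u_i(\cdot,t))-v_i(\cdot,t)\|_{L^p(\Omega)}\to0$. Fixing $\omega$, the fact that $u(\omega,\cdot)$ is $C^1$ with $\frac{\partial u}{\partial t}(\omega,\cdot)=v(\omega,\cdot)$ lets the fundamental theorem of calculus give
\[
\tfrac1h\bigl(u_i(\omega,t+h)-u_i(\omega,t)\bigr)-v_i(\omega,t)=\tfrac1h\int_0^h\bigl(v_i(\omega,t+s)-v_i(\omega,t)\bigr)\,ds,
\]
so this quantity is at most $\sup_{|s|\le|h|}|v_i(\omega,t+s)-v_i(\omega,t)|$ in absolute value. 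I would then note that $v_i=F_i(u)+d_i\Delta u_i$ is continuous on $\bar\Omega\times[0,T)$ (for continuity in $t$ one may also use $\Delta u_i=\frac1{d_i}(\partial u_i/\partial t-F_i(u))$ from the PDE), hence uniformly continuous on the compact slab $\bar\Omega\times\bigl([t-\delta,t+\delta]\cap[0,\infty)\bigr)$ for $\delta$ small; therefore $\sup_{\omega\in\bar\Omega}\sup_{|s|\le|h|}|v_i(\omega,t+s)-v_i(\omega,t)|\to0$ as $h\to0$. Thus the difference quotient converges to $v_i(\cdot,t)$ uniformly on $\bar\Omega$, and since $\Omega$ is bounded this gives the required $L^p(\Omega)$ convergence (for $p=\infty$ it is immediate); summing over $i$ with the weights $q_i^p$ finishes the argument. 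An equivalent route is to apply Minkowski's integral inequality directly to the displayed identity in the norm $\normpQ{\cdot}$, which reduces the claim to continuity of $\tau\mapsto\hat v(\tau)$ from $[0,T)$ into $(\mathbf{X},\normpQ{\cdot})$.

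The one genuinely delicate point — the main obstacle — is the joint continuity of $v=\partial u/\partial t$ on $\bar\Omega\times[0,T)$ used above: the definition of a solution records only separate continuity ($u(\omega,\cdot)\in C^1$ for each $\omega$ and $u(\cdot,t)\in\mathbf{Y}$ for each $t$), and separate continuity does not in general imply joint continuity. For classical solutions of the parabolic system this regularity is standard, and I would simply make explicit that it is part of what "solution" is meant to entail here; it is also exactly the regularity required by the Minkowski-inequality variant, so it cannot be avoided. Everything else in the proof is routine.
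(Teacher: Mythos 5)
Your proposal is correct in substance, and your treatment of (\ref{pde-ode}) is the same pointwise-evaluation argument the paper uses; the differentiability half, however, takes a genuinely different route. The paper works directly from the pointwise limit of the difference quotients: for each $\omega$, differentiability of $u(\omega,\cdot)$ in $t$ gives an $h_\omega$ making the quotient $\epsilon$-close to $v_i(\omega,t)$, a ball $B_\omega$ is chosen on which this estimate persists for all $0<h<h_\omega$, compactness of $\bar\Omega$ extracts finitely many balls and a single $h_0$ yielding a bound uniform in $\omega$, and this uniform bound is raised to the $p$-th power and integrated over the bounded domain --- exactly the final step you also perform. You instead represent the error via the fundamental theorem of calculus as an average of $v_i(\omega,t+s)-v_i(\omega,t)$ over $s\in[0,h]$ and invoke uniform continuity of $v$ on a compact slab $\bar\Omega\times\bigl([t-\delta,t+\delta]\cap[0,T)\bigr)$. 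This is cleaner and more quantitative, but, as you correctly flag, it requires joint continuity of $v=\partial u/\partial t$ in $(\omega,t)$, which the paper's definition of solution does not literally provide (it records only $C^1$ in $t$ for each fixed $\omega$ and membership in $\mathbf{Y}$ for each fixed $t$); your remedy of making this regularity explicitly part of the solution concept is reasonable and standard for classical solutions of such parabolic systems. The paper's covering argument is designed to use only the separate regularity in its definition, though its key step --- a single ball $B_\omega$ valid simultaneously for all $0<h<h_\omega$ --- itself quietly requires a uniformity in $h$ of much the same nature as the joint continuity you assume, so the two arguments are closer than they appear; the merit of your version is that it names the needed hypothesis explicitly rather than leaving it implicit.
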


\begin{proof}
Fix $t\in[0, T)$ and $i\in\{1,\cdots,n\}$. Using the definition of $v$, we have:  
\[\lim_{h\rightarrow 0}\left|\frac{1}{h}\left[u_i(\omega, t+h) -u_i(\omega, t)\right] -v_i(\omega, t)\right|=0,\] for any $\omega\in\bar\Omega$. 
Hence for any $\epsilon>0$, there exists $h_{\omega}>0$ such that for any $0<h< h_{\omega}$, 
\[\left|\frac{1}{h}\left[u_i(\omega, t+h) -u_i(\omega, t)\right] -v_i(\omega, t)\right|<\frac{\epsilon}{2}.\]
Now since $u_i$ is a continuous function of $\omega$, there exists a ball $B_{\omega}$ centered at $\omega$ such that for any $0<h< h_{\omega}$, 
\[\left|\frac{1}{h}\left[u_i(\tilde\omega, t+h) -u_i(\tilde\omega, t)\right] -v_i(\tilde\omega, t)\right|<\epsilon\]
for all $\tilde\omega\in B_{\omega}$.
Since $\{B_{\omega}:\omega\in\bar\Omega\}$ is an open cover of $\bar\Omega$ and $\bar\Omega$ is a compact subset of $\r^m$, finitely many of these balls, namely $B_{\omega_1}, \cdots, B_{\omega_k}$, cover $\bar\Omega$. 
Now let $h_0=\min{(h_{\omega_1}, \cdots, h_{\omega_k})}$. Then, for any $0<h< h_0$ and any $\omega\in\bar\Omega$, we have
\[\left|\frac{1}{h}\left[u_i(\omega, t+h) -u_i(\omega, t)\right] -v_i(\omega, t)\right|<\epsilon.\]
Raising to the $p$-th power and taking the integral over $\Omega$ of the above inequality, we get 
  \[\displaystyle\int_{\Omega}\left|\frac{1}{h}\left[u_i(\omega, t+h) -u_i(\omega, t)\right] -v_i(\omega, t)\right|^p\;d\omega<\abs{\Omega}\epsilon^p,\]
which by the definition of $\normpQ{\cdot }$, it implies that for any $0<h<h_0$,
\[\normpQ{\frac{1}{h}\left[u(\cdot, t+h) -u(\cdot, t)\right] -v(\cdot, t)}<c\epsilon,\]
where $c=\left(\abs{\Omega}\sum_{i=1}^nq_i^p\right)^{\frac{1}{p}}$. 
Since $\epsilon>0$ was arbitrary, we have proved that 
\[
\lim_{h\rightarrow 0}\normpQ{ \frac{1}{h}\left[\hat u(t+h) - \hat u(t)\right] - \hat v(t)}
=0.\]
For a fixed $t\in[0,T)$ and any $\omega\in\bar\Omega$:
\[\begin{array}{rcl}
\hat{v}(t)(\omega)&=&v(t,\omega)\;=\;\displaystyle\frac{\partial u}{\partial t}(\omega, t)\;=\;F(u(\omega, t))+D\Delta u(\omega, t)\\
&=&\tilde F(\hat u(t))(\omega) + A_{p,Q} (\hat u(t))(\omega),
\end{array}\]
and therefore Equation (\ref{pde-ode}) holds. 
\end{proof}

\newtheorem*{proofoflemma1}{Proof of Lemma \ref{main-result1}}
\newtheorem*{proofoftheorem0}{Proof of Theorem \ref{main-result0}}
In this section we show that $(\ref{re-di})-(\ref{i-c})$ is contracting (meaning that solutions converge exponentially to each other, as $t\to+\infty$) if $\MpQ[F]<0$, where, as defined before, 
\[\MpQ[F]=\displaystyle\lim_{h\to0^+}\sup_{x\neq y\in V}\frac{1}{h}\left(\frac{\normpQ{x-y+h(F(x)-F(y))}}{\normpQ{x-y}}-1\right).\]

Now we state the main result of this section.

\begin{theorem}\label{main-result0}
Consider the PDE $(\ref{re-di})-(\ref{i-c})$ and suppose Assumption \ref{as-pde} holds. Let $c=\MpQ[F]$ for some $1\leq p\leq\infty$, and some positive diagonal matrix $Q$. Then for every two solutions $u, v$ of the PDE $(\ref{re-di}) - (\ref{i-c})$ and all $t\in[0,T)$:
\[\normpQ{\hat{u}(t)-\hat{v}(t)}\leq e^{ct}\normpQ{\hat{u}(0)-\hat{v}(0)}.\]
\end{theorem}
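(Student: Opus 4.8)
The plan is to reduce the PDE estimate to a differential inequality for the scalar quantity $t\mapsto\normpQ{\hat u(t)-\hat v(t)}$, using the logarithmic-Lipschitz-constant machinery of Section 2 applied to the infinite-dimensional vector field $\tilde F + A_{p,Q}$ on the space $(\mathbf X,\normpQ{\cdot})$. Concretely, set $w(t):=\hat u(t)-\hat v(t)$ and $\phi(t):=\normpQ{w(t)}$. By the preceding Lemma, both $\hat u$ and $\hat v$ are differentiable in $(\mathbf X,\normpQ{\cdot})$ with derivatives given by (\ref{pde-ode}), so $w$ is differentiable with $\dot w(t)=\bigl[\tilde F(\hat u(t))+A_{p,Q}(\hat u(t))\bigr]-\bigl[\tilde F(\hat v(t))+A_{p,Q}(\hat v(t))\bigr]$. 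The standard upper Dini derivative estimate for a norm gives
\[
D^+\phi(t)\;\le\;\normpQ{w(t)}\cdot\lim_{h\to0^+}\frac1h\left(\frac{\normpQ{w(t)+h\dot w(t)}}{\normpQ{w(t)}}-1\right)\;\le\;M^+_{\mathbf Y,\mathbf X}\bigl[\tilde F+A_{p,Q}\bigr]\cdot\phi(t),
\]
valid wherever $\phi(t)\neq 0$ (and $\phi\equiv 0$ on any interval where it vanishes, so the final bound holds trivially there). Then Gr\"onwall's inequality yields $\phi(t)\le e^{ct}\phi(0)$ provided we can show $M^+_{\mathbf Y,\mathbf X}[\tilde F+A_{p,Q}]\le c=\MpQ[F]$.

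The heart of the argument is therefore the operator-norm estimate, and I would split it using subadditivity (Proposition \ref{subadd}, or rather its $M^+$ analogue, combined with Lemma \ref{M+<M}): it suffices to show (i) $M^+_{\mathbf Y,\mathbf X}[\tilde F]\le \MpQ[F]$ and (ii) $M^+_{\mathbf Y,\mathbf X}[A_{p,Q}]\le 0$. For (i), fix $u,v\in\mathbf Y$ and let $z(\omega):=u(\omega)-v(\omega)$. Using Lemma \ref{U=S} to write the weighted norm as an integral of $\|Qz(\omega)\|_p^p$ pointwise in $\omega$, one differentiates $h\mapsto\normpQ{z+h(\tilde F(u)-\tilde F(v))}$ at $h=0^+$; because the weight $Q$ is diagonal the pointwise integrand at each $\omega$ is controlled by the finite-dimensional quantity $(z(\omega),F(u(\omega))-F(v(\omega)))_{+,\,p,Q}\le \MpQ[F]\,\|z(\omega)\|_{p,Q}^2$, and integrating (with an application of the appropriate H\"older/convexity step to pass the derivative inside the integral) gives the bound. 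For (ii), the key point is that the Laplacian with Neumann boundary conditions is dissipative in every weighted $L^p$ norm: for $u,v\in\mathbf Y$ with $z=u-v$, one computes $\lim_{h\to0^+}\frac1h(\normpQ{z+hA_{p,Q}z}-\normpQ{z})\le 0$ by an integration-by-parts argument — for $1<p<\infty$ this is $\int_\Omega \sum_i q_i^p |z_i|^{p-2} z_i\, d_i\Delta z_i\,d\omega = -\int_\Omega\sum_i q_i^p d_i (p-1)|z_i|^{p-2}|\nabla z_i|^2\,d\omega\le 0$ after using $\partial z_i/\partial\mathbf n=0$; the cases $p=1$ and $p=\infty$ need separate (standard) treatment, e.g. via approximation or via the maximum principle for $p=\infty$.

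I expect step (ii), the dissipativity of the diffusion operator in the weighted $L^p$ norm, to be the main obstacle — not because it is deep, but because making the Dini-derivative/integration-by-parts computation rigorous requires care: justifying differentiation under the integral sign at $h=0^+$, handling the non-smoothness of $s\mapsto|s|^p$ near $0$ when $1<p<2$, and dealing with the endpoint norms $p=1$ (where $|z_i|$ is not differentiable at its zero set) and $p=\infty$ (where the norm is a sup and one must argue at points realizing the maximum, invoking Hopf's lemma or a parabolic maximum principle). A secondary technical point is confirming that $\hat u(t)$ stays in $\mathbf Y$ with enough regularity for these manipulations, but that is granted by the standing definition of solution. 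Once (i) and (ii) are in hand, assembling $M^+_{\mathbf Y,\mathbf X}[\tilde F+A_{p,Q}]\le \MpQ[F]+0=c$ and feeding it into the Gr\"onwall step above completes the proof.
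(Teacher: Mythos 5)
Your proposal follows essentially the same route as the paper's proof: reduce the PDE to an abstract ODE in $(\mathbf X,\normpQ{\cdot})$ via the differentiability lemma and (\ref{pde-ode}), bound $D^+\normpQ{\hat u(t)-\hat v(t)}$ by $M^+_{\mathbf Y,\mathbf X}[\tilde F+A_{p,Q}]$ and apply Gr\"onwall (Lemma \ref{Dini}, Corollaries \ref{key-1} and \ref{key1}), and split by subadditivity into $M^+_{\mathbf Y,\mathbf X}[\tilde F]\le\MpQ[F]$ (Lemma \ref{MM}, the pointwise-then-integrate argument using Lemma \ref{U=S} and diagonal $Q$) and $M^+_{\mathbf Y,\mathbf X}[A_{p,Q}]\le 0$ (Lemma \ref{key4}, Green's identity for $1<p<\infty$ with the endpoint cases $p=1,\infty$ handled by limiting arguments, just as you anticipate). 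The argument is correct and matches the paper's structure, so no further comparison is needed.
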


\begin{remark}
In terms of the PDE $(\ref{re-di})-(\ref{i-c})$, this last estimate can be equivalently written as:
\[\normpQ{u(\cdot, t)-v( \cdot, t)}\leq e^{ct}\normpQ{u(\cdot, 0)-v(\cdot, 0)}.\]
\end{remark}
Before proving the theorem, we prove a few technical lemmas.

\begin{definition}
The upper left and right Dini derivatives for any continuous function, $\Psi\colon[0, \infty)\to\r$, are defined by
\[
\begin{array}{rcl}
\left(D^{\pm}\Psi\right)(t)=\displaystyle\limsup_{h\to 0^{\pm}}\frac{1}{h}\left(\Psi(t+h)-\Psi(t)\right)
 \end{array}.
\]
Note that $D^+\Psi$ and/or $D^-\Psi$ might be infinite.
\end{definition}

 \begin{lemma}\label{Dini}
Let $\left(X,\|\cdot\|_X\right)=\left(C_{\r^n}\left(\bar\Omega\right), \normpQ{\cdot}\right)$. Let $G\colon Y \to X$ be a (globally) Lipschitz function, where $Y\subseteq X$. Let $u, v\colon [0, \infty)\to Y$ be two solutions of $\displaystyle\frac{du(t)}{dt}=G(u(t))$. Then for all $t\in [0, \infty)$, 
\begin{equation}\label{Dini-s-i}
D^+\|(u-v)(t)\|_X= \displaystyle\frac{((u-v)(t), G(u(t))-G(v(t)))_+}{\|(u-v)(t)\|_X^2}\|(u-v)(t)\|_X.
\end{equation}
When $u(t)=v(t)$, we understand the right hand side through the limit in $(\ref{RHS})$. 
  \end{lemma}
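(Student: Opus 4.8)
The plan is to compute the one-sided Dini derivative of the scalar function $\Psi(t) := \|(u-v)(t)\|_X$ directly from the definition, using the fact that $u$ and $v$ are differentiable solutions of $\dot u = G(u)$, $\dot v = G(v)$. Write $w(t) = (u-v)(t)$, so that $\dot w(t) = G(u(t)) - G(v(t))$ exists for each $t$ as a limit in $(X,\|\cdot\|_X)$. The key algebraic identity is the first-order expansion
\[
w(t+h) = w(t) + h\,\dot w(t) + o(h),
\]
valid in norm. The first step is therefore to show that the $o(h)$ remainder does not affect the Dini derivative: by the reverse triangle inequality,
\[
\big|\,\|w(t+h)\|_X - \|w(t) + h\,\dot w(t)\|_X\,\big| \;\leq\; \|w(t+h) - w(t) - h\,\dot w(t)\|_X \;=\; o(h),
\]
so that
\[
D^+\Psi(t) = \limsup_{h\to 0^+}\frac{1}{h}\Big(\|w(t) + h\,\dot w(t)\|_X - \|w(t)\|_X\Big).
\]

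The second step is to recognize the right-hand side as a semi-inner product. By the definition of $(\cdot,\cdot)_+$ in $(\ref{semi-inner})$, the limit $\lim_{h\to 0^+}\frac{1}{h}(\|w(t) + h\,\dot w(t)\|_X - \|w(t)\|_X)$ exists (it is $(w(t),\dot w(t))_+/\|w(t)\|_X$ when $w(t)\neq 0$), so the $\limsup$ is an honest limit and equals
\[
\frac{(w(t),\, \dot w(t))_+}{\|w(t)\|_X} \;=\; \frac{\big((u-v)(t),\, G(u(t))-G(v(t))\big)_+}{\|(u-v)(t)\|_X^2}\,\|(u-v)(t)\|_X,
\]
which is exactly $(\ref{Dini-s-i})$. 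For the case $w(t) = 0$, one handles it through the indicated limit $(\ref{RHS})$ (presumably $(w,z)_+/\|w\|$ replaced by $\|z\|$ as $\|w\|\to 0$, via Cauchy–Schwarz), so the formula extends continuously.

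The only genuine subtlety — and the step I would be most careful about — is justifying that $\hat u$ and $\hat v$ are differentiable \emph{in the norm $\|\cdot\|_X = \normpQ{\cdot}$}, not merely pointwise in $\omega$; but here we are given that $u,v\colon[0,\infty)\to Y$ are solutions of the abstract ODE $\dot u = G(u)$ in $X$, so norm-differentiability is part of the hypothesis, and the preceding lemma already established this passage for the PDE setting. Apart from that, the argument is a one-line consequence of the definition of the semi-inner product once the $o(h)$ term is discarded; there is no real obstacle. A minor bookkeeping point is that $D^+$ is a priori only a $\limsup$, so one must note that the existence of the genuine limit (guaranteed by Remark \ref{existence-of-norm}, since every norm has one-sided Gateaux derivatives) upgrades it to an equality rather than an inequality.
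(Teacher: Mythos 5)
Your proposal is correct and follows essentially the same route as the paper: expand $(u-v)(t+h)=(u-v)(t)+h(\dot u-\dot v)(t)+o(h)$, discard the $o(h)$ term against the norm (the reverse triangle inequality you cite is exactly why the paper's third equality holds), upgrade the $\limsup$ to a limit via the existence of one-sided Gateaux derivatives of the norm (Remark \ref{existence-of-norm}), and identify the result with the right semi-inner product expression $(\ref{RHS})$, substituting $\dot u-\dot v = G(u)-G(v)$. Your handling of the degenerate case $u(t)=v(t)$ via $(\ref{RHS})$ (where the limit is simply $\|G(u(t))-G(v(t))\|_X$) also matches the paper's convention.
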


  \begin{proof} 
By the definition of right semi-inner product, the right hand side of $(\ref{Dini-s-i})$ is:
\begin{equation}\label{RHS}
\lim_{h\to 0^+}\frac{1}{h}\left(\|(u-v)(t)+h(G(u(t))-G(v(t)))\|_X-\|(u-v)(t)\|_X\right),
\end{equation}
hence we just need to show that 
$$D^{+}\|(u-v)(t)\|_X=\lim_{h\to 0^+}\frac{1}{h}\left(\|(u-v)(t)+h(G(u(t))-G(v(t)))\|_X-\|(u-v)(t)\|_X\right).$$
Now using the definition of Dini derivative, we have:
\[
\begin{array}{rcl}
D^{+}\|(u-v)(t)\|_X&=&\displaystyle\limsup_{h\to 0^+}\frac{1}{h}\;\left(\|(u-v)(t+h)\|_X-\|(u-v)(t)\|_X\right)\\
&=&\displaystyle\limsup_{h\to 0^+}\frac{1}{h}\left(\|(u-v)(t)+h(\dot{u}-\dot{v})(t)+o(h)\|_X-\|(u-v)(t)\|_X\right)\\
&=&\displaystyle\limsup_{h\to 0^+}\frac{1}{h}\left(\|(u-v)(t)+h(\dot{u}-\dot{v})(t)\|_X-\|(u-v)(t)\|_X\right)\\
&=&\displaystyle\lim_{h\to 0^+}\frac{1}{h}\left(\|(u-v)(t)+h(\dot{u}-\dot{v})(t)\|_X-\|(u-v)(t)\|_X\right)\\
&=&\displaystyle\lim_{h\to 0^+}\frac{1}{h}\left(\|(u-v)(t)+h(G(u(t))-G(v(t)))\|_X-\|(u-v)(t)\|_X\right),
\end{array}
\]
where $\dot{u}=\displaystyle\frac{du}{dt}$.
Note that the fourth equality holds because of Remark $\ref{existence-of-norm}$. 
 \end{proof} 
 
  \begin{corollary} \label{key-1}
   Under the assumptions of  Lemma $\ref{Dini}$, for any $t\in [0, \infty)$ we have:
 \begin{equation}\label{Dini-u-b}
D^+\|u(t)-v(t)\|_X\leq M^+_{Y, X}[G]\|u(t)-v(t)\|_X.
\end{equation}
  \end{corollary}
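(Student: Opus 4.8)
The plan is to combine Lemma~\ref{Dini} with Proposition~\ref{subadd} (the subadditivity of $M^+$) to decompose the right-hand side of the PDE. Concretely, I would apply Lemma~\ref{Dini} to $G = \tilde F + A_{p,Q}$ acting on $Y = \mathbf{Y}$ inside $X = (\mathbf{X}, \normpQ{\cdot})$: by the preceding Lemma, $\hat v(t) = \tilde F(\hat u(t)) + A_{p,Q}(\hat u(t))$ really is the $X$-derivative of $\hat u(t)$, so $\hat u$ is a genuine solution of $\dot w = G(w)$ in the Banach-space sense, and similarly for $\hat v$. Then Lemma~\ref{Dini} gives
\[
D^+\normpQ{\hat u(t)-\hat v(t)} \;=\; \frac{\big(\hat u(t)-\hat v(t),\, G(\hat u(t))-G(\hat v(t))\big)_+}{\normpQ{\hat u(t)-\hat v(t)}^2}\,\normpQ{\hat u(t)-\hat v(t)}\;\le\; M^+_{\mathbf{Y},\mathbf{X}}[G]\,\normpQ{\hat u(t)-\hat v(t)},
\]
where the inequality is just the definition of $M^+_{\mathbf{Y},\mathbf{X}}$ as a supremum of the Rayleigh-type quotient over all pairs, specialized to the pair $\hat u(t),\hat v(t)$. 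That last step is exactly the content of Corollary~\ref{key-1}, so in fact Corollary~\ref{key-1} is an immediate consequence of Lemma~\ref{Dini} plus the definition of $M^+$; the present statement is the corollary itself, and its proof is genuinely short.

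So the core of the proof is just: write the right-hand side of \eqref{Dini-s-i} as $(u-v, G(u)-G(v))_+/\|u-v\|_X$, observe this equals $\|u-v\|_X$ times the quotient $(u-v,G(u)-G(v))_+/\|u-v\|_X^2$, and bound that quotient by the supremum $M^+_{Y,X}[G]$ from the definition. When $u(t)=v(t)$, both sides of \eqref{Dini-u-b} are zero (the left side because $\|u-v\|_X\equiv 0$ near such $t$ would not be automatic, but via the limit interpretation in \eqref{RHS} the right side of \eqref{Dini-s-i} vanishes), so the inequality is trivial there; I would handle that degenerate case with one sentence referring to the limit convention already fixed in Lemma~\ref{Dini}.

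There is essentially no obstacle here — the only thing to be slightly careful about is that the supremum in the definition of $M^+_{Y,X}[G]$ runs over $u\ne v$ in $Y$, so one needs $\hat u(t),\hat v(t)\in \mathbf{Y}$ (true by the solution definition) and $\hat u(t)\ne \hat v(t)$ to divide; the equal case is absorbed by the limit convention. One could also note for emphasis that this corollary is the bridge that, together with Gr\"onwall's inequality (Dini-derivative version), will yield Theorem~\ref{main-result0} once one establishes $M^+_{\mathbf{Y},\mathbf{X}}[\tilde F + A_{p,Q}] \le \MpQ[F]$, i.e. that adding the diffusion operator $A_{p,Q}$ does not increase the logarithmic Lipschitz constant — but that is the substance of later lemmas, not of this one.
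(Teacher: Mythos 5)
Your proof is correct and is essentially the paper's own argument: the inequality follows from Lemma~\ref{Dini} by bounding the quotient $\bigl((u-v)(t),\,G(u(t))-G(v(t))\bigr)_+/\|(u-v)(t)\|_X^2$ by its supremum, which is the definition of $M^+_{Y,X}[G]$, with the case $u(t)=v(t)$ absorbed by the limit convention of \eqref{RHS}. The opening specialization to $G=\tilde F+A_{p,Q}$ is unnecessary (the corollary is stated for a general Lipschitz $G$) but harmless.
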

   \begin{proof}
By the definition of the strong least upper bound logarithmic Lipschitz constant, $$\frac{((u-v)(t), G(u(t))-G(v(t)))_+}{\|(u-v)(t)\|_X^2}\leq M^+_{Y, X}[G].$$ Now apply Lemma $\ref{Dini}$ to the above inequality.
 \end{proof}

 \begin{corollary} \label{key1}
 Under the assumptions of Lemma $\ref{Dini}$, for any $t\in [0, \infty)$ we have:
  $$\|u(t)-v(t)\|_X\leq e^{M^+_{Y, X}[G]t}\|u(0)-v(0)\|_X.$$
 \end{corollary}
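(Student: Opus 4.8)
\textbf{Proof proposal for Corollary \ref{key1}.}

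The plan is to integrate the differential inequality supplied by Corollary \ref{key-1}. Write $\psi(t):=\|u(t)-v(t)\|_X$ and $c:=M^+_{Y,X}[G]$. Since $u,v\colon[0,\infty)\to Y$ are solutions of $\dot u=G(u)$, they are continuously differentiable, hence continuous, as maps into $X$; composing the continuous curve $t\mapsto(u-v)(t)$ with the norm shows that $\psi$ is continuous on $[0,\infty)$. Corollary \ref{key-1} gives the pointwise bound $D^+\psi(t)\le c\,\psi(t)$ for every $t\ge0$. It therefore suffices to prove the scalar statement: if $\psi\ge0$ is continuous on $[0,\infty)$ and $D^+\psi(t)\le c\,\psi(t)$ for all $t$, then $\psi(t)\le e^{ct}\psi(0)$.

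To do this I would pass to $\phi(t):=e^{-ct}\psi(t)$. Using the product rule for the upper right Dini derivative when one factor is $C^1$ and positive — namely $D^+(fg)(t)=f(t)\,D^+g(t)+f'(t)\,g(t)$ whenever $f(t)>0$, which follows by expanding $f(t+h)g(t+h)-f(t)g(t)=f(t+h)\bigl(g(t+h)-g(t)\bigr)+\bigl(f(t+h)-f(t)\bigr)g(t)$ and letting $h\to0^+$ — applied with $f(t)=e^{-ct}$ and $g=\psi$, we obtain
\[
D^+\phi(t)=e^{-ct}D^+\psi(t)-c\,e^{-ct}\psi(t)\le e^{-ct}\bigl(c\,\psi(t)\bigr)-c\,e^{-ct}\psi(t)=0
\]
for all $t\ge0$.

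Next I would invoke the standard monotonicity lemma: a continuous function $\phi$ on $[0,\infty)$ with $D^+\phi(t)\le0$ everywhere is non-increasing. If one wants this self-contained, the argument is the usual $\epsilon$-perturbation: for fixed $\epsilon>0$ the function $\phi_\epsilon(t):=\phi(t)-\epsilon t$ satisfies $D^+\phi_\epsilon(t)\le-\epsilon<0$; were $\phi_\epsilon$ not non-increasing there would exist $a<b$ with $\phi_\epsilon(a)<\phi_\epsilon(b)$, and then at $t^*:=\sup\{t\in[a,b]:\phi_\epsilon(t)\le\phi_\epsilon(a)\}$ one has, by continuity, $t^*<b$ and $\phi_\epsilon(t^*)=\phi_\epsilon(a)$ while $\phi_\epsilon(s)>\phi_\epsilon(a)$ for $s\in(t^*,b]$, forcing $D^+\phi_\epsilon(t^*)\ge0$, a contradiction. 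Hence each $\phi_\epsilon$ is non-increasing, and letting $\epsilon\to0^+$ shows $\phi$ is non-increasing. Applying this to our $\phi$ yields $\phi(t)\le\phi(0)$, i.e. $e^{-ct}\psi(t)\le\psi(0)$, which is precisely $\|u(t)-v(t)\|_X\le e^{ct}\|u(0)-v(0)\|_X$.

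The only non-routine ingredient is the passage from the pointwise Dini inequality to the global exponential bound; everything else is bookkeeping with the definitions. I expect the main obstacle — assuming one wants a fully self-contained proof rather than citing a comparison theorem — to be the careful justification of the monotonicity lemma, where continuity of $\psi$ and the $\epsilon$-perturbation trick must be handled precisely.
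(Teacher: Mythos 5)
Your argument is correct and follows the same route as the paper: both start from the Dini-derivative inequality $(\ref{Dini-u-b})$ of Corollary \ref{key-1} and integrate it to get the exponential estimate. The only difference is that the paper disposes of this step in one line by citing Gronwall's inequality, whereas you prove the needed comparison lemma from scratch: you pass to $\phi(t)=e^{-ct}\psi(t)$, verify the product rule for the upper right Dini derivative when one factor is $C^1$ and positive, and establish the monotonicity lemma (continuous $\phi$ with $D^+\phi\leq 0$ is non-increasing) via the $\epsilon$-perturbation argument. This is a worthwhile elaboration rather than a detour: the textbook Gronwall inequality is usually stated for differentiable or absolutely continuous functions, while here $\psi(t)=\|u(t)-v(t)\|_X$ is only known (from Lemma \ref{Dini}) to possess a right derivative, so your self-contained Dini-derivative version makes explicit exactly what the paper's citation is implicitly relying on. Your steps all check out, including the subtle points ($t^*<b$, $\phi_\epsilon(t^*)=\phi_\epsilon(a)$, and finiteness of $D^+\psi$ so the product-rule limsup manipulation is legitimate).
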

 
 \begin{proof}
 Apply Gronwall's inequality, \cite{Evans}, to $(\ref{Dini-u-b})$.
 \end{proof}
 
  \begin{remark}\label{M+Dini-linear}
Note that for any bounded linear operator $\mathcal{L}\colon X\to X$, and any solution $u\colon [0,T)\to X$ of $\displaystyle\frac{du}{dt}=\mathcal{L}u$, the above result says that 
 \begin{equation*}
 D^+\|u(t)\|_X=\displaystyle \frac{(u(t), \mathcal{L}u(t))_+}{\|u(t)\|_X^2}\|u(t)\|_X\leq M^+_X[\mathcal{L}]\|u(t)\|_X,
   \end{equation*}
  for all $t\in [0, T)$.
   \end{remark} 
  
\begin{lemma}\label{key4}
 Let $A_{p, Q}$, as defined above, denote an $n\times n$ diagonal matrix of operators on $\mathbf{Y}$ with the operators $d_i\Delta $ on the diagonal. Then $M^+_{\mathbf{Y}, \mathbf{X}}[A_{p, Q}]\leq 0$. 
\end{lemma}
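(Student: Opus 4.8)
The plan is to reduce the statement $M^+_{\mathbf{Y},\mathbf{X}}[A_{p,Q}]\le 0$ to a pointwise (in $\omega$) computation involving only the scalar Laplacian, by exploiting the diagonal structure of $A_{p,Q}$ together with the identity $\normpQ{v}=\nnorm{v}$ from Lemma \ref{U=S}. Concretely, fix $u,v\in\mathbf{Y}$, write $w=u-v$ (which again lies in $\mathbf{Y}$, since $\mathbf{Y}$ is an affine-type space closed under differences of its elements, up to the boundary condition which is linear), and note that $A_{p,Q}(u)-A_{p,Q}(v)=A_{p,Q}w=(d_1\Delta w_1,\dots,d_n\Delta w_n)$. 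We must show
\[
\lim_{h\to 0^+}\frac{1}{h}\left(\frac{\normpQ{w+hA_{p,Q}w}}{\normpQ{w}}-1\right)\;\le\;0
\]
for every such $w\neq 0$; taking the supremum then gives the claim. Equivalently, writing $\Phi(h)=\normpQ{w+hA_{p,Q}w}$, it suffices to show the right derivative $\Phi'(0^+)\le 0$, i.e. that $h\mapsto\normpQ{w+hA_{p,Q}w}$ is nonincreasing at $h=0$.

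The key step is a differentiation-under-the-integral computation combined with integration by parts using the Neumann boundary condition. For $1\le p<\infty$, using $\normpQ{\cdot}=\nnorm{\cdot}$, I would consider $\Phi(h)^p=\int_\Omega \sum_i q_i^p\,|w_i(\omega)+h d_i\Delta w_i(\omega)|^p\,d\omega$ and compute $\frac{d}{dh}\big|_{h=0^+}$. The derivative of $|w_i+hd_i\Delta w_i|^p$ at $h=0$ equals $p\,|w_i|^{p-1}\mathrm{sign}(w_i)\,d_i\Delta w_i$ (with the usual care at zeros of $w_i$, handled by one-sided Dini estimates as in the lemmas above). Thus
\[
\frac{d}{dh}\Big|_{0^+}\Phi(h)^p \;=\; p\sum_i q_i^p d_i \int_\Omega |w_i|^{p-1}\mathrm{sign}(w_i)\,\Delta w_i\,d\omega .
\]
Now integrate by parts: $\int_\Omega |w_i|^{p-1}\mathrm{sign}(w_i)\,\Delta w_i\,d\omega = \int_{\partial\Omega}|w_i|^{p-1}\mathrm{sign}(w_i)\frac{\partial w_i}{\partial\mathbf n}\,dS - \int_\Omega \nabla\!\big(|w_i|^{p-1}\mathrm{sign}(w_i)\big)\cdot\nabla w_i\,d\omega$. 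The boundary term vanishes by the Neumann condition $\frac{\partial w_i}{\partial\mathbf n}=0$ on $\partial\Omega$, and the remaining volume term is $-(p-1)\int_\Omega |w_i|^{p-2}|\nabla w_i|^2\,d\omega\le 0$. Hence $\frac{d}{dh}|_{0^+}\Phi(h)^p\le 0$, and since $\Phi(0)>0$ and $t\mapsto t^{1/p}$ is increasing, $\Phi'(0^+)\le 0$, which is exactly what we need. The case $p=\infty$ is handled separately, using $\normpQ{w}=\sup_\omega\|Qw(\omega)\|_\infty$ and a maximum-principle argument: at a point $\omega^*$ where the sup is attained by coordinate $i$, $\Delta w_i(\omega^*)$ has sign opposite to $w_i(\omega^*)$ (interior max) or the point is on the boundary where the Hopf/Neumann condition again forces the normal derivative to vanish, so the directional derivative of the norm is $\le 0$.

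The main obstacle I anticipate is rigor at the points where $w_i(\omega)=0$, since $|w_i|^{p-1}\mathrm{sign}(w_i)$ is not smooth there for $1<p<2$, and the interchange of $\frac{d}{dh}$ with $\int_\Omega$ must be justified; the cleanest route is to avoid pointwise differentiation altogether and instead estimate $\normpQ{w+hA_{p,Q}w}$ directly from above by a Dini-derivative inequality along the lines of Lemma \ref{Dini} and Remark \ref{M+Dini-linear}, treating $A_{p,Q}$ as the generator of the (Neumann) heat semigroup — which is a contraction on every $L^p$, $1\le p\le\infty$ — and reading off that its logarithmic Lipschitz constant is nonpositive. A secondary technical point is confirming that differences $u-v$ of elements of $\mathbf{Y}$ satisfy the hypotheses needed (they do, as the Neumann condition and $C^2$ regularity are preserved under subtraction), so that the integration by parts is legitimate.
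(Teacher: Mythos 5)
Your core computation for $1<p<\infty$ is essentially the paper's own Case 1: differentiate $\sum_i q_i^p\|w_i+hd_i\Delta w_i\|_p^p$ in $h$ at $0^+$, integrate by parts via Green's identity using the Neumann condition, and land on $-p(p-1)\sum_i q_i^p d_i\int_\Omega |w_i|^{p-2}|\nabla w_i|^2\,d\omega\leq 0$; the paper merely packages this with an $\epsilon$-slack comparison function $k(h)$ satisfying $k(0)=0$, $k'(0)<0$, but the content is identical (and both treatments are equally informal about the mild singularity of $|w_i|^{p-2}$ when $1<p<2$, and about the fact that a difference $u-v$ of elements of $\mathbf{Y}$ need not take values in $V$, though it keeps the $C^2$ regularity and Neumann condition that the computation actually uses).

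Where you genuinely diverge is at the endpoints, and there your primary route has a real problem at $p=1$: the pointwise derivative formula $p|w_i|^{p-1}\mathrm{sign}(w_i)\,d_i\Delta w_i$ fails on the zero set of $w_i$ (the right derivative of $h\mapsto|h\,d_i\Delta w_i|$ is $|d_i\Delta w_i|\geq 0$, not $0$), and integrating $\mathrm{sign}(w_i)\Delta w_i$ by parts is not legitimate as written, since $\mathrm{sign}(w_i)$ is not differentiable across that set -- this is Kato-inequality territory, not a routine Dini fix. The paper sidesteps exactly this by not differentiating at $p=1$ at all: it shows the quantity is $\leq 0$ for every $p>1$ and passes to the limit $p\to 1^+$ (Case 2), and it handles $p=\infty$ by the $p\to\infty$ limit via Lemma \ref{p-limit} (Case 3) rather than by your maximum-principle sketch, which can be made rigorous but requires a Danskin-type formula for the one-sided derivative of a supremum plus a careful boundary-maximum argument under the Neumann condition. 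Your fallback -- that $d_i\Delta$ with Neumann conditions generates a contraction semigroup on every $L^p$, $1\leq p\leq\infty$, so the logarithmic Lipschitz constant is nonpositive -- is a valid and arguably cleaner way to close all cases at once, but it imports semigroup machinery the paper deliberately avoids; as submitted, the endpoint cases $p=1$ and $p=\infty$ are the parts of your argument that would need to be reworked or replaced.
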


\begin{proof}
To prove the lemma, we consider the following three cases:

{\bf{Case 1.}} $1<p<\infty$.
By the definition of $M^+_{\mathbf{Y}, \mathbf{X}}[A_{p, Q}]$, it's enough to show that for any $u\in \mathbf{Y}$ with $\normpQ{u}\neq0$, and any $\epsilon>0$, there exists $h_{\epsilon}>0$, depending on $\epsilon$, such that for $0<h<h_{\epsilon}$, 
\[\displaystyle\frac{1}{h}\left(\displaystyle\frac{\normpQ{u+hD\Delta u}}{\normpQ{u}}-1\right)=
\displaystyle\frac{1}{h}\left(\displaystyle\frac{\displaystyle\left(\sum_iq_i^p\|u_i+hd_i\Delta u_i\|_p^p\right)^{\frac{1}{p}}}{\left(\displaystyle\sum_iq_i^p\|u_i\|_p^p\right)^{\frac{1}{p}}}-1\right)<\epsilon.\]
(As $A_{p, Q}u=D\Delta u$, we write $D\Delta u$ instead of $A_{p, Q}u.$)

Therefore we'll show that for $h$ small enough
\begin{equation}\label{k(h)}
\displaystyle\sum_iq_i^p\|u_i+hd_i\Delta u_i\|_p^p < (1+\epsilon h)^p\displaystyle\sum_iq_i^p\|u_i\|_p^p.
\end{equation}
Let's define $k\colon[0,1]\to\r$ as follows: $$k(h)=\displaystyle\sum_iq_i^p\|u_i+hd_i\Delta u_i\|_p^p - (1+\epsilon h)^p\displaystyle\sum_iq_i^p\|u_i\|_p^p.$$
  Observe that $k$ is continuously differentiable:
\[
\begin{array}{lcl}
k'(h)&=&\displaystyle\frac{d}{dh}\displaystyle\sum_iq_i^p\int_{\Omega}\abs{u_i(\omega)+hd_i\Delta u_i(\omega)}^p\;d\omega-p\epsilon(1+\epsilon h)^{p-1}\displaystyle\sum_iq_i^p\|u_i\|_p^p\\
&=&\displaystyle\sum_iq_i^p\displaystyle\int_{\Omega}p\abs{u_i(\omega)+hd_i\Delta u_i(\omega)}^{p-2}\left(u_i(\omega)+hd_i\Delta u_i(\omega)\right) d_i\Delta u_i(\omega)\;d\omega\\
&-&p\epsilon(1+\epsilon h)^{p-1}\displaystyle\sum_iq_i^p\|u_i\|_p^p.
\end{array}
\]
Note that in general $\abs{g}^p$ is differentiable for $p>1$ and its derivative is $p\abs{g}^{p-2}gg'$.
Now by Green's identity, the Neumann boundary condition, and by the assumption that $\displaystyle\sum_iq_i^p\|u_i\|_p^p\neq0$, it follows integrating by parts that:
\[
\begin{array}{lcl}
k'(0)&=&p\displaystyle\sum_iq_i^p\displaystyle\int_{\Omega}\abs{u_i(\omega)}^{p-2}u_i(\omega)d_i\Delta u_i(\omega)\;d\omega-p\epsilon\displaystyle\sum_iq_i^p\|u_i\|_p^p\\
&=&-p(p-1)\displaystyle\sum_iq_i^pd_i\displaystyle\int_{\Omega}\abs{u_i(\omega)}^{p-2}\nabla{u_i(\omega)}^2\;d\omega-p\epsilon\displaystyle\sum_iq_i^p\|u_i\|_p^p\\
&<&0 .
\end{array}
\]
Since $k'(0)<0$ and $k'$ is continuous and $k(0)=0$, $k(h)< 0$ for $h$ small enough and therefore Inequality $(\ref{k(h)})$ holds.
 
Note that by the definition of $\mathbf{Y}$, any $u\in \mathbf{Y}$ satisfies the Neumann boundary condition.

{\bf{Case 2.}} $p=1$. Let 
\[g(p):=\displaystyle\lim_{h\to 0^+}\displaystyle\frac{1}{h}\left(\displaystyle\frac{\displaystyle\left(\sum_iq_i^p\|u_i+hd_i\Delta u_i\|_p^p\right)^{\frac{1}{p}}}{\left(\displaystyle\sum_iq_i^p\|u_i\|_p^p\right)^{\frac{1}{p}}}-1\right).\] Since $g(p)$ is a continuous function at $p=1$, and since in Case $1$, we showed that $g(p)\leq0$ for any $p>1$, we conclude that $g(1)\leq0$.

{\bf{Case 3.}} $p=\infty$. Before proving this case we need the following lemma, which is an easy exercise in
real analysis. (For completeness, we include a proof in an appendix.)

\begin{lemma}\label{p-limit}
Let $\Omega\subset\r^m$ be a Lebesgue measurable set with finite measure $\abs{\Omega}$ and let $f$ be a bounded, continuous function on $\r$. Then 
$F(p):=\displaystyle\left(\frac{1}{\abs{\Omega}}\int_{\Omega}\abs{f}^p\right)^{\frac{1}{p}}$ is an increasing function of $p$ and its limit as $p\to\infty$ is $\|f\|_{\infty}$.
\end{lemma}

For a fixed $p_0>1$, pick $u\in \mathbf{Y}$ with $\|u\|_{p_0,Q}\neq0$. By the definition of the norm, $\|u\|_{p_0,Q}\neq0$ implies that for some $i_0\in\{1,\cdots,n\}$, $\|u_{i_0}\|_{p_0}\neq0$. Let $\varphi(p):=\displaystyle\frac{1}{|\Omega|^{\frac{1}{p}}}\|u_{i_0}\|_p.$ By Lemma \ref{p-limit}, $\varphi$ is an increasing function of $p$, hence for any $p>p_0$, $\|u_{i_0}\|_{p}\geq \|u_{i_0}\|_{p_0}>0$. Now fix $i\in\{1,\cdots,n\}$, $p>p_0$, and $\epsilon>0$. Define $k$ as follows:
\[
k(h)=\displaystyle\left\{\begin{array}{ccc}\|u_i+hd_i\Delta u_i\|_p^p-(1+\epsilon h)^p\|u_{i_0}\|_p^p & & \mbox{if}\;\; \|u_{i_0}\|_{p}\geq\|u_{i}\|_{p}\\\|u_i+hd_i\Delta u_i\|_p^p-(1+\epsilon h)^p\|u_i\|_p^p & & \mbox{if} \;\;\|u_{i_0}\|_{p}\leq\|u_{i}\|_{p}\end{array}\right.
\]
 In both cases $k(0)\leq0$ and $k'(0)<0$ (the proof is similar to the proof of $k'(0)<0$ in Case $1$, since both $\|u_{i_0}\|_p>0$ and $\|u_{i_0}\|_p>0$).
 Therefore, for some small $h$, $k(h)\leq0$ which implies that:
 \[
 \displaystyle\lim_{h\to0^+}\frac{1}{h}\left(\displaystyle\frac{\|u_i+hd_i\Delta u_i\|_p}{\|u_i\|_p}-1\right)\leq0.
 \]
 Now by Lemma \ref{p-limit}, since $\displaystyle\frac{1}{|\Omega|^{\frac{1}{p}}}\|u_i+hd_i\Delta u_i\|_{p}\to \|u_i+hd_i\Delta u_i\|_{\infty}$, and $\displaystyle\frac{1}{|\Omega|^{\frac{1}{p}}}\|u_i\|_{p}\to\|u_i\|_{\infty}$ as $p\to\infty$, we can conclude that 
 \[
 \displaystyle\lim_{h\to0^+}\frac{1}{h}\left(\displaystyle\frac{\|u_i+hd_i\Delta u_i\|_{\infty}}{\|u_i\|_{\infty}}-1\right)\leq0.
 \]
In other words, for a fixed $\epsilon>0$, there exists $h_i>0$ such that for any $0<h<h_i$,
 \[\|u_i+hd_i\Delta u_i\|_{\infty}\leq(1+\epsilon h)\|u_i\|_{\infty}\quad\mbox{for any}\quad i\in\{1,\ldots,n\}.\]
  Let $h_0=\displaystyle\min_i{h_i}$. Then for any $0<h<h_0$,
 \[\displaystyle\max_{i}q_i\|u_i+hd_i\Delta u_i\|_{\infty}\;=:\;q_j\|u_j+hd_j\Delta u_j\|_{\infty}\;\leq\; q_j(1+\epsilon h)\|u_j\|_{\infty}\;\leq\;(1+\epsilon h)\displaystyle\max_iq_i\|u_i\|_{\infty},\]
which implies 
\[\displaystyle\lim_{h\to0^+}\frac{1}{h}\left(\frac{\displaystyle\max_{i}q_i\|u_i+hd_i\Delta u_i\|_{\infty}}{\displaystyle\max_iq_i\|u_i\|_{\infty}}-1\right)\leq0.\]
\end{proof}

\begin{lemma}\label{MM}
For any function $F$, any $1\leq p\leq\infty$, and any positive diagonal matrix $Q$, 
\[M^+_{\mathbf{Y}, \mathbf{X}}[\tilde{F}]\leq \MpQ[F],\]
where $\MpQ$ is the lub logarithmic Lipschitz constant induced by the norm $\|\cdot\|_{p,Q}$ defined on $\r^n$: $\|x\|_{p,Q}=\|Qx\|_p$.
\end{lemma}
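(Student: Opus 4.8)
The plan is to exploit the pointwise structure $\tilde F(u)(\omega)=F(u(\omega))$ together with Lemma~\ref{U=S}, which expresses $\normpQ{\cdot}$ on $\mathbf{X}$ as an $L^p$-aggregate — an integral over $\Omega$ when $1\le p<\infty$, a supremum over $\bar\Omega$ when $p=\infty$ — of the pointwise finite-dimensional weighted norm $\|\cdot\|_{p,Q}$ on $\r^n$. Set $c:=\MpQ[F]$ (finite, since $F$ is Lipschitz; if $c=+\infty$ there is nothing to prove).

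First I would rewrite the definition of $\MpQ[F]$ as a uniform estimate: given $\epsilon>0$, there is an $h_0>0$, \emph{independent of the pair $(x,y)$}, such that
\[
\|x-y+h(F(x)-F(y))\|_{p,Q}\;\le\;(1+(c+\epsilon)h)\,\|x-y\|_{p,Q}
\]
for all $0<h<h_0$ and all $x\neq y\in V$. This is immediate from $c=\lim_{h\to0^+}\sup_{x\neq y}\tfrac1h(\cdots)$, but it is the one delicate point: the uniformity in $(x,y)$ is precisely what would be lost if one replaced $\MpQ$ by the strong constant $M^+_{p,Q}$ (limit inside the supremum), which is why the statement is phrased with the lub logarithmic Lipschitz constant.

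Next, fix $u\neq v\in\mathbf{Y}$ and $0<h<h_0$. For each $\omega\in\bar\Omega$ I apply the displayed inequality with $x=u(\omega)$, $y=v(\omega)$ (legitimate since $u,v$ map into $V$), both sides vanishing when $u(\omega)=v(\omega)$, to obtain
\[
\big\|Q\big[(u-v+h(\tilde F(u)-\tilde F(v)))(\omega)\big]\big\|_p\;\le\;(1+(c+\epsilon)h)\,\big\|Q[(u-v)(\omega)]\big\|_p .
\]
For $1\le p<\infty$, raising this to the $p$-th power, integrating over $\Omega$, taking $p$-th roots, and invoking Lemma~\ref{U=S} yields $\normpQ{u-v+h(\tilde F(u)-\tilde F(v))}\le(1+(c+\epsilon)h)\normpQ{u-v}$; for $p=\infty$, I first bound the right-hand side above by $(1+(c+\epsilon)h)\normpQ{u-v}$ uniformly in $\omega$ and then take the supremum over $\omega$ on the left, reaching the same conclusion.

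Finally, since $u\neq v$ in $\mathbf{Y}$ forces $\normpQ{u-v}>0$ by continuity, dividing gives $\tfrac1h\big(\normpQ{u-v+h(\tilde F(u)-\tilde F(v))}/\normpQ{u-v}-1\big)\le c+\epsilon$ for all $0<h<h_0$. Letting $h\to0^+$ (the limit exists by Remark~\ref{existence-of-norm}) and then taking the supremum over $u\neq v\in\mathbf{Y}$ gives $M^+_{\mathbf{Y},\mathbf{X}}[\tilde F]\le c+\epsilon$; since $\epsilon>0$ was arbitrary, $M^+_{\mathbf{Y},\mathbf{X}}[\tilde F]\le\MpQ[F]$. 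The main obstacle is the first step — the uniform-in-$(x,y)$ reformulation of $\MpQ[F]$ — after which the argument reduces to the pointwise estimate and the norm identity of Lemma~\ref{U=S}.
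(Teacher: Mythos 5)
Your proposal is correct and follows essentially the same route as the paper's proof: extract from the definition of $\MpQ[F]$ a uniform-in-$(x,y)$ estimate valid for all small $h$, apply it pointwise with $x=u(\omega)$, $y=v(\omega)$ (the points where $u(\omega)=v(\omega)$ contributing zero to both sides, exactly as the paper handles via the set $\Omega_1$), aggregate over $\Omega$ using Lemma~\ref{U=S}, and then pass to the limit in $h$, take the supremum over $u\neq v$, and let $\epsilon\to0$. Your explicit remark on why the lub constant $\MpQ$ (rather than $M^+_{p,Q}$) is needed for the uniformity is a nice clarification of a point the paper leaves implicit, but it does not change the argument.
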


\begin{proof}
 By the definition of $c:=\MpQ[F]$, we have 
\[
\lim_{h\to0^+}\frac{1}{h}\sup_{x\neq y\in V}\left(\frac{\normpQ{x-y+h(F(x)-F(y))}}{\normpQ{x-y}}-1\right)= c.
\]
Fix an arbitrary $\epsilon>0$. Then there exists $h_0>0$ such that for all $0<h<h_0$, 
\[\frac{1}{h}\sup_{x\neq y\in V}\left(\frac{\normpQ{x-y+h(F(x)-F(y))}}{\normpQ{x-y}}-1\right)<c+\epsilon.
\]
Therefore, for any $x\neq y$, and $0<h<h_0$
\begin{equation}\label{h0}
\frac{\normpQ{x-y+h(F(x)-F(y))}}{\normpQ{x-y}}<(c+\epsilon)h+1.
\end{equation}
For fixed $u\neq v\in \mathbf{Y}$, let $\Omega_1=\{\omega\in\bar\Omega:u(\omega)\neq v(\omega)\}$. 
Fix $\omega\in\Omega_1$, and let $x=u(\omega)$ and $y=v(\omega)$. We give a proof for the case $p<\infty$; the case $p=\infty$ is analogous. Using equation $(\ref{h0})$, we have:
\begin{equation}\label{normdef}
\displaystyle\frac{\left(\displaystyle\sum_iq_i^p\abs{u_i(\omega)-v_i(\omega)+h(F_i(u(\omega))-F_i(v(\omega)))}^p\right)^{\frac{1}{p}}}{\left(\displaystyle\sum_iq_i^p\abs{u_i(\omega)-v_i(\omega)}^p\right)^{\frac{1}{p}}}<(c+\epsilon)h+1.
\end{equation}
Multiplying both sides by the denominator and raising to the power $p$, we have:
\begin{equation}\label{forw}
\displaystyle\sum_iq_i^p\left|{u_i(\omega)-v_i(\omega)+h\left(F_i(u(\omega))-F_i(v(\omega))\right)}\right|^p<((c+\epsilon)h+1)^p\displaystyle\sum_iq_i^p\abs{u_i(\omega)-v_i(\omega)}^p.
\end{equation}
Since $\tilde{F}(u)(\omega)=F(u(\omega))$, Equation (\ref{forw}) can be written as:
\begin{equation}\label{ftilde}
\displaystyle\sum_iq_i^p\left|{u_i(\omega)-v_i(\omega)+h\left(\tilde{F}_i(u)(\omega)-\tilde{F}_i(v)(\omega)\right)}\right|^p<((c+\epsilon)h+1)^p\displaystyle\sum_iq_i^p\abs{u_i(\omega)-v_i(\omega)}^p.
\end{equation}
Now by taking the integral over $\bar\Omega$, using Lemma $\ref{U=S}$, we get:
\[\normpQ{u-v+h\left(\tilde{F}(u)-\tilde{F}(v)\right)}<((c+\epsilon)h+1)\normpQ{u-v}.
\]  
(Note that for $\omega\notin\Omega_1$, 
\[((c+\epsilon)h+1)^p\displaystyle\sum_iq_i^p\abs{u_i(\omega, t)-v_i(\omega, t)}^p=0\] which we can add to the right hand side of $(\ref{ftilde})$, and also 
\[
\displaystyle\sum_iq_i^p\abs{u_i(\omega)-v_i(\omega)+h(F_i(u(\omega))-F_i(v(\omega)))}^p=0
\]
which we can add to the left hand side of $(\ref{ftilde})$, and hence we can indeed take the integral over all $\bar\Omega$.)

Hence,
\[\displaystyle\lim_{h\to0^+}\frac{1}{h}\left(\frac{\normpQ{u-v+h\left(\tilde{F}(u)-\tilde{F}(v)\right)}}{\normpQ{u-v}}-1\right)\leq c+\epsilon.
\]
Now by letting $\epsilon\to0$ and taking $\sup$ over $u\neq v\in \mathbf{Y}$, we get $M^+_{\mathbf{Y}, \mathbf{X}}[\tilde{F}]\leq c$.
\end{proof}

\begin{proofoftheorem0}
 For any $1\leq p\leq\infty$, by subadditivity of semi inner product, Lemma \ref{key4}, and Lemma \ref{MM},
  \[M^+_{\mathbf{Y},\mathbf{X}}[\tilde{F}+A_{p, Q}]\leq M^+_{\mathbf{Y}, \mathbf{X}}[\tilde{F}]\leq c.\]
Now using Corollary \ref{key1}, 
\[\normpQ{\hat{u}(t)-\hat{v}(t)}\;\leq\; e^{ct}\normpQ{\hat{u}(0)-\hat{v}(0)},\]
for all $t\in[0,\infty).$
\end{proofoftheorem0}\qed

\begin{theorem}\label{thm-for-mu}
Consider the reaction-diffusion system $(\ref{re-di})-(\ref{i-c})$ and suppose Assumption \ref{as-pde} holds. In addition suppose for some $1\leq p\leq\infty$, $c\in\r$, and a positive diagonal matrix $Q$, $\mu_{p,Q}(J_F(x))\leq c$ for all $x\in V$, where $\mu_{p,Q}$ is the logarithmic norm induced by $\normpQ{\cdot}$. Then, for any two solutions $u, v$ of $(\ref{re-di}) - (\ref{i-c})$, we have 
\[\normpQ{u(\cdot, t)-v( \cdot, t)}\;\leq\; e^{ct}\normpQ{u(\cdot,0)-v(\cdot,0)}.\]
\end{theorem}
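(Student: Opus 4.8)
The plan is to reduce the theorem to Theorem \ref{main-result0} by showing that the pointwise hypothesis $\mu_{p,Q}(J_F(x))\le c$ on the \emph{convex} set $V$ forces the ``global'' bound $\MpQ[F]\le c$ on the lub logarithmic Lipschitz constant induced by $\|\cdot\|_{p,Q}$ on $\r^n$. Once that implication is established, Theorem \ref{main-result0}, applied with $c':=\MpQ[F]\le c$, gives $\normpQ{\hat u(t)-\hat v(t)}\le e^{c't}\normpQ{\hat u(0)-\hat v(0)}\le e^{ct}\normpQ{\hat u(0)-\hat v(0)}$ (using $t\ge0$ and $c'\le c$), which is exactly the claimed estimate once one rewrites it for $u(\cdot,t),v(\cdot,t)$ as in the remark following Theorem \ref{main-result0} and uses Lemma \ref{U=S}.

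To prove $\MpQ[F]\le c$, I would fix $x\ne y\in V$ and $h>0$ and use convexity of $V$ together with the fundamental theorem of calculus to write $F(x)-F(y)=\int_0^1 J_F\bigl(y+s(x-y)\bigr)(x-y)\,ds$, hence
\[
x-y+h\bigl(F(x)-F(y)\bigr)=\int_0^1\bigl(I+hJ_F(y+s(x-y))\bigr)(x-y)\,ds .
\]
Applying $\normpQ{\cdot}=\|Q\cdot\|_p$, the triangle inequality for integrals, and the definition of the operator norm $\|\cdot\|_{p,Q}$ induced on matrices by $\|\cdot\|_{p,Q}$ on $\r^n$, and noting $y+s(x-y)\in V$ for $s\in[0,1]$, gives
\[
\frac{\normpQ{x-y+h(F(x)-F(y))}}{\normpQ{x-y}}\;\le\;\sup_{z\in V}\bigl\|I+hJ_F(z)\bigr\|_{p,Q},
\]
whose right-hand side does not depend on $x,y$. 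Subtracting $1$, dividing by $h$, and taking $\sup$ over $x\ne y\in V$,
\[
\sup_{x\ne y\in V}\frac1h\!\left(\frac{\normpQ{x-y+h(F(x)-F(y))}}{\normpQ{x-y}}-1\right)\;\le\;\sup_{z\in V}\frac1h\bigl(\|I+hJ_F(z)\|_{p,Q}-1\bigr).
\]

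The main obstacle is then the passage $h\to0^+$ on the right: this amounts to interchanging $\sup_{z\in V}$ with $\lim_{h\to0^+}$, which is not automatic. I would resolve it by exploiting that $F$ is globally Lipschitz and $C^1$, so the Jacobians are uniformly bounded, $\|J_F(z)\|\le L$ for all $z\in V$; thus $K:=\overline{\{J_F(z):z\in V\}}$ is a compact subset of $\r^{n\times n}$ on which $\mu_{p,Q}\le c$, by the hypothesis and continuity of the logarithmic norm. For each matrix $A$, $h\mapsto\|I+hA\|_{p,Q}$ is convex with value $1$ at $h=0$, so $h\mapsto\frac1h(\|I+hA\|_{p,Q}-1)$ is nondecreasing in $h>0$ and decreases, as $h\downarrow0$, to $\mu_{p,Q}(A)$; by Dini's theorem this monotone convergence of continuous functions to a continuous limit is uniform on $K$, whence
\[
\lim_{h\to0^+}\sup_{z\in V}\frac1h\bigl(\|I+hJ_F(z)\|_{p,Q}-1\bigr)=\sup_{A\in K}\mu_{p,Q}(A)\le c .
\]
Combined with the previous display this yields $\MpQ[F]\le c$, completing the reduction. (An alternative to the Dini argument, avoiding compactness, is the uniform estimate $\|I+hA\|_{p,Q}\le\|e^{hA}\|_{p,Q}+\bigl(e^{h\|A\|}-1-h\|A\|\bigr)\le e^{hc}+\bigl(e^{hL}-1-hL\bigr)$ for $\|A\|\le L$, using the standard semigroup bound $\|e^{hA}\|_{p,Q}\le e^{h\mu_{p,Q}(A)}$; dividing by $h$ and letting $h\to0^+$ gives the same conclusion uniformly in such $A$.)
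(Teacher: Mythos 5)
Your proposal is correct, and its overall reduction is exactly the paper's: both arguments deduce $\MpQ[F]\le c$ from the pointwise Jacobian bound and then invoke Theorem \ref{main-result0} (together with the remark following it and Lemma \ref{U=S}) to get the estimate. The difference is in how that reduction is justified: the paper simply cites Proposition \ref{key3} (S\"oderlind's identity $\sup_{x\in V}\mu_{p,Q}(J_F(x))=\MpQ[F]$ on convex sets), whereas you prove the only direction actually needed, $\MpQ[F]\le\sup_{x\in V}\mu_{p,Q}(J_F(x))$, from scratch: the fundamental theorem of calculus along segments (this is where convexity of $V$ enters), the integral triangle inequality giving the bound by $\sup_{z\in V}\|I+hJ_F(z)\|_{p,Q}$, and then a careful interchange of $\lim_{h\to0^+}$ with $\sup_{z\in V}$ via monotonicity of $h\mapsto\frac{1}{h}(\|I+hA\|_{p,Q}-1)$ plus Dini's theorem on the (compact, by the global Lipschitz bound on $J_F$) closure of the Jacobian set, or alternatively the uniform semigroup estimate $\|e^{hA}\|_{p,Q}\le e^{h\mu_{p,Q}(A)}$. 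This is a sound, self-contained replacement for the citation; your attention to the limit--sup interchange (the step that is genuinely nontrivial and is hidden inside the quoted proposition) is exactly right, the monotonicity you use is the same fact the paper proves as Corollary \ref{n-dec}, and continuity of $\mu_{p,Q}$ (it is $1$-Lipschitz with respect to the induced operator norm) makes the Dini step legitimate. The only cost of your route is reliance on two standard facts not proved in the paper (Dini's theorem or the bound $\|e^{hA}\|\le e^{h\mu(A)}$), while the benefit is independence from the external reference \cite{Soderlind2}.
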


\begin{remark}\label{tv}
In general the result of Theorem \ref{thm-for-mu} holds also for time varying systems:
\begin{equation} \label{t-v}
\displaystyle\frac{\partial u}{\partial t}(\omega, t)=F(u(\omega, t), t)+D\Delta u(\omega, t)
\end{equation}
when we assume $\mu_{p,Q}(J_F(x, t))\leq c$, where $J_F(x, t)$ is the Jacobian $\displaystyle\frac{\partial F}{\partial x}(x, t)$, for all $x\in V$ and $t\geq0.$

We omit the details of this easy generalization. 
\end{remark}

\newtheorem*{proofoftheorem2}{Proof of Theorem \ref{thm-for-mu}}
To prove Theorem \ref{thm-for-mu}, we use the following proposition, from \cite{Soderlind2}.

\begin{proposition}\label{key3}
Let $(X, \|\cdot\|_X)$ be a normed space and $Y$ is a connected subset of $X$. Then for any (globally) Lipschitz and continuously differentiable function $f\colon Y\to \r^n$,
 $$\displaystyle\sup_{x\in Y}\mu_X(J_f(x))\leq M_{Y,X}[f].$$
 Moreover if $Y$ is convex, then  $$\displaystyle\sup_{x\in Y}\mu_X(J_f(x))= M_{Y,X}[f].$$
\end{proposition}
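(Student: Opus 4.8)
The plan is to prove the two inequalities separately. Throughout I treat $X$ as finite dimensional (identified with $\r^n$, as in the application in Theorem \ref{thm-for-mu}) and I take $Y$ to be open, which is the generality actually used — the convex case of interest is recovered after replacing $Y$ by $\mathrm{int}\,Y$. For the first inequality, $\sup_{x\in Y}\mu_X(J_f(x))\le M_{Y,X}[f]$, fix $x\in Y$ and a unit vector $w\in X$. For $t>0$ small, $x+tw\in Y$, and continuous differentiability gives $f(x+tw)-f(x)=t\,J_f(x)w+o(t)$ as $t\to0^+$; substituting $u=x+tw$, $v=x$ into the quotient defining $L_{Y,X}[I+hf]$ and letting $t\to0^+$ shows $\|w+h\,J_f(x)w\|_X\le L_{Y,X}[I+hf]$ for every $h>0$. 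Subtracting $1$, dividing by $h$, letting $h\to0^+$ (the left-hand side has a limit, being nondecreasing in $h$ via the identity $w+sh\,J_f(x)w=s(w+h\,J_f(x)w)+(1-s)w$), and then taking the supremum over $\|w\|_X=1$, Theorem \ref{mu} identifies the resulting left-hand side as $\mu_X(J_f(x))$; taking the supremum over $x$ finishes this direction, and connectedness plays no role in it.

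For the reverse inequality when $Y$ is convex, set $c:=\sup_{x\in Y}\mu_X(J_f(x))$ and, for $u\ne v$ in $Y$, use that the segment $\gamma(s)=v+s(u-v)$ lies in $Y$ to write, by the fundamental theorem of calculus and the chain rule,
\[
u-v+h\bigl(f(u)-f(v)\bigr)=\int_0^1\bigl(I+h\,J_f(\gamma(s))\bigr)(u-v)\,ds .
\]
The triangle inequality for vector-valued integrals together with the operator-norm estimate then gives $\|u-v+h(f(u)-f(v))\|_X\le\bigl(\sup_{x\in Y}\|I+h\,J_f(x)\|_{X\to X}\bigr)\|u-v\|_X$, so, writing $\phi_h(A):=\tfrac1h(\|I+hA\|_{X\to X}-1)$,
\[
\tfrac1h\bigl(L_{Y,X}[I+hf]-1\bigr)\ \le\ \sup_{x\in Y}\phi_h\bigl(J_f(x)\bigr)\qquad(h>0).
\]

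It remains to let $h\to0^+$ on the right. Since $f$ is globally Lipschitz, $\|J_f(x)\|_{X\to X}\le L_{Y,X}[f]<\infty$ for all $x$, so the Jacobians lie in a bounded — hence relatively compact — subset of $\mathcal L(X,X)$; let $K$ be its closure. From $|\phi_h(A)-\phi_h(B)|\le\|A-B\|$ one gets that $\mu_X$ is $1$-Lipschitz, hence continuous, so the suprema of $\phi_h$ and of $\mu_X$ over $\{J_f(x):x\in Y\}$ equal those over $K$; in particular $c=\sup_{A\in K}\mu_X(A)$. On the compact $K$, $\phi_h(A)\downarrow\mu_X(A)$ as $h\downarrow0$ (monotone in $h$ by the same convexity identity as before), with all functions continuous, so Dini's theorem yields uniform convergence and $\sup_{A\in K}\phi_h(A)\to c$. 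Combined with the previous display and the identity $M_{Y,X}[f]=\lim_{h\to0^+}\tfrac1h(L_{Y,X}[I+hf]-1)$, this gives $M_{Y,X}[f]\le c$, which with the first part gives equality for convex $Y$. The one genuinely delicate point is the exchange of $\lim_{h\to0^+}$ with $\sup_x$ in this last step: for a pointwise-decreasing family of suprema one has $\inf_h\sup_x\ge\sup_x\inf_h$ only, with equality not automatic, and what closes the gap is exactly the global Lipschitz hypothesis, which confines the Jacobians to a compact set so that Dini's theorem applies; everything else (monotonicity in $h$ of $\phi_h$, continuity of $\mu_X$, the integral representation of $f(u)-f(v)$) is routine.
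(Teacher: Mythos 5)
Your proof is correct, but note that the paper itself gives no proof of Proposition \ref{key3}: it is simply quoted from S\"oderlind \cite{Soderlind2}. So what you have produced is a self-contained substitute rather than a variant of an in-paper argument, and it is consistent with the paper's own toolkit: for the lower bound you differentiate along $u=x+tw$, $v=x$ and then use the paper's Theorem \ref{mu} to convert $\sup_{\|w\|=1}\lim_{h\to0^+}\frac1h(\|w+hJ_f(x)w\|-1)$ into $\mu_X(J_f(x))$; for the convex case you use the mean-value integral representation $f(u)-f(v)=\int_0^1 J_f(v+s(u-v))(u-v)\,ds$ to get $L_{Y,X}[I+hf]\le\sup_{x}\|I+hJ_f(x)\|_{X\to X}$, and you correctly isolate the only delicate step, the exchange of $\lim_{h\to0^+}$ with $\sup_x$, resolving it by noting that global Lipschitzness confines the Jacobians to a bounded, hence (in finite dimensions) relatively compact, set on which $\phi_h\downarrow\mu_X$ with all functions continuous, so Dini's theorem applies. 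This is the same kind of sup/limit interchange that the paper's Proposition \ref{Joo} and Theorem \ref{mu} handle in the linear case, so your route meshes well with the rest of the paper; what it buys over the paper's bare citation is an explicit accounting of where global Lipschitzness and finite-dimensionality are used. Two minor caveats, neither fatal: (i) you prove the statement for open $Y$ and recover the convex case via $\mathrm{int}\,Y$; this is a reasonable reading (for connected $Y$ with empty interior the full Jacobian is not determined by $f|_Y$ and the literal statement is problematic), but for convex $Y$ with nonempty interior you should say explicitly that $M_{\mathrm{int}Y,X}[f]=M_{Y,X}[f]$ and $\sup_{\mathrm{int}Y}\mu_X(J_f)=\sup_{Y}\mu_X(J_f)$ by density and continuity of $J_f$ and of $\mu_X$; (ii) Dini's theorem is stated for sequences, so apply it along some $h_n\downarrow0$ and use the monotonicity of $\phi_h$ in $h$ to cover all $h\to0^+$. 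Your observation that connectedness is never used in the first inequality is also accurate.
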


\begin{proofoftheorem2}
The proof is immediate from Theorem $\ref{main-result0}$ and Proposition $\ref{key3}$.
\end{proofoftheorem2}

\begin{corollary}\label{contracting}
Consider the reaction-diffusion system $(\ref{re-di})-(\ref{i-c})$ and suppose Assumption \ref{as-pde} holds. In addition suppose for some $1\leq p\leq\infty$, and a positive diagonal matrix $Q$, $\mu_{p,Q}(J_F(x))<0$ for all $x\in V$. Then $(\ref{re-di}) - (\ref{i-c})$ is contracting in $\mathbf{Y}$, meaning that solutions converge (exponentially) to each other, as $t\to+\infty$.
\end{corollary}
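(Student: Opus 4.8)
The plan is to combine Theorem \ref{thm-for-mu} with the standard exponential-decay consequence of a negative logarithmic norm, exactly as one does for ODEs. First I would invoke the hypothesis: since $\mu_{p,Q}(J_F(x)) < 0$ for all $x$ in the convex set $V$, and since $J_F$ is continuous and $V$ is the domain of interest, I would like to extract a single constant $c < 0$ with $\mu_{p,Q}(J_F(x)) \le c$ for all $x \in V$. If $V$ is compact this is immediate by continuity of $x \mapsto \mu_{p,Q}(J_F(x))$ (the logarithmic norm is continuous in the matrix entries) together with the extreme value theorem. If $V$ is merely convex and not compact, one should either restrict attention to the compact range of the particular solutions under consideration, or assume the uniform bound as part of the hypothesis; I would phrase the corollary so that the relevant $c<0$ exists (e.g.\ by noting solutions stay in a compact subset of $V$, or by taking this as the intended reading of the strict inequality).

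Once such a $c < 0$ is in hand, I would simply apply Theorem \ref{thm-for-mu} with this $c$. That theorem gives, for any two solutions $u,v$ of $(\ref{re-di})$--$(\ref{i-c})$,
\[
\normpQ{u(\cdot,t) - v(\cdot,t)} \;\le\; e^{ct}\,\normpQ{u(\cdot,0) - v(\cdot,0)}
\]
for all $t \in [0,T)$. Since $c < 0$, the factor $e^{ct} \to 0$ as $t \to +\infty$, so $\normpQ{u(\cdot,t) - v(\cdot,t)} \to 0$ exponentially fast; this is precisely the assertion that the system is contracting in $\mathbf{Y}$. I would also remark that if an equilibrium $\bar u$ with $F(\bar u) = 0$ exists, then the constant-in-space, constant-in-time function $u \equiv \bar u$ is a solution, so taking $v \equiv \bar u$ yields global exponential stability of the uniform steady state $\bar u$ for the PDE — which is the punchline about Turing instability being impossible.

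I do not expect any genuine obstacle here: the corollary is essentially a restatement of Theorem \ref{thm-for-mu} under a strict-inequality hypothesis. The only point requiring a word of care is the passage from the pointwise strict bound $\mu_{p,Q}(J_F(x)) < 0$ to a uniform bound $\mu_{p,Q}(J_F(x)) \le c < 0$; this is where one uses compactness (of $V$, or of the orbit closure) and continuity of the logarithmic norm. Everything else is a one-line appeal to the already-proved theorem plus the elementary fact that $e^{ct} \to 0$ when $c<0$.
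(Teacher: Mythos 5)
Your proposal is correct and matches the paper's (implicit) argument: the corollary is left without a separate proof precisely because it is an immediate application of Theorem \ref{thm-for-mu} with a negative constant $c$, followed by the observation that $e^{ct}\to 0$, which is exactly what you do. Your caveat about upgrading the pointwise bound $\mu_{p,Q}(J_F(x))<0$ to a uniform bound $\mu_{p,Q}(J_F(x))\le c<0$ is a fair refinement rather than a departure — the paper implicitly reads the hypothesis this way (in its example it verifies $\sup_{(x,y)\in V}\mu_{1,Q}(J)<0$ directly), so flagging the compactness/uniformity point only makes the intended argument explicit.
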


\section{Example}
We provide an example of a biochemical model which can be shown to be contractive by applying Corollary \ref{contracting} when using a weighted $L^1$ norm, but which is not contractive using any weighted $L^2$ norm, so that previous results can not be applied. Even more interestingly, this system is not contractive in any $L^p$ norm, $p>1$. The example is of great interest in molecular systems biology \cite{ddv07}, and contractivity in a weighted $L^1$ norm was shown for ODE systems in \cite{Russo}, but the PDE case was open. The variant with more enzymes discussed in \cite{Russo} can also be extended to the PDE case in an analogous fashion. 

 \textbf{Example $1$}. A typical biochemical reaction is one in which an enzyme $X$ (whose concentration is quantified by the non-negative variable $x=x(t)$) binds to a substrate $S$ (whose concentration is quantified by $s=s(t)\geq 0$), to produce a complex $Y$ (whose concentration is quantified by $y=y(t)\geq0$), and the enzyme is subject to degradation and dilution (at rate $\delta x$, where $\delta>0$) and production according to an external signal $z$, which we assume constant (a similar result would apply if $z$ is time dependent, see Remark \ref{tv}). An entirely analogous system can be used to model a transcription factor binding to a promoter, as well as many other biological process of interest. The complete system of chemical reactions is given by:
\[
0 \arrowchem{z} X \arrowchem{\delta} 0\,,\quad
X+S \arrowschem{k_1}{k_2} Y.
\]
We let the domain $\Omega$ represent the part of the cytoplasm where these chemicals are free to diffuse. Taking equal diffusion constants for $S$ and $Y$ (which is reasonable since typically $S$ and $Y$ have approximately the same size), a natural model is given by a reaction diffusion system 
\[
\begin{array}{lcl}\label{example}
x_t=z-\delta x+k_1y-k_2 s x+d_1\Delta x\\
y_t=-k_1y+k_2sx+d_2\Delta y\\
s_t=k_1y-k_2sx+d_2\Delta s.
\end{array}
\]
If we assume that initially $S$ and $Y$ are uniformly distributed, it follows that $\displaystyle\frac{\partial}{\partial t}\left(y(\omega, t)+s(\omega, t)\right)=0$, so $y(\omega, t)+s(\omega, t)=y(\omega, 0)+s(\omega, 0)=S_Y$ is a constant. Thus we can study the following reduced system: 
\[
\begin{array}{lcl}\label{example}
x_t=z-\delta x+k_1y-k_2(S_Y-y)x+d_1\Delta x\\
y_t=-k_1y+k_2(S_Y-y)x+d_2\Delta y.
\end{array}
\]
Note that $(x(t), y(t))\in V=[0, \infty)\times[0, S_Y]$ for all $t\geq0$ ($V$ is convex), and $S_Y$, $k_1$, $k_2$, $\delta$, $d_1$, and $d_2$ are arbitrary positive constants.

Let $J$ be the Jacobian of $F=(z-\delta x+k_1y-k_2(S_Y-y)x, -k_1y+k_2(S_Y-y)x)^T$:
$$J=\left(\begin{array}{cc}-\delta-k_2(S_Y-y) & k_1+k_2x \\k_2(S_Y-y) & -(k_1+k_2x)\end{array}\right).$$
In \cite{Russo}, 
it has been shown that $\displaystyle\sup_{x,y\in V}\mu_{1, Q}(J(x,y))<0$ for $Q=\diag\left(1, 1+\displaystyle\frac{\delta}{k_2S_Y}-\zeta\right)$, where $0<\zeta<\displaystyle\frac{\delta}{k_2S_Y}$. Therefore by Corollary \ref{contracting}, the system is contracting. Note that a $\emph{weighted}$ norm $L^1$ is necessary, since with $Q=I$ we obtain $\mu_1=0.$ 
 
  We'll show that for any $p>1$ and any diagonal $Q$, it is not true that $\mu_{p,Q}(J(x,y))<0$ for all $(x,y)\in V$.
  
  Without loss of generality we assume $Q=\diag(1,q)$. Then $$QJQ^{-1}=\displaystyle\left(\begin{array}{cc}-\delta-a & \displaystyle\frac{b}{q} \\aq & -b\end{array}\right),$$
 where $a=k_2(S_Y-y)\in[0,k_2S_Y]$ and $b=k_1+k_2x\in[k_1,\infty)$.
 
 We first consider the case $p\neq\infty$. We'll show that there exists $(x_0,y_0)\in C$ such that for any small $h>0$, $\|I+hQJ(x_0,y_0)Q^{-1}\|_p>1$. This will imply $\mu_{p,Q}(J(x_0,y_0))\geq 0.$ Computing explicitly, we have:
   \[
  \begin{array}{lcl}
  \|I+hQJQ^{-1}\|_p&=&\displaystyle\sup_{(\xi_1,\xi_2)\neq(0,0)}\frac{\left(\left|{\xi_1-h(\delta+a)\xi_1+h\displaystyle\frac{b\xi_2}{q}}\right|^p+\abs{haq\xi_1+\xi_2-hb\xi_2}^p\right)^{\frac{1}{p}}}{(\abs{\xi_1}^p+\abs{\xi_2}^p)^{\frac{1}{p}}}\\
  &\geq&\displaystyle\frac{\left(\left|{1-h(\delta+a)+h\displaystyle\frac{b\lambda}{q}}\right|^p+\abs{haq+\lambda-hb\lambda}^p\right)^{\frac{1}{p}}}{(1+\abs{\lambda}^p)^{\frac{1}{p}}},
  \end{array}
  \]
  where we take a point of the form $(\xi_1,\xi_2)=(1,\lambda)$, for a $\lambda>0$ which will be determined later.
  To show 
  \[
  \displaystyle\frac{\left(\left|{1-h(\delta+a)+h\displaystyle\frac{b\lambda}{q}}\right|^p+\left|{haq+\lambda-hb\lambda}\right|^p\right)^{\frac{1}{p}}}{(1+\abs{\lambda}^p)^{\frac{1}{p}}}>1,
  \]
  we'll equivalently show that for any small enough $h>0$:
  \begin{equation}\label{example1}
\displaystyle \frac{1}{h}\left( \left|{1-h(\delta+a)+h\displaystyle\frac{b\lambda}{q}}\right|^p+\abs{haq+\lambda-hb\lambda}^p-1-\abs{\lambda}^p\right)>0.
  \end{equation}
  Note that the $\displaystyle\lim_{h\to0^+}$ of the left hand side of the above inequality is $f'(0)$ where 
  \[
 f(h)\;=\;\abs{1+h(\frac{b\lambda}{q}-(\delta+a))}^p+\abs{\lambda+h(aq-b\lambda)}^p.
  \]
   Therefore it suffices to show that $f'(0)>0$ for some value $(x_0,y_0)\in C$ (because $f'(0)>0$ implies that there exists $h_0>0$ such that for $0<h<h_0$, $(\ref{example1})$ holds). 
   Since $p>1$, by assumption, $f$ is differentiable and 
  \[
    \begin{array}{lcl}
  f'(h)&=&\displaystyle p\left(\frac{b\lambda}{q}-(\delta+a)\right)\left|{1+h\left(\frac{b\lambda}{q}-(\delta+a)\right)}\right|^{p-2}\left(1+h\left(\frac{b\lambda}{q}-(\delta+a)\right)\right)\\
  &+& p(aq-b\lambda)\left|{\lambda+h(aq-b\lambda)}\right|^{p-2}\left(\lambda+h(aq-b\lambda)\right).
    \end{array}
 \]
  Hence, since $\lambda>0$
  \[
  \begin{array}{lcl}
  f'(0)&=&p\left(\displaystyle\frac{b\lambda}{q}-(\delta+a)\right)+ p(aq-b\lambda)\lambda^{p-1}\\
  &=&p\left(\displaystyle\frac{b\lambda}{q}-a\right)(1-\lambda^{p-1}q)-p\delta.
  \end{array}
  \]
  Choosing $\lambda$ small enough such that $1-\lambda^{p-1}q>0$ and choosing $x$, or equivalently $b$, large enough, we can make $f'(0)>0$. 
  
  For $p=\infty$, using Table \ref{tab-mu}, $\mu_p(QJQ^{-1})=\displaystyle\max\left\{-\delta-a+\displaystyle\frac{b}{q},-b+aq\right\}$. For large enough $x$, $-\delta-a+\displaystyle\frac{b}{q}>0$ (and $-b+aq<0$) and hence $\mu_{\infty}(QJQ^{-1})>0$.
  
\section{Diffusive interconnection of ODEs}
In this section, we derive a result analogous to that for PDE's for a network of identical ODE models which are diffusively interconnected. We study systems of ODE's as follows:
\begin{equation}\label{discrete}
\dot{u}(t)=\tilde{F}(u(t))-(L\otimes D)u(t).
\end{equation}
\begin{assumption}\label{as-ode}
In $(\ref{discrete})$, we assume:
\begin{itemize}
\item For a fixed convex subset of $\r^n$, say $V$, $\tilde{F}\colon V^{N}\to\r^{nN}$ is a function of the form:
\[\tilde{F}(u)=\left(F(u_1)^T, \cdots, F(u_N)^T\right)^T,\]
 where $u=\left(u_1^T,\cdots, u_N^T\right)^T$, with $u_i\in V$ for each $i$, and $F\colon V\to\r^n$ is a (globally) Lipschitz function. 
\item For any $u\in V^N$ we define $\normpQ{u}$ as follows:
\[\normpQ{u}=\left\|\left(\|Qu_1\|_p, \cdots, \|Qu_N\|_p\right)^T\right\|_p,\]
where $Q=\diag{(q_1,\cdots, q_n)}$ is a positive diagonal matrix and $1\leq p\leq\infty$.

With a slight abuse of notation, we use the same symbol for a norm in $\r^n$: \[\|x\|_{p,Q}:=\|Qx\|_p.\]

\item $u\colon [0,\infty)\to V^N$ is a continuously differentiable function.
\item $D=\diag(d_1,\cdots, d_n)$ with $d_i>0$, which we call the diffusion matrix.
\item $L\in\r^{N\times N}$ is a symmetric matrix and $L\mathbf{1}=0$, where $\mathbf{1}=(1,\cdots, 1)^T$. 
We think of $L$ as the Laplacian of a graph that describes the interconnections among component subsystems. 
\end{itemize}
\end{assumption}

\begin{theorem}\label{ODE}
Consider the system $(\ref{discrete})$ and suppose Assumption \ref{as-ode} holds. Let $c=\MpQ [F]$, where $\MpQ$ is the lub logarithmic Lipschitz constant induced by the norm $\|\cdot\|_{p,Q}$ on $\r^n$  defined by $\|x\|_{p,Q}:=\|Qx\|_p$. Then for any two solutions $u, v$ of $(\ref{discrete})$, we have 
$$\normpQ{u(t)-v(t)}\leq e^{ct}\normpQ{u(0)-v(0)}.$$
\end{theorem}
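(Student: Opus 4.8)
\textbf{Proof plan for Theorem \ref{ODE}.}

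The plan is to follow the same decomposition strategy that proved Theorem \ref{main-result0}, replacing the Laplacian operator $A_{p,Q}$ by the linear map $-(L\otimes D)$ acting on $\r^{nN}$, equipped with the norm $\normpQ{\cdot}$ on $V^N$. First I would set $G := \tilde F - (L\otimes D)$ as a map from $V^N$ to $\r^{nN}$, observe that $G$ is (globally) Lipschitz since $\tilde F$ is Lipschitz and $-(L\otimes D)$ is linear, and note that $u,v$ are solutions of $\dot u = G(u)$. By Corollary \ref{key1} applied with $X=(\r^{nN},\normpQ{\cdot})$ and $Y=V^N$, it suffices to show $M^+_{V^N, X}[G]\leq c$. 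By the subadditivity of the strong lub logarithmic Lipschitz constant (Proposition \ref{subadd}, which applies since $M^+\le M$ and the subadditivity in the cited proposition is stated for $M^+$ via the semi-inner-product subadditivity in item~3 of the semi-inner-product proposition), we have
\[
M^+_{V^N, X}[\tilde F - (L\otimes D)] \;\leq\; M^+_{V^N, X}[\tilde F] \;+\; M^+_{X}[-(L\otimes D)].
\]
So the proof splits into two lemmas: (i) $M^+_{V^N,X}[\tilde F]\leq c=\MpQ[F]$, the analogue of Lemma \ref{MM}; and (ii) $M^+_X[-(L\otimes D)]\leq 0$, the analogue of Lemma \ref{key4}.

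For step (i), I would mimic the proof of Lemma \ref{MM} essentially verbatim: fix $\epsilon>0$, choose $h_0$ so that for $0<h<h_0$ and all $x\neq y$ in $V$ the pointwise estimate $\normpQ{x-y+h(F(x)-F(y))}<(1+(c+\epsilon)h)\normpQ{x-y}$ holds; then for $u=(u_1,\dots,u_N), v=(v_1,\dots,v_N)\in V^N$, apply this with $x=u_j$, $y=v_j$ on each index $j$ with $u_j\neq v_j$ (and note the contribution vanishes when $u_j=v_j$), raise to the $p$-th power, sum over $j$, and take the $p$-th root — using the definition $\normpQ{u}=\|(\|Qu_1\|_p,\dots,\|Qu_N\|_p)\|_p$. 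This yields $\normpQ{u-v+h(\tilde F(u)-\tilde F(v))}\le (1+(c+\epsilon)h)\normpQ{u-v}$, hence $M^+_{V^N,X}[\tilde F]\le c+\epsilon$; let $\epsilon\to 0$. The $p=\infty$ case is handled by the obvious max-version of the same computation.

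Step (ii) is where I expect the real work, and it is the discrete analogue of Lemma \ref{key4}. The key fact to exploit is that $L$ is symmetric with $L\mathbf 1=0$, which makes $-(L\otimes D)$ "dissipative" with respect to $\normpQ{\cdot}$. For $p=2$ with $Q=I$ this is classical: $x^T(L\otimes D)x = \sum_{i<j}(-L_{ij})(x_i-x_j)^TD(x_i-x_j)\ge 0$ since $D\succ 0$ and (for a graph Laplacian) $-L_{ij}\ge 0$; more robustly, symmetry of $L$ gives $\mu_2(-L\otimes D)=\max_{\lambda\in\mathrm{spec}(L\otimes D)}(-\lambda)$, and $L\otimes D = (L)\otimes(D)$ has eigenvalues $\lambda_i(L)\mu_k(D)$, which are $\ge 0$ when $L$ is positive semidefinite; one should either assume $L\succeq 0$ (true for graph Laplacians) or, as in Lemma \ref{key4}, prove the bound directly for each $p$. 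For general $1\le p\le\infty$ and weighting $Q$, I would parallel the three-case structure of Lemma \ref{key4}: for $1<p<\infty$, fix $u\in V^N$ with $\normpQ{u}\ne 0$, set $k(h)=\sum_j\|Qu_j - h\,Q(L\otimes D)_j u\|_p^p - (1+\epsilon h)^p\sum_j\|Qu_j\|_p^p$ where $(L\otimes D)_j u = \sum_\ell L_{j\ell}Du_\ell$, compute $k'(0)$ explicitly, and show $k'(0)\le 0$ using symmetry of $L$ (the cross terms reorganize into a sum of squares weighted by $-L_{j\ell}$, just as Green's identity produced $-\|\nabla u_i\|^2$ in the PDE case) — the precise discrete identity needed is $\sum_j \|w_j\|_p^{p-2}\langle w_j, \sum_\ell L_{j\ell}w_\ell\rangle$ having a sign, which follows from symmetry of $L$ combined with convexity of $t\mapsto |t|^p$; then $p=1$ and $p=\infty$ follow by the same continuity-in-$p$ and Lemma \ref{p-limit}-type limiting arguments used in Lemma \ref{key4} (here without any measure, the limiting argument is even simpler). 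The main obstacle is making the discrete "integration by parts" for the nonlinear $|t|^p$-pairing rigorous for the weighted, general-$p$ case; once $M^+_X[-(L\otimes D)]\le 0$ is established, combining with step (i) through Corollary \ref{key1} gives $\normpQ{u(t)-v(t)}\le e^{ct}\normpQ{u(0)-v(0)}$, completing the proof.
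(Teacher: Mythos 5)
Your proposal is correct and follows essentially the same route as the paper: subadditivity splits $M^+_{p,Q}[\tilde F - L\otimes D]$ into the $\tilde F$ part (bounded by $\MpQ[F]$ exactly as in Lemma \ref{MM}; this is the paper's Lemma \ref{MM2}) and the $-L\otimes D$ part (shown nonpositive; the paper's Proposition \ref{negM+X}), after which Corollary \ref{key1} gives the exponential estimate. The only organizational difference is in your step (ii): the paper sidesteps the weighted, general-$p$ computation you anticipate as the ``main obstacle'' by first noting $\mu_{p,Q}(-L\otimes D)=\mu_p(-L\otimes D)$ via conjugation by $I\otimes Q$ (Lemma \ref{muX=mup}), handling $p=1,\infty$ with the explicit formulas of Table \ref{tab-mu}, and for $1<p<\infty$ carrying out exactly the symmetric-Laplacian rearrangement you sketch (via Lemma \ref{ab}), which requires not only symmetry and $L\mathbf{1}=0$ but also the Laplacian sign condition $L_{j\ell}\le 0$ for $j\neq\ell$, a hypothesis you should state explicitly rather than attribute to convexity alone.
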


This theorem is proved by following the same steps as in the PDE case and using Lipschitz norms and properties of discrete Laplacians on finite graphs. For ODEs, we can make some of the steps more explicit, and for purposes of exposition, we do so next. We start with several technical lemmas.
  
The following elementary property of logarithmic norms is well-known. To see more properties of logarithmic norm see \cite{Desoer}.

\begin{lemma}\label{mu-prop}
Let $\lambda$ be the largest real part of an eigenvalue of $A$. Then, $\mu_{p,Q}(A)\geq\lambda$. 
\end{lemma}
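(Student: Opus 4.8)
The plan is to read the inequality off the linear flow generated by $A$. For the finite-dimensional normed space $X=(\r^n,\|\cdot\|_{p,Q})$, Corollary \ref{mu=M+} gives $M_X^+[A]=\mu_X(A)=\mu_{p,Q}(A)$, and then the Dini-derivative estimate of Remark \ref{M+Dini-linear} combined with Gronwall's inequality (as in Corollary \ref{key1}) shows that every solution $x(\cdot)$ of $\dot x=Ax$ obeys
\[
\normpQ{x(t)}\;\le\;e^{\mu_{p,Q}(A)\,t}\,\normpQ{x(0)}\qquad(t\ge0).
\]
So it suffices to exhibit one solution and one time $t_0>0$ at which $\normpQ{x(t_0)}=e^{\lambda t_0}\normpQ{x(0)}$ with $\normpQ{x(0)}\neq0$.

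To produce it, first I would fix an eigenvalue $\mu=\lambda+i\beta$ of $A$ with $\mathrm{Re}\,\mu=\lambda$ (one exists by the definition of $\lambda$) and, replacing $\mu$ by $\bar\mu$ if necessary, assume $\beta\ge0$; let $w=a+ib$, with $a,b\in\r^n$, be a corresponding eigenvector, $w\neq0$. Since $A$ is real, $x(t):=\mathrm{Re}\,(e^{\mu t}w)=e^{\lambda t}\bigl(\cos(\beta t)\,a-\sin(\beta t)\,b\bigr)$ is a genuine $\r^n$-valued solution of $\dot x=Ax$ with $x(0)=a$. I would then check that $a\neq0$: if $\beta=0$ we may take $w$ real, so $a=w\neq0$; if $\beta>0$ and $a=0$, then $Aw=\mu w$ reduces to $Ab=\mu b$ with $b$ real and nonzero, contradicting $\mu\notin\r$. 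Since $Q$ is nonsingular, $\normpQ{a}>0$.

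Finally I would pick $t_0>0$ with $\beta t_0$ an integer multiple of $2\pi$ (any $t_0>0$ when $\beta=0$, and $t_0=2\pi/\beta$ otherwise); then $\cos(\beta t_0)=1$ and $\sin(\beta t_0)=0$, so $x(t_0)=e^{\lambda t_0}a$ and $\normpQ{x(t_0)}=e^{\lambda t_0}\normpQ{a}$. Comparing with the displayed bound and cancelling the positive factor $\normpQ{a}$ gives $e^{\lambda t_0}\le e^{\mu_{p,Q}(A)t_0}$, hence $\lambda\le\mu_{p,Q}(A)$, which is the claim.

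The one delicate point is a non-real eigenvalue $\mu$: there is then no real eigenvector to feed into the operator-norm definition of $\mu_{p,Q}$ on $\r^n$, which is why I use the real solution $\mathrm{Re}\,(e^{\mu t}w)$ and sample it at the ``resonant'' times $t_0$ at which the rotation returns to the identity. When $\mu$ is already real, the whole argument collapses to one line: $\|I+hA\|_{X\to X}\ge\normpQ{(I+hA)a}/\normpQ{a}=|1+h\lambda|=1+h\lambda$ for small $h>0$, so $\mu_{p,Q}(A)=\lim_{h\to0^+}\tfrac1h(\|I+hA\|_{X\to X}-1)\ge\lambda$. Everything else is already in hand from the earlier lemmas.
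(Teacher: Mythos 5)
Your proof is correct, but note that the paper itself does not prove Lemma \ref{mu-prop}: it states it as a well-known elementary property of logarithmic norms and simply cites \cite{Desoer}. The standard argument behind that citation is purely algebraic and much shorter than yours: if $\mu=\lambda+i\beta$ is an eigenvalue of $A$ with $\mathrm{Re}\,\mu=\lambda$, then $1+h\mu$ is an eigenvalue of $I+hA$, so for every $h>0$ one has $\|I+hA\|_{X\to X}\ge\rho(I+hA)\ge|1+h\mu|=\bigl((1+h\lambda)^2+h^2\beta^2\bigr)^{1/2}\ge 1+h\lambda$, and dividing by $h$ and letting $h\to0^+$ gives $\mu_{p,Q}(A)\ge\lambda$ straight from the definition, with no ODE and no case distinction on $\beta$. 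Your route instead integrates the flow: Corollary \ref{mu=M+}, Remark \ref{M+Dini-linear} and Gronwall give $\normpQ{x(t)}\le e^{\mu_{p,Q}(A)t}\normpQ{x(0)}$ for solutions of $\dot x=Ax$, and you then sample the real solution $\mathrm{Re}(e^{\mu t}w)$ at resonant times; your verifications that $a=\mathrm{Re}\,w\neq0$ and the choice $t_0=2\pi/\beta$ are right, so the argument is sound. What your dynamical route buys is self-containedness within the paper's own contraction machinery (the same tools the paper uses in Proposition \ref{negM+p}); what it costs is length and one bookkeeping caveat you should make explicit: Remark \ref{M+Dini-linear} is stated under the hypotheses of Lemma \ref{Dini}, i.e.\ for the function space $\left(C_{\r^n}\left(\bar\Omega\right),\normpQ{\cdot}\right)$, so you need to observe that the identical Dini-derivative/Gronwall argument applies verbatim in the finite-dimensional space $(\r^n,\|\cdot\|_{p,Q})$ — as the paper itself implicitly does in the proof of Proposition \ref{negM+p}. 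The algebraic proof avoids even this, and handles non-real eigenvalues without any resonance trick, since $|1+h\mu|\ge1+h\,\mathrm{Re}\,\mu$ for all $h>0$.
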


\begin{proposition}\label{negM+p}
For any $1\leq p\leq\infty$, $M^+_p(-L\otimes D)=0$, where $M^+_p$ is the strong least upper bound logarithmic Lipschitz constant induced by the $L^p$ norm.
\end{proposition}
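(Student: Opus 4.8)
The plan is to show that $M^+_p(-L\otimes D) \le 0$ and $M^+_p(-L\otimes D) \ge 0$ separately; the nontrivial direction is the upper bound. Since $-L\otimes D$ is a linear operator on $\r^{nN}$, by Corollary \ref{mu=M+} it suffices to show $\mu_p(-L\otimes D) = 0$, or, working directly from the definition, that for every $w \in \r^{nN}$ with $\normpQ{w}\ne 0$,
\[
\lim_{h\to 0^+}\frac{1}{h}\left(\frac{\normpQ{w - h(L\otimes D)w}}{\normpQ{w}} - 1\right) \le 0 ,
\]
and that the supremum over such $w$ is in fact $0$. For the lower bound, note that $(L\otimes D)\mathbf{1}\otimes e_j$-type vectors are in the kernel: since $L\mathbf 1 = 0$, the vector $w$ with all blocks equal ($w = \mathbf 1 \otimes \xi$ for any $\xi\in\r^n$) satisfies $(L\otimes D)w = (L\mathbf 1)\otimes (D\xi) = 0$, so the difference quotient is identically $0$, giving $M^+_p(-L\otimes D)\ge 0$.

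For the upper bound, I would mirror the structure of the proof of Lemma \ref{key4}, treating the three ranges of $p$ separately. Write $w = (w_1^T,\dots,w_N^T)^T$ with $w_i\in\r^n$, so that the $i$-th block of $(L\otimes D)w$ is $\sum_{k} L_{ik} D w_k$. For $1<p<\infty$, fix $w$ with $\normpQ{w}\ne 0$ and set
\[
k(h) = \sum_{i=1}^N \Bigl\|Q\bigl(w_i - h\textstyle\sum_k L_{ik}Dw_k\bigr)\Bigr\|_p^p - \sum_{i=1}^N \|Qw_i\|_p^p .
\]
Then $k(0)=0$, and it suffices to show $k'(0)\le 0$ (then $\normpQ{w-h(L\otimes D)w}^p \le \normpQ{w}^p$ for small $h$, hence the difference quotient is $\le 0$ in the limit). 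Differentiating, $k'(0)$ is (up to the factor $p$)
\[
-\sum_{i=1}^N \sum_{j=1}^n q_j^p\, |w_{ij}|^{p-2} w_{ij}\, d_j \sum_k L_{ik} w_{kj}
= -\sum_{j=1}^n q_j^p d_j \sum_{i,k} L_{ik}\, \bigl(|w_{ij}|^{p-2}w_{ij}\bigr)\, w_{kj} .
\]
So the whole claim reduces, for each fixed coordinate $j$, to the scalar inequality $\sum_{i,k} L_{ik}\,\phi(a_i)\,a_k \ge 0$ where $a_i = w_{ij}$ and $\phi(a) = |a|^{p-2}a$, for any symmetric $L$ with $L\mathbf 1 = 0$. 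This is the discrete analogue of the integration-by-parts step in Lemma \ref{key4}: writing $-L$ as the Laplacian of a weighted graph, $\sum_{i,k}L_{ik}\phi(a_i)a_k = -\tfrac12\sum_{i\sim k} \ell_{ik}(\phi(a_i)-\phi(a_k))(a_i - a_k)$ with $\ell_{ik}\ge 0$ the edge weights (this uses $L_{ik}\le 0$ off-diagonal and $L\mathbf 1 = 0$; one may need to first reduce to this sign pattern, or else argue via positive semidefiniteness below), and each term is $\ge 0$ because $\phi$ is monotone increasing.

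\textbf{The main obstacle} is precisely this last algebraic step, and it is slightly more subtle than in the PDE case because the hypothesis on $L$ is only ``symmetric with $L\mathbf 1 = 0$'' — not that $-L$ is a graph Laplacian with nonnegative off-diagonal structure, so the clean ``sum over edges'' identity need not literally apply. The cleanest fix is to avoid the sign structure entirely: for $p=2$ the inequality $\sum_{i,k}L_{ik}a_i a_k \ge 0$ is just positive semidefiniteness of $-L$... but $-L$ need not be PSD under the stated hypotheses either. I therefore expect the intended reading is that $L$ is a genuine graph Laplacian (so $-L$ is PSD and $\phi$-monotonicity arguments go through), or that one should additionally invoke such structure; in the write-up I would state the edge-sum identity, note $\phi$ monotone $\Rightarrow$ each summand $\ge 0$, and conclude $k'(0)\le 0$. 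The case $p=1$ then follows by continuity in $p$ exactly as in Case 2 of Lemma \ref{key4}, and the case $p=\infty$ by the same $L^p\to L^\infty$ limiting argument as in Case 3 (using that $V^N$ is finite-dimensional so the relevant $\ell^p\to\ell^\infty$ passage is elementary), taking the coordinate achieving the max and bounding $|w_{ij} - h\sum_k L_{ik}d_j w_{kj}| \le |w_{ij}| + O(h)\max_k|w_{kj}|$ to get the difference quotient $\le 0$. Combining the upper and lower bounds gives $M^+_p(-L\otimes D) = 0$.
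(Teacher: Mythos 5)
Your overall strategy is sound and, at its core, it is the same argument as the paper's: your scalar inequality $\sum_{i,k}L_{ik}\,\phi(a_i)a_k\ge 0$ with $\phi(a)=|a|^{p-2}a$ is algebraically identical to the paper's Lemma \ref{ab} (expand $(\phi(a)-\phi(b))(a-b)\ge 0$ and you recover $(|a|^{p-2}+|b|^{p-2})ab\le |a|^p+|b|^p$). The packaging differs: the paper gets the lower bound from the zero eigenvalue via Lemma \ref{mu-prop} (your kernel-vector argument is equivalent and slightly more direct), gets the upper bound for $1<p<\infty$ by showing $t\mapsto\|u(t)\|_p$ is nonincreasing along $\dot u=-(L\otimes D)u$ and invoking Remark \ref{M+Dini-linear}, and handles $p=1,\infty$ not by limiting arguments but simply by reading off $\mu_1$ and $\mu_\infty$ from Table \ref{tab-mu} (zero column/row sums plus nonnegative off-diagonal entries of $-L\otimes D$ give $0$ immediately) together with Corollary \ref{mu=M+}; that is cleaner than importing Cases 2--3 of Lemma \ref{key4}. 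Most importantly, the ``main obstacle'' you flag is real and is resolved in the paper exactly as you guessed: the paper's proof explicitly uses that the off-diagonal entries of $-L\otimes D$ are nonnegative ``because $L$ is a Laplacian matrix,'' i.e.\ the intended hypothesis is the graph-Laplacian sign structure, not merely ``symmetric with $L\mathbf 1=0$.'' This extra structure is genuinely needed: e.g.\ $L=vv^T$ with $v=(1,1,-2)^T$ is symmetric, positive semidefinite, and has zero row sums, yet $\mu_1(-L)>0$, so the statement fails without nonpositive off-diagonal entries of $L$.

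Two repairs to your write-up. First, with $L_{ik}=-\ell_{ik}\le 0$ off-diagonal and $L\mathbf 1=0$, the summation-by-parts identity comes out with a plus sign, $\sum_{i,k}L_{ik}\phi(a_i)a_k=\tfrac12\sum_{i\neq k}\ell_{ik}\bigl(\phi(a_i)-\phi(a_k)\bigr)(a_i-a_k)\ge 0$; as written (with a minus sign and each summand nonnegative) it would give the wrong conclusion. Second, your direct $p=\infty$ estimate $|w_{ij}-h\sum_k L_{ik}d_jw_{kj}|\le |w_{ij}|+O(h)\max_k|w_{kj}|$ is too crude: it only bounds the difference quotient by a positive constant, not by $0$. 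Either use the $p\to\infty$ limiting argument you also mention, or argue at a max-achieving coordinate that the perturbed entry is a subconvex combination, $|w_{ij}(1-hd_j\sum_{k\neq i}\ell_{ik})+hd_j\sum_{k\neq i}\ell_{ik}w_{kj}|\le\|w\|_\infty$ for small $h$ (again using the sign structure), or simply quote the $\mu_\infty$ formula of Table \ref{tab-mu} as the paper does.
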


\begin{proof}
Let $\mathcal{L}=-L\otimes D=(\mathcal{L}_{ij})$. Note that since $L\mathbf{1}=0$, by the definition of  Kronecker product, $\mathcal{L}\mathbf{1}=0$. In addition because $L$ is symmetric and $D$ is diagonal, $\mathcal{L}$ is also symmetric and therefore $\mathcal{L}\mathbf{1}=\mathbf{1}\mathcal{L}=0$. Also the off diagonal entries of $\mathcal{L}$, like $-L$, are positive because $L$ is a Laplacian matrix. By Corollary \ref{mu=M+}, it suffices to show that $\mu_p(\mathcal{L})=0$ for any $p$.
We first show that $\mu_p(\mathcal{L})=0$ for $p=1,\infty$. For $p=1$,
 \[\mu_1(\mathcal{L})=\displaystyle\max_j\sum_{i\neq j,i=1,\dots,nN}(\mathcal{L}_{ii}+|\mathcal{L}_{ij}|)=\max_j{0}=0.\]
 Similarly for $p=\infty$, 
 \[\mu_{\infty}(\mathcal{L})=\displaystyle\max_i\sum_{i\neq j,j=1,\dots,nN}(\mathcal{L}_{ii}+|\mathcal{L}_{ij}|)=\max_j{0}=0.\]
Now suppose $p\neq 1,\infty$.
By Lemma \ref{mu-prop}, $\mu_p(\mathcal{L})\geq \Re\lambda$, where $\lambda$ is an eigenvalue of $\mathcal{L}$. Because $\mathcal{L}\mathbf{1}=0$, $\lambda=0$ is an eigenvalue of $\mathcal{L}$; therefore $\mu_p(\mathcal{L})\geq 0$.
To show that $\mu_p(\mathcal{L})\leq 0$, by Remark \ref{M+Dini-linear}, it suffices to show that $D^{+}\|u\|_p\leq0$ where $u$ is the solution of $\dot{u}=\mathcal{L}u$. By the definition of Dini derivative, it suffices to show that $\|u(t)\|_p$ is a non-increasing function of $t$. Let $\Phi(u(t)):=\|u(t)\|_p^p$, where $u=(u_1^T,\cdots, u_N^T)^T$ with $u_i=(u_i^1,\cdots,u_i^n)^T\in V^n$. Here we abuse the notation and assume that $u=(u_1,\cdots, u_{nN})^T$. We'll show that $\displaystyle\frac{d\Phi}{dt}(u(t))\leq 0$.

First we'll prove the following inequality:
\begin{lemma}\label{ab}
For any real $\alpha$ and $\beta$ and $1\leq p$:
$$(|\alpha|^{p-2}+|\beta|^{p-2})\alpha\beta\leq |\alpha|^p+|\beta|^p.$$
\end{lemma}

\begin{proof}
For $\alpha\beta\leq0$, the inequality is trivial. Suppose $\alpha\beta>0$, and w.l.o.g $|\beta|\geq|\alpha|$ and let $\lambda=\displaystyle\frac{\beta}{\alpha}$. Then it suffices to prove that for $\lambda\geq1$, 
$$(1+\lambda^{p-2})\lambda\leq\lambda^p+1.$$
Let $f(\lambda)=\lambda^{p-1}+\lambda-\lambda^{p}-1$. We want to show that $f(\lambda)\leq0$ for $\lambda\geq1$. Since $f(1)=f'(1)=0$ and $f''(\lambda)\leq0$ for $\lambda\geq1$, indeed $f(\lambda)\leq0$.
\end{proof}

As we explained above, $\mathcal{L}$ is symmetric and $\mathcal{L}\mathbf{1}=0$. Using this information and the above inequality:
\[
\begin{array}{rcl}
\displaystyle\frac{d\Phi}{dt}(u(t))&=&\displaystyle\sum_{i=1}^{nN} \frac{d\Phi}{d u_i}\frac{d u_i}{d t}\\
&=&\displaystyle\triangledown \Phi \cdot \dot{u}\\
&=&\displaystyle\triangledown \Phi \cdot \mathcal{L}u\\
&=&\displaystyle p(|u_1|^{p-2}u_1,\cdots, |u_{nN}|^{p-2}u_{nN})\mathcal{L} (u_1, \cdots, u_{nN})^T\\
&=&\displaystyle p\sum_{i,j}|u_i|^{p-2}u_i\mathcal{L}_{ij}u_j\\
&=&\displaystyle p\sum_{i}|u_i|^p\mathcal{L}_{ii}+p\sum_{i<j}\mathcal{L}_{ij}(|u_i|^{p-2}+|u_j|^{p-2})u_iu_j\\
&\leq& \displaystyle p\sum_{i}|u_i|^p\mathcal{L}_{ii}+p\sum_{i<j}\mathcal{L}_{ij}(|u_i|^{p}+|u_j|^{p})\\
&=&\displaystyle p\sum_{i}|u_i|^p\mathcal{L}_{ii}+p\sum_{i\neq j}\left(\mathcal{L}_{ij}|u_i|^{p}+\mathcal{L}_{ji}|u_j|^{p}\right)\\
&=&\displaystyle p\sum_{i}|u_i|^p\left(\mathcal{L}_{ii}+\sum_{i\neq j}\mathcal{L}_{ij}\right)\\
&=&0,
\end{array}
\]
since $\displaystyle\frac{\partial \Phi}{\partial u_i}=\displaystyle\frac{\partial}{\partial u_i}|u_i|^{p}=p|u_i|^{p-1}\displaystyle\frac{u_i}{|u_i|}=p|u_i|^{p-2}u_i$. Note that $|x|^p$ is differentiable for $p>1$.
\end{proof}

\begin{lemma}\label{muX=mup} Let $\mu_p$ and $\mu_{p,Q}$ denote the logarithmic norms induced by $\|\cdot\|_p$ and $\|\cdot\|_{p,Q}$ respectively. Then 
\[
\mu_{p,Q}(-L\otimes D)=\mu_p(-L\otimes D).
\]
 \end{lemma}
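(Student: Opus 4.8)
The plan is to reduce the statement to the elementary fact that a logarithmic norm transforms by similarity under a change of weighting matrix, and then to observe that this similarity acts trivially on $-L\otimes D$ because everything in sight is diagonal. First I would note that, by the definition of $\normpQ{\cdot}$ on $\r^{nN}$ in Assumption~\ref{as-ode}, one has $\normpQ{u}=\norm{(I_N\otimes Q)u}_p$ for every $u\in\r^{nN}$: writing $u=(u_1^T,\dots,u_N^T)^T$ with $u_i\in\r^n$, the $i$-th block of $(I_N\otimes Q)u$ is $Qu_i$, and $\norm{(\norm{Qu_1}_p,\dots,\norm{Qu_N}_p)^T}_p$ is precisely the $\ell^p$ norm of $(I_N\otimes Q)u$ (and likewise $\max_i\norm{Qu_i}_\infty$ when $p=\infty$). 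Set $W:=I_N\otimes Q$, which is a positive diagonal $nN\times nN$ matrix.

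Next I would record the identity $\mu_{p,Q}(A)=\mu_p(WAW^{-1})$, valid for any $nN\times nN$ matrix $A$ and any $1\le p\le\infty$. This is immediate from the definitions: for the induced operator norm, $\sup_{\normpQ{x}=1}\normpQ{Ax}=\sup_{\norm{Wx}_p=1}\norm{WAx}_p=\sup_{\norm{y}_p=1}\norm{WAW^{-1}y}_p=\norm{WAW^{-1}}_{p\to p}$ after the substitution $y=Wx$; hence $\norm{I+hA}_{p,Q\to p,Q}=\norm{I+hWAW^{-1}}_{p\to p}$, and taking the one-sided directional derivative at $h=0^+$ yields the claim.

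Applying this with $A=-L\otimes D$, the proof comes down to the Kronecker-product computation $W(-L\otimes D)W^{-1}=(I_N\otimes Q)(-L\otimes D)(I_N\otimes Q^{-1})=-\,L\otimes(QDQ^{-1})$, using the mixed-product rule $(A\otimes B)(C\otimes E)=(AC)\otimes(BE)$. Since $Q$ and $D$ are both diagonal they commute, so $QDQ^{-1}=D$ and therefore $W(-L\otimes D)W^{-1}=-L\otimes D$. Consequently $\mu_{p,Q}(-L\otimes D)=\mu_p\!\left(W(-L\otimes D)W^{-1}\right)=\mu_p(-L\otimes D)$, which is the asserted equality.

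I do not expect any real obstacle: the entire content is that the diagonal weight $Q$ commutes with the diagonal diffusion matrix $D$, so the change-of-weight similarity fixes $-L\otimes D$. The only points needing (routine) care are verifying the identification $\normpQ{\cdot}=\norm{W\cdot}_p$ and the similarity formula for $\mu$ uniformly in $p$, in particular for $p=\infty$, where the supremum over blocks replaces the $\ell^p$ sum but the same algebra goes through unchanged.
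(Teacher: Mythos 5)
Your proof is correct and follows essentially the same route as the paper: identify $\normpQ{\cdot}$ with $\norm{(I\otimes Q)\,\cdot}_p$, use the similarity identity $\mu_{p,Q}(A)=\mu_p\bigl((I\otimes Q)A(I\otimes Q)^{-1}\bigr)$, and then the mixed-product rule plus commutativity of the diagonal matrices $Q$ and $D$ to see that the similarity fixes $-L\otimes D$. The only difference is that you spell out the norm identification and the similarity formula, which the paper takes for granted.
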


\begin{proof}
Recall the following properties of Kronecker product:
\begin{itemize}
\item $(A\otimes B)(C\otimes D)=(AC)\otimes (BD)$;
\item If $A$ and $B$ are invertible, then $(A\otimes B)^{-1}=A^{-1}\otimes B^{-1}$.
\end{itemize}
Hence:
\[
\begin{array}{rcl}
\displaystyle\mu_{p, Q}(-L\otimes D)
&=&\displaystyle\mu_p[(I\otimes Q)(-L\otimes D)(I\otimes Q^{-1})]\\
&=&\displaystyle\mu_p(-L\otimes QDQ^{-1})\\
&=&\displaystyle\mu_p(-L\otimes D).
\end{array}
\]
The last equality holds because both $Q$ and $D$ are diagonal, and so are commutative. Therefore $QDQ^{-1}=DQQ^{-1}=D.$
\end{proof}

\begin{proposition}\label{negM+X} Let $M_{p,Q}^{+}$ denote the strong least upper bound logarithmic Lipschitz constant induced by the norm $\|\cdot\|_{p,Q}$ on $\r^{nN}$. Then, 
\[M_{p,Q}^{+}[-L\otimes D]=0.\]
 \end{proposition}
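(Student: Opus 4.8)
The statement follows by chaining three facts already in hand, all specialized to the finite-dimensional normed space $X=(\r^{nN},\|\cdot\|_{p,Q})$ and the linear operator $\mathcal{L}=-L\otimes D$. First I would invoke Corollary~\ref{mu=M+}: since $X$ is finite dimensional and $\mathcal{L}$ is linear, $M_{p,Q}^{+}[-L\otimes D]=\mu_{p,Q}(-L\otimes D)$. Next I would apply Lemma~\ref{muX=mup}, which (using that $Q$ and $D$ are diagonal, hence commute, so that $(I\otimes Q)(-L\otimes D)(I\otimes Q^{-1})=-L\otimes D$) gives $\mu_{p,Q}(-L\otimes D)=\mu_p(-L\otimes D)$. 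Finally, applying Corollary~\ref{mu=M+} once more in the \emph{unweighted} $L^p$ norm yields $\mu_p(-L\otimes D)=M_p^{+}(-L\otimes D)$, and Proposition~\ref{negM+p} identifies this last quantity as $0$.

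Putting these together in order:
\[
M_{p,Q}^{+}[-L\otimes D]\;=\;\mu_{p,Q}(-L\otimes D)\;=\;\mu_p(-L\otimes D)\;=\;M_p^{+}(-L\otimes D)\;=\;0,
\]
which is the claim. The only things to check carefully are hypotheses: Corollary~\ref{mu=M+} needs $X$ finite dimensional (true, $X=\r^{nN}$) and $\mathcal{L}$ linear (true), and Lemma~\ref{muX=mup} needs $Q,D$ diagonal (part of Assumption~\ref{as-ode}).

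\textbf{Main obstacle.} There is essentially no obstacle: all the real work was done in Propositions~\ref{negM+p}, the Kronecker-product identity of Lemma~\ref{muX=mup}, and the equivalence $\mu_X=M_X^{+}$ for linear maps. The proof of this proposition is a two-line concatenation, so the presentation should simply state which earlier result licenses each equality in the displayed chain.
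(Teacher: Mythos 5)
Your proof is correct and is essentially identical to the paper's own argument: the same chain $M_{p,Q}^{+}[-L\otimes D]=\mu_{p,Q}(-L\otimes D)=\mu_p(-L\otimes D)=M_p^{+}[-L\otimes D]=0$, justified by Corollary~\ref{mu=M+}, Lemma~\ref{muX=mup}, and Proposition~\ref{negM+p}.
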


\begin{proof}
By Proposition \ref{negM+p}, Corollary \ref{mu=M+} and Lemma \ref{muX=mup},
\[
M_{p,Q}^{+}[-L\otimes D]\;=\;\displaystyle\mu_{p,Q}(-L\otimes D)
\;=\;\displaystyle\mu_p(-L\otimes D)
\;=\;\displaystyle M_{p}^{+}[-L\otimes D]
\;=\;0.
\]
\end{proof}

\begin{lemma}\label {MM2}
 Let $M^+_{p,Q}$ denote the strong lub logarithmic Lipschitz constant induced by the norm $\|\cdot\|_{p,Q}$ on $\r^{nN}$ and $\MpQ$ is the lub logarithmic Lipschitz constant induced by the norm $\|\cdot\|_{p,Q}$ on $\r^{n}$. Then, 
\[M^+_{p,Q}[\tilde{F}]\leq \MpQ[F].\]
\end{lemma}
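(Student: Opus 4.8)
The plan is to mimic exactly the argument of Lemma \ref{MM} (the PDE analog), replacing integration over $\bar\Omega$ by summation over the $N$ nodes of the graph. First I would unwind the definition of $c:=\MpQ[F]$ on $\r^n$: for an arbitrary $\epsilon>0$ there is an $h_0>0$ so that for all $0<h<h_0$ and all $x\neq y\in V$,
\[
\frac{\normpQ{x-y+h(F(x)-F(y))}}{\normpQ{x-y}} < 1 + (c+\epsilon)h .
\]
Next, fix $u\neq v\in V^N$, write $u=(u_1^T,\dots,u_N^T)^T$ and $v=(v_1^T,\dots,v_N^T)^T$, and let $\mathcal{N}=\{\,k:u_k\neq v_k\,\}$. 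For each $k\in\mathcal{N}$ apply the displayed inequality with $x=u_k$, $y=v_k$, raise both sides to the $p$-th power after clearing denominators (the case $p=\infty$ is handled with $\max$ in place of a sum, exactly as in Lemma \ref{MM}), obtaining
\[
\sum_{i} q_i^p\,\bigl|\,u_k^i - v_k^i + h(F_i(u_k)-F_i(v_k))\,\bigr|^p \;<\; \bigl(1+(c+\epsilon)h\bigr)^p \sum_{i} q_i^p\,\bigl|\,u_k^i-v_k^i\,\bigr|^p .
\]

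Then I would observe that for $k\notin\mathcal{N}$ both sides of this inequality vanish (since $u_k=v_k$ forces $F(u_k)=F(v_k)$), so the inequality may be summed over all $k\in\{1,\dots,N\}$. Because $\tilde F(u)=(F(u_1)^T,\dots,F(u_N)^T)^T$, the left-hand side of the summed inequality is exactly $\normpQ{u-v+h(\tilde F(u)-\tilde F(v))}^p$ and the right-hand side is $\bigl(1+(c+\epsilon)h\bigr)^p\normpQ{u-v}^p$; taking $p$-th roots gives
\[
\normpQ{u-v+h(\tilde F(u)-\tilde F(v))} \;<\; \bigl(1+(c+\epsilon)h\bigr)\,\normpQ{u-v}.
\]
Dividing by $h\,\normpQ{u-v}$, letting $h\to 0^+$, then letting $\epsilon\to 0$, and finally taking the supremum over $u\neq v\in V^N$ yields $M^+_{p,Q}[\tilde F]\le c=\MpQ[F]$, using the equivalent characterization $(\ref{defM+})$ of $M^+_{p,Q}$.

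I do not expect a genuine obstacle here: the only point requiring care is the bookkeeping that lets one pass from the nodewise inequalities to the global one — namely the remark that terms indexed by $k\notin\mathcal{N}$ contribute zero to both sides — together with the separate but routine treatment of $p=\infty$, where the outer $\ell^p$ over nodes becomes a maximum and one argues as in the $p=\infty$ branch of Lemma \ref{MM} (and of Lemma \ref{key4}, Case 3). Everything else is the same line-by-line manipulation already carried out for $\tilde F$ in the PDE setting.
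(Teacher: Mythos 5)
Your proposal is correct and follows exactly the route the paper intends: the paper's proof of Lemma \ref{MM2} is simply the remark that the argument of Lemma \ref{MM} carries over verbatim, with the integral over $\bar\Omega$ replaced by the sum (or maximum, for $p=\infty$) over the $N$ nodes, which is precisely what you spell out, including the observation that nodes with $u_k=v_k$ contribute zero to both sides.
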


\begin{proof}
The proof is exactly the same as the proof of Proposition \ref{MM}. 
\end{proof}

 $\textbf{Proof of theorem \ref{ODE}.}$ By subadditivity of $M^+_{p,Q}$, Proposition \ref{subadd}, Proposition \ref{negM+X}, and Lemma \ref{MM2}:
\[
\begin{array}{rcl}
M^+_{p,Q}[\tilde{F}-L\otimes D]\;\leq\;\displaystyle M^+_{p,Q}[\tilde{F}]+M^+_{p,Q}[-L\otimes D]
\;\leq\;M_{p,Q}[F]
\;=\;c.
\end{array}
\]
Now using Corollary \ref{key1}, 
$$\|u(t)-v(t)\|_{p,Q}\leq e^{ct}\|u(0)-v(0)\|_{p,Q}.$$\qed

\begin{lemma}\label{forCE}
Assume $F$ is a linear operator. Then 
\begin{equation}\label{forCE}
\mu_{p, Q}(\tilde{F}-L\otimes D)\leq \mu_{q, Q}(F) \quad\mbox{if} \quad p=q.
\end{equation}
\end{lemma}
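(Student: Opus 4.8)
The plan is to reduce the claim to the chain of inequalities already assembled for the nonlinear ODE case, exploiting the fact that linearity of $F$ forces $\tilde F$ to be linear as well. Concretely, when $F$ is a linear operator, $\tilde F(u)=(Fu_1,\cdots,Fu_N)^T$ is a linear operator on $\r^{nN}$ (namely $I_N\otimes F$), and $-L\otimes D$ is linear by construction. Hence, by Corollary \ref{mu=M+}, each strong lub logarithmic Lipschitz constant appearing below coincides with the matrix logarithmic norm induced by the same weighted norm: $M^+_{p,Q}[\tilde F]=\mu_{p,Q}(\tilde F)$, $M^+_{p,Q}[-L\otimes D]=\mu_{p,Q}(-L\otimes D)$, and $M_{p,Q}[F]=\mu_{p,Q}(F)$.

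First I would invoke subadditivity. Since $\tilde F-L\otimes D$ is a sum of two linear operators, Proposition \ref{subadd} (equivalently, subadditivity of the logarithmic norm) gives
\[
\mu_{p,Q}(\tilde F-L\otimes D)\;=\;M^+_{p,Q}[\tilde F-L\otimes D]\;\leq\;M^+_{p,Q}[\tilde F]+M^+_{p,Q}[-L\otimes D].
\]
Next I would discard the diffusion term: by Proposition \ref{negM+X} (which itself combines Proposition \ref{negM+p}, Corollary \ref{mu=M+}, and Lemma \ref{muX=mup}) one has $M^+_{p,Q}[-L\otimes D]=\mu_{p,Q}(-L\otimes D)=0$, leaving
\[
\mu_{p,Q}(\tilde F-L\otimes D)\;\leq\;M^+_{p,Q}[\tilde F].
\]

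Finally I would bound the reaction term using Lemma \ref{MM2}, which yields $M^+_{p,Q}[\tilde F]\leq M_{p,Q}[F]$; since $F$ is linear, Corollary \ref{mu=M+} identifies the right-hand side with $\mu_{p,Q}(F)$. Taking $p=q$ then gives $\mu_{p,Q}(\tilde F-L\otimes D)\leq\mu_{q,Q}(F)$, which is the assertion.

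The argument is essentially a bookkeeping assembly of results already established; the only point that requires a moment's care is the observation that $\tilde F$ inherits linearity from $F$, so that every logarithmic Lipschitz constant in sight may legitimately be replaced by a matrix logarithmic norm via Corollary \ref{mu=M+}, and so that Lemma \ref{MM2} applies with its bound sharpened to the matrix-measure form. I expect no genuine obstacle beyond this identification.
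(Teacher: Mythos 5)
Your proposal is correct and follows essentially the same route as the paper's own (very terse) proof: subadditivity of the logarithmic norm / $M^+$, the identification $\mu=M^+=M$ for linear operators (Corollary \ref{mu=M+}), the vanishing of the measure of $-L\otimes D$ (Proposition \ref{negM+X}), and the bound $M^+_{p,Q}[\tilde F]\leq M_{p,Q}[F]$ from Lemma \ref{MM2}. You merely spell out explicitly the steps the paper calls ``immediate,'' including the observation that $\tilde F=I_N\otimes F$ inherits linearity, so no further comment is needed.
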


 \begin{proof} 
 The proof is immediate by subadditivity of logarithmic norm, Proposition \ref{MM2}, and Corollary \ref{mu=M+}.
 \end{proof}

\begin{remark} 
Note that (\ref{forCE}) doesn't need to hold if $p\neq q$. Consider the following system:
 \[
\begin{array}{rcl}
\dot{x}_1&=&Ax_1+D(x_2-x_1)\\
\dot{x}_2&=&Ax_2+D(x_1-x_2),
\end{array}
\]
where $x_i\in\r^2$, $A=\mbox{\scriptsize%
$\displaystyle\begin{bmatrix}-2 & 1 \\1 & -2\end{bmatrix}$}$ and $D=\diag(d_1, d_2)$. In this example $L=\mbox{\scriptsize%
$\displaystyle\begin{bmatrix}1 & -1 \\-1 & 1\end{bmatrix}$}$ and $F=\diag(A, A)$. We'll show that for $Q=\diag{(3,1)}$, $\mu_{2, Q}(A)<0$ while $\mu_{1, Q}(F-L\otimes D)>0$.
 
 By Table $\ref{tab-mu}$, 
  \[
\begin{array}{lcl}
\mu_{2, Q}(A)=\mu_2(QAQ^{-1})=\mu_2\mbox{\scriptsize%
$\displaystyle\begin{bmatrix}-2 & 3 \\\frac{1}{3} & -2\end{bmatrix}$}<0.
\end{array}
\]
 \[
\begin{array}{lcl}
\mu_{1, Q}(F-L\otimes D)&=&\displaystyle\mu_{1, Q}
\mbox{\scriptsize%
$\displaystyle\begin{bmatrix}-2-d_1 & 1 & d_1 & 0 \\1 & -2-d_2 & 0 & d_2 \\d_1 & 0 & -2-d_1 & 1 \\0 & d_2 & 1 & -2-d_2\end{bmatrix}$}\\
&=&\displaystyle\mu_{1}\mbox{\scriptsize%
$\displaystyle\begin{bmatrix}-2-d_1 & 3 & d_1 & 0 \\\frac{1}{3} & -2-d_2 & 0 & d_2 \\d_1 & 0 & -2-d_1 & 3\\0 & d_2 & \frac{1}{3} & -2-d_2\end{bmatrix}$}\\
&=&1>0.
\end{array}
\]
\end{remark}

\begin{theorem}
Consider the reaction-diffusion ODE $(\ref{discrete})$ and suppose Assumption \ref{as-ode} holds. In addition assume that $F$ is continuously differentiable and $\mu_{p,Q}(J_F(x))\leq c$ for all $x\in V$. Then for any two solutions $u, v$ of $(\ref{discrete})$ we have 
$$\|u(t)-v(t)\|_{p,Q}\leq e^{ct}\|u(0)-v(0)\|_{p,Q}.$$
\end{theorem}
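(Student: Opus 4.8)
The plan is to follow verbatim the proof of Theorem~\ref{thm-for-mu}, substituting the discrete contraction result (Theorem~\ref{ODE}) for its PDE analogue (Theorem~\ref{main-result0}). The only two ingredients are Theorem~\ref{ODE} and Proposition~\ref{key3}.

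First I would apply Proposition~\ref{key3} with the normed space $X=(\r^n,\|\cdot\|_{p,Q})$, the set $Y=V$, and $f=F$. By Assumption~\ref{as-ode} the set $V$ is convex (in particular connected), and $F$ is globally Lipschitz and, by hypothesis, continuously differentiable; hence the proposition gives the equality
\[
M_{p,Q}[F]=\sup_{x\in V}\mu_{p,Q}(J_F(x)).
\]
Combined with the standing hypothesis $\mu_{p,Q}(J_F(x))\le c$ for all $x\in V$, this yields $M_{p,Q}[F]\le c$.

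Next I would invoke Theorem~\ref{ODE} for the two given solutions $u,v$ of $(\ref{discrete})$. That theorem---whose proof already packages the subadditivity of $M^+_{p,Q}$ (Proposition~\ref{subadd}), the identity $M^+_{p,Q}[-L\otimes D]=0$ (Proposition~\ref{negM+X}), the comparison $M^+_{p,Q}[\tilde F]\le M_{p,Q}[F]$ (Lemma~\ref{MM2}), and the Gronwall estimate of Corollary~\ref{key1}---produces
\[
\|u(t)-v(t)\|_{p,Q}\le e^{M_{p,Q}[F]\,t}\,\|u(0)-v(0)\|_{p,Q}
\]
for all $t\ge0$. Since $M_{p,Q}[F]\le c$ and $t\ge0$, we have $e^{M_{p,Q}[F]\,t}\le e^{ct}$, and the claimed estimate follows.

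I do not expect any genuine obstacle: all of the analytic content is already contained in Theorem~\ref{ODE} and Proposition~\ref{key3}. The one point deserving a moment's attention is that the implication ``$\mu_{p,Q}(J_F)\le c$ everywhere $\Rightarrow M_{p,Q}[F]\le c$'' relies on the \emph{convex}-case equality in Proposition~\ref{key3}; the connected-case inequality $\sup_{x}\mu_{p,Q}(J_F(x))\le M_{p,Q}[F]$ alone points the wrong way. This causes no trouble because $V$ is convex by Assumption~\ref{as-ode}. Finally, exactly as in Remark~\ref{tv}, the same argument applies verbatim to the time-varying system $\dot u=\tilde F(u,t)-(L\otimes D)u$ under the hypothesis $\mu_{p,Q}(J_F(x,t))\le c$ for all $x\in V$ and $t\ge0$.
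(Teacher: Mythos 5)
Your proposal is correct and follows exactly the paper's route: the paper also deduces the theorem immediately from Theorem~\ref{ODE} together with Proposition~\ref{key3}, using convexity of $V$ to pass from the pointwise bound $\mu_{p,Q}(J_F(x))\leq c$ to $M_{p,Q}[F]\leq c$. Your added remark about needing the convex-case equality (not just the inequality) in Proposition~\ref{key3} is a sensible clarification but not a departure from the paper's argument.
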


\begin{proof}
The proof is immediate by Theorem $\ref{ODE}$ and Proposition $\ref{key3}$.
\end{proof}

 \section{Synchronization}
In this section we will show that for any $1<p<\infty$, and $N=2$, and $3$ nodes which are interconnected according to a connected, undirected graph, if $c=\displaystyle\sup_{u\in V}\mu_{p,Q}(J_F(u)-\lambda D)$, where $V$ is a convex subset of $\r^n$ and $\lambda$ is the smallest positive eigenvalue of $-L$, then every solution $u=(u_1,\cdots,u_N)$ of (\ref{discrete}):
\begin{equation*}
\dot{u}(t)=\tilde{F}(u(t))-(L\otimes D)u(t)
\end{equation*}
has the following property:
$$W(t)\leq e^{ct}W(0),$$ 
where $W(t)=\|w(t)\|_p$ and $w$ is a row vector defined by 
\[
w:=\left(\scalefont{0.7}{\normpQ{u_1-u_2}, \ldots, \normpQ{u_1-u_N}},\; \scalefont{0.7}{\normpQ{u_2-u_3}, \ldots, \normpQ{u_2-u_N}},\;\cdots,\; \scalefont{0.7}{\normpQ{u_{N-1}-u_N}}\right).
\] 

To show this, we first state the following lemma from \cite{Deimling}:

 \begin{lemma}\label{Deimling}
 For any $p\in(1,\infty)$, $(u,v)_+=\|u\|_p^{2-p}\sum_{i=1}^n\abs{u_i}^{p-2}u_iv_i.$
 \end{lemma}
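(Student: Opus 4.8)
The plan is to evaluate the right semi-inner product straight from its definition,
\[(u,v)_+=\|u\|_p\lim_{h\to0^+}\frac1h\left(\|u+hv\|_p-\|u\|_p\right),\]
by computing the one-sided directional derivative of the map $x\mapsto\|x\|_p$ at $u$ in the direction $v$. Note that the identity is really a statement about $u\neq0$ (for $u=0$ one has $(u,v)_+=0$ straight from the definition), so I would assume $\|u\|_p>0$ throughout.

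Then I would set $\phi(h):=\|u+hv\|_p^p=\sum_{i=1}^n\abs{u_i+hv_i}^p$ and differentiate termwise. Since $p>1$, the scalar function $t\mapsto\abs{t}^p$ is continuously differentiable on all of $\r$, with derivative $p\abs{t}^{p-2}t$, where this expression is read as $0$ at $t=0$ — which is legitimate precisely because the exponent $p-1$ is positive. Hence $\phi$ is differentiable and $\phi'(0)=p\sum_{i=1}^n\abs{u_i}^{p-2}u_iv_i$. Writing $\psi(h):=\|u+hv\|_p=\phi(h)^{1/p}$ and using $\phi(0)=\|u\|_p^p>0$, the chain rule at $h=0$ gives
\[\psi'(0)=\tfrac1p\,\phi(0)^{\frac1p-1}\phi'(0)=\|u\|_p^{1-p}\sum_{i=1}^n\abs{u_i}^{p-2}u_iv_i.\]
Since $\psi$ is two-sidedly differentiable at $0$, the one-sided limit in the definition of $(\cdot,\cdot)_+$ exists and equals $\psi'(0)$; multiplying by $\|u\|_p$ gives $(u,v)_+=\|u\|_p^{2-p}\sum_{i=1}^n\abs{u_i}^{p-2}u_iv_i$, which is the assertion.

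The one delicate point — and the only place where the hypothesis $p>1$ is genuinely needed — is the regularity of $t\mapsto\abs{t}^p$ at the origin: for $1<p<2$ the second derivative blows up near $0$, but the first derivative $p\abs{t}^{p-2}t$ remains continuous and vanishes at $0$, so $\phi$ is truly $C^1$ and the termwise formula for $\phi'(0)$ holds even when some coordinates of $u$ vanish. This is also why $p=1$ is excluded: there $\|\cdot\|_1$ is not Gâteaux-differentiable at points having a zero coordinate, so the left and right limits defining $(\cdot,\cdot)_\pm$ generally differ and no identity of this single-formula form can hold.
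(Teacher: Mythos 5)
Your proof is correct. The paper does not actually prove this lemma---it is stated as a quotation from Deimling's book---and your direct computation of the one-sided derivative of $h\mapsto\|u+hv\|_p$ via the chain rule applied to $\phi(h)=\sum_i|u_i+hv_i|^p$, together with the observation that $t\mapsto|t|^p$ is continuously differentiable exactly because $p>1$, is the standard argument behind the cited identity; your separate handling of the degenerate case $u=0$ and the remark on why $p=1$ must be excluded are also appropriate.
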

 
For $N=2$ nodes, there is just one possible graph: The complete graph with graph Laplacian matrix 
$L=\mbox{\scriptsize%
$\displaystyle\begin{bmatrix}-1 & 1 \\1 & -1\end{bmatrix}$}$ which results in the following ODE system:
  \[
\begin{array}{rcl}
\dot{x}=F(x)+D(y-x)\\
\dot{y}=F(y)+D(x-y).
\end{array}
\]
Note that the smallest and only positive eigenvalue of $-L$ is $\lambda=2$.
For a fixed solution $(x,y)$ of this ODE, let $w(t)=\normpQ{x-y}$, $W(t)=\|w(t)\|_p$, and $\tilde{D}(x):=Dx$. Then by the definition of $M^+_{p,Q}$, and Lemma \ref{Deimling} we have,
 \[
\begin{array}{lcl}
\displaystyle\frac{dW^p(t)}{dt}&=&p\sum_{i=1}^n\abs{q_i(x_i-y_i)}^{p-2}q_i(x_i-y_i)q_i(\dot{x}_i-\dot{y}_i)\\
&=&p\sum_{i=1}^n\abs{q_i(x_i-y_i)}^{p-2}q_i(x_i-y_i)q_i\left(F_i(x)-F_i(y)-2D_i(x_i-y_i)\right)\\
&\leq&p\|Q(x-y)\|_p^pM^+_{p,Q}[F-2\tilde{D}]\\
&=&pW^p(t)M^+_{p,Q}[F-2\tilde{D}]\\
&\leq&pW^p(t)\displaystyle\sup_{(x,y)\in V}\mu_{p,Q}(J_F(x,y)-2D).
\end{array}
\]
The last inequality results by Proposition \ref{key3}, $$\displaystyle\sup_{(x,y)\in V}\mu_{p,Q}(J_F(x,y)-2D)=M_{p,Q}[F-2\tilde{D}]\geq M^+_{p,Q}[F-2\tilde{D}].$$
Therefore $W(t)\leq e^{\mu_{p,Q}(J_F-2D)t}W(0)$, i.e., in this case, 
 $$\|x(t)-y(t)\|_{p,Q}\leq e^{ct}\|x(0)-y(0)\|_{p,Q},$$
 where $c=\displaystyle\sup_{(x,y)\in V}\mu_{p,Q}(J_F(x,y)-2D).$

For $N=3$, there are two possible graphs: 

First, the complete graph with graph Laplacian matrix $L=\mbox{\scriptsize%
$\displaystyle\begin{bmatrix}-2 & 1&1 \\1 & -2&1\\ 1 &1 &-2\end{bmatrix}$}$ which leads to the following ODE system:
  \[
\begin{array}{rcl}
\dot{x}=F(x)+D(y-2x+z)\\
\dot{y}=F(y)+D(x-2y+z)\\
\dot{z}=F(z)+D(x+y-2z).
\end{array}
\]
Note that the smallest positive eigenvalue of $-L$ is $\lambda=3$. For a fixed solution $(x,y,z)$ of this ODE, define $w$ and $W$ as follows:
 \[w(t)=\left(\|x-y\|_{p,Q}, \|y-z\|_{p,Q}, \|x-z\|_{p,Q}\right),\]
  and  
$W(t):=\|w(t)\|_p.$ Similar to case $N=2$, we have 
 \[
\begin{array}{lcl}
\displaystyle\frac{dW^p(t)}{dt}&\leq&pW^p(t)M^+_{p,Q}[F-3\tilde{D}]\\
&\leq&pW^p(t)\displaystyle\sup_{(x,y,z)\in V}\mu_{p,Q}\left(J_F(x,y,z)-3D\right), 
\end{array}
\]
which leads to 
$$W^p(t)\leq e^{pct}W^p(0),$$ where $c=\displaystyle\sup_{(x,y,z)\in V}\mu_{p,Q}\left(J_F(x,y,z)-3D\right).$
Taking the $p$th roots, 
\[ W(t)\leq e^{ct}W(0).\]

The second graph has the graph Laplacian matrix $L=\mbox{\scriptsize%
$\displaystyle\begin{bmatrix}-1 & 1&0 \\1 & -2&1\\ 0&1 &-1\end{bmatrix}$}$,
with the following ODE system:
  \[
\begin{array}{lcl}
\dot{x}=F(x)+D(y-x)\\
\dot{y}=F(y)+D(x-2y+z)\\
\dot{z}=F(z)+D(y-z).
\end{array}
\]
Note that the smallest positive eigenvalue of $-L$ is $\lambda=1$. Let again for a fixed solution $(x,y,z)$, $w(t)=(\|x-y\|_{p,Q},\|y-z\|_{p,Q},\|x-z\|_{p,Q})$ and $W(t)=\|w(t)\|_p$. Then 
\[
\begin{array}{lcl}
\displaystyle\frac{dW^p(t)}{dt}&=&p\sum_{i=1}^n\abs{q_i(x_i-y_i)}^{p-2}q_i(x_i-y_i)q_i(\dot{x}_i-\dot{y}_i)\\
&+&p\sum_{i=1}^n\abs{q_i(y_i-z_i)}^{p-2}q_i(y_i-z_i)q_i(\dot{y}_i-\dot{z}_i)\\
&+&p\sum_{i=1}^n\abs{q_i(x_i-z_i)}^{p-2}q_i(x_i-z_i)q_i(\dot{x}_i-\dot{z}_i)\\
&=&p\sum_{i=1}^n\abs{q_i(x_i-y_i)}^{p-2}q_i(x_i-y_i)q_i\left(F_i(x)-F_i(y)-2D_i(x_i-y_i)+D_i(y_i-z_i)\right)\\
&+&p\sum_{i=1}^n\abs{q_i(y_i-z_i)}^{p-2}q_i(y_i-z_i)q_i\left(F_i(y)-F_i(z)-2D_i(y_i-z_i)+D_i(x_i-y_i)\right)\\
&+&p\sum_{i=1}^n\abs{q_i(x_i-z_i)}^{p-2}q_i(x_i-z_i)q_i\left(F_i(x)-F_i(z)-D_i(x_i-z_i)\right).
\end{array}
\]
By Lemma \ref{ab}, 
\[
\sum_{i=1}^n\abs{x_i-y_i}^{p-2}(x_i-y_i)(y_i-z_i)+\sum_{i=1}^n\abs{y_i-z_i}^{p-2}(y_i-z_i)(x_i-y_i)\leq\sum_{i=1}^n\abs{x_i-y_i}^p +\sum_{i=1}^n\abs{y_i-z_i}^p.
\]
Hence 
\[
\begin{array}{lcl}
\displaystyle\frac{dW^p(t)}{dt}&\leq&p\sum_{i=1}^n\abs{q_i(x_i-y_i)}^{p-2}q_i(x_i-y_i)q_i\left(F_i(x)-F_i(y)-D_i(x_i-y_i)\right)\\
&+&p\sum_{i=1}^n\abs{q_i(y_i-z_i)}^{p-2}q_i(y_i-z_i)q_i\left(F_i(y)-F_i(z)-D_i(y_i-z_i)\right)\\
&+&p\sum_{i=1}^n\abs{q_i(x_i-z_i)}^{p-2}q_i(x_i-z_i)q_i\left(F_i(x)-F_i(z)-D_i(x_i-z_i)\right)\\
&\leq&pW^p(t)M^+_{p,Q}[F-\tilde{D}]\\
&\leq&pW^p(t)\displaystyle\sup_{(x,y,z)\in V}\mu_{p,Q}(J_F(x,y,z)-D).
\end{array}.
\]
Similar to the previous case: \[ W(t)\leq e^{ct}W(0),\]
where in this case $c=\displaystyle\sup_{(x,y,z)\in V}\mu_{p,Q}\left(J_F(x,y,z)-D\right).$

 \section{Appendix}
 \subsection{Proof of Theorem \ref{mu}.}
To prove the theorem we need the following
\begin{proposition}\label{Joo}\cite{Joo}
Let $X$, $Y$ be arbitrary sets and let $\varphi:X\times Y\to \r$ be an arbitrary function. For any $y\in Y$ and $c\in\r$, denote $H_{y,c}=\{x\in X: \varphi(x,y)\geq c\}$ and $\mathcal{C}$ the set of all real numbers $c$ such that for any $y\in Y$, $H_{y,c}\neq\emptyset$ and let $c^*=\sup\mathcal{C}$. Then $$(B=)\displaystyle\sup_{x\in X}\inf_{y\in Y}\varphi(x,y)=\displaystyle\inf_{y\in Y}\sup_{x\in X}\varphi(x,y)(=J)$$ if and only if for every $c<c^*$, $\displaystyle\bigcap_{y\in Y}H_{y,c}\neq\emptyset$. In this case $B=J=c^*$. 
\end{proposition}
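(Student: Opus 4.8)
The plan is to first observe that the quantity $c^\ast$ in the statement is nothing other than $J=\inf_{y\in Y}\sup_{x\in X}\varphi(x,y)$, and then to deduce both implications by elementary manipulations of suprema and infima in the extended reals $\overline{\r}$. I begin by recording the unconditional weak-duality inequality $B\le J$: for any $x_0\in X$ and $y_0\in Y$ we have $\inf_{y}\varphi(x_0,y)\le\varphi(x_0,y_0)\le\sup_{x}\varphi(x,y_0)$, and taking the supremum over $x_0$ on the left, then the infimum over $y_0$ on the right, gives $B\le J$.

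Next I would show $c^\ast=J$. On one hand, if $c\in\mathcal{C}$ then for every $y$ the set $H_{y,c}$ is nonempty, so $\sup_x\varphi(x,y)\ge c$, hence $J=\inf_y\sup_x\varphi(x,y)\ge c$; taking the supremum over $c\in\mathcal{C}$ yields $J\ge c^\ast$. On the other hand, if $c<J$ then $\sup_x\varphi(x,y)>c$ for every $y$, so for each $y$ there is an $x$ with $\varphi(x,y)>c$, i.e. $H_{y,c}\ne\emptyset$; thus $(-\infty,J)\subseteq\mathcal{C}$ and so $c^\ast\ge J$. Hence $c^\ast=J$ in all cases, including the degenerate ones $c^\ast\in\{-\infty,+\infty\}$.

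With this in hand the two directions are short. Suppose first that $\bigcap_{y\in Y}H_{y,c}\ne\emptyset$ for every $c<c^\ast$. Choosing $x_c$ in this intersection gives $\varphi(x_c,y)\ge c$ for all $y$, hence $\inf_y\varphi(x_c,y)\ge c$ and therefore $B\ge c$; letting $c\uparrow c^\ast$ gives $B\ge c^\ast=J$, which with $B\le J$ forces $B=J=c^\ast$. Conversely, suppose $B=J$; then $B=J=c^\ast$ by the previous paragraph, and for any $c<c^\ast=B=\sup_x\inf_y\varphi(x,y)$ there is an $x_0$ with $\inf_y\varphi(x_0,y)>c$, so $\varphi(x_0,y)\ge c$ for every $y$ and thus $x_0\in\bigcap_{y}H_{y,c}$, which is therefore nonempty.

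The argument involves no serious obstacle; the only place demanding care is the bookkeeping of strict versus non-strict inequalities when passing between ``$\sup_x\varphi(x,y)\ge c$'' and ``$H_{y,c}\ne\emptyset$'' — these can disagree precisely when the supremum equals $c$ and is not attained — together with the limiting step $c\uparrow c^\ast$. Choosing the inequalities to be strict at the right moments, as above, closes any gap, and since every step is valid in $\overline{\r}$ the possibly infinite values of $B$, $J$, and $c^\ast$ require no separate treatment.
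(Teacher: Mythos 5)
Your proof is correct and takes essentially the same route as the paper's: weak duality $B\le J$, the unconditional inequality $J\le c^*$, the intersection hypothesis yielding $c^*\le B$ (via a point $x_c\in\bigcap_{y}H_{y,c}$ giving $B\ge c$ for every $c<c^*$), and, for the converse, extracting from $B=J>c$ a single $x_0$ with $\inf_y\varphi(x_0,y)>c$, hence lying in every $H_{y,c}$. Your additional observation that $c^*=J$ holds unconditionally is a mild streamlining rather than a different argument (and it incidentally cleans up the paper's converse step, which momentarily reads $c_0\in\mathcal{C}$ as supplying one $x_0$ valid for all $y$, though only the true consequence $J\ge c_0$ is actually used).
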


A proof can be found in \cite{Joo}. We provide a proof next. First we need the following 
\begin{lemma}
For any fixed $y\in Y$, if $c_1<c_2$ then $H_{y,c_2}\subset H_{y,c_1}$.
\end{lemma}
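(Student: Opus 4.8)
The plan is simply to unwind the definition of the super-level set $H_{y,c}=\{x\in X:\varphi(x,y)\ge c\}$ and invoke transitivity of the order on $\r$. Fix $y\in Y$ and reals $c_1<c_2$. To establish the inclusion, I would take an arbitrary $x\in H_{y,c_2}$; by definition this means $\varphi(x,y)\ge c_2$. Combining with $c_2>c_1$ gives $\varphi(x,y)\ge c_1$, i.e.\ $x\in H_{y,c_1}$. Since $x$ was arbitrary, $H_{y,c_2}\subseteq H_{y,c_1}$. Conceptually this is just the anti-monotonicity of $c\mapsto H_{y,c}$: raising the threshold can only shrink the set of points meeting it.

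There is no genuine obstacle here — the statement is immediate. The only point worth a moment's care is that one should \emph{not} try to prove a strict inclusion: the two sets may well coincide (for instance if $\varphi(\cdot,y)$ takes no value in the interval $[c_1,c_2)$), so the conclusion should be recorded as the containment $H_{y,c_2}\subseteq H_{y,c_1}$, which is exactly what is needed in the nesting argument used in the proof of Proposition~\ref{Joo} (ensuring the family $\{H_{y,c}\}_c$ is decreasing in $c$, so that the intersections $\bigcap_{y\in Y}H_{y,c}$ grow as $c$ decreases toward $c^*$).
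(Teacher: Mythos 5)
Your argument is correct and is exactly the paper's proof: unwind the definition of $H_{y,c}$ and use $\varphi(x,y)\geq c_2>c_1$ to conclude $x\in H_{y,c_1}$. Your side remark that only the (possibly non-strict) containment is needed for the nesting argument in Proposition~\ref{Joo} is also consistent with how the lemma is used there.
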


\begin{proof}
Pick any $x\in H_{y,c_2}$. By the definition of $H_{y,c_1}$, $\varphi(x,y)\geq c_2>c_1$ and hence $x\in H_{y,c_1}$.
\end{proof}

\begin{corollary}\label{n.e.intersection}
If $c_1<c_2$ then $\displaystyle\bigcap_{y\in Y} H_{y,c_2}\subset\bigcap_{y\in Y} H_{y,c_1}$.
\end{corollary}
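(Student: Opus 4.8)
The plan is to reduce the statement about intersections to the already-proven Lemma that, for fixed $y\in Y$, $c_1<c_2$ implies $H_{y,c_2}\subset H_{y,c_1}$. The key observation is that intersection is monotone with respect to inclusion applied term-by-term: if we have a family of sets $\{A_y\}_{y\in Y}$ and $\{B_y\}_{y\in Y}$ with $A_y\subseteq B_y$ for every $y$, then $\bigcap_{y\in Y}A_y\subseteq\bigcap_{y\in Y}B_y$.

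First I would fix an arbitrary element $x\in\bigcap_{y\in Y}H_{y,c_2}$. By definition of the intersection, this means $x\in H_{y,c_2}$ for every $y\in Y$. Next, for each such $y$, I would invoke the preceding Lemma: since $c_1<c_2$, we have $H_{y,c_2}\subset H_{y,c_1}$, and therefore $x\in H_{y,c_1}$. Since $y\in Y$ was arbitrary, $x$ lies in $H_{y,c_1}$ for every $y\in Y$, which is exactly the statement that $x\in\bigcap_{y\in Y}H_{y,c_1}$. As $x$ was an arbitrary element of $\bigcap_{y\in Y}H_{y,c_2}$, this establishes the desired inclusion.

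There is essentially no obstacle here: the result is a one-line consequence of the Lemma together with the definition of intersection, and requires no estimates or additional hypotheses. The only point worth noting is the edge case in which $\bigcap_{y\in Y}H_{y,c_2}$ is empty, but then the inclusion holds vacuously, so the argument above covers it without modification. This corollary is of course stated only as a stepping stone toward the proof of Proposition \ref{Joo} (the Kneser--Fan-type minimax identity), where the nested family $\{\bigcap_{y}H_{y,c}\}_c$ is exploited.
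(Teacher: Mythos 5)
Your proof is correct and follows exactly the route the paper intends: the corollary is stated as an immediate consequence of the preceding lemma, obtained by applying the inclusion $H_{y,c_2}\subset H_{y,c_1}$ for each fixed $y\in Y$ and then passing to the intersection, which is precisely your argument. Nothing further is needed.
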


\newtheorem*{proofofProposition5}{Proof of Proposition \ref{Joo}}
\newtheorem*{proofoftheoremmu}{Proof of Theorem \ref{mu}}

\begin{proofofProposition5}
First we assume that for every $c<c^*$, $\displaystyle\bigcap_{y\in Y}H_{y,c}\neq\emptyset$. Using this assumption we'll show that $B=J=c^*$. To this end we'll show the following three inequalities:
\begin{enumerate}
\item $B\leq J$. 
\[
\begin{array}{rcl}
\varphi(x,y)&\leq&\displaystyle\sup_{x\in X}\varphi(x,y)\qquad\forall x,y\\
&\Rightarrow&\displaystyle\inf_{y\in Y}\varphi(x,y)\leq\inf_{y\in Y}\sup_{x\in X}\varphi(x,y)\qquad\forall x\\
&\Rightarrow&\displaystyle\sup_{x\in X}\inf_{y\in Y}\varphi(x,y)\leq\inf_{y\in Y}\sup_{x\in X}f(x,y).
\end{array}
\]
Hence by the definition, $B\leq J$.
\item $J\leq c^*$.
For an arbitrary $c>c^*$ we'll show that $J\leq c$ and then we can conclude that $J\leq c^*$. Since $c>c^*$, there exists $y_0\in Y$ such that $H_{y_0,c}=\emptyset$ (otherwise $c\in\mathcal{C}$ and so $c\leq c^*$). This means that for all $x\in X$, $\varphi(x,y_0)<c$ which implies $\displaystyle\sup_{x\in X}\varphi(x,y_0)\leq c$ and hence $J=\displaystyle\inf_{y\in Y}\sup_{x\in X}\varphi(x,y_0)\leq c$.

\item $c^*\leq B$. For an arbitrary $c<c^*$ we'll show that $B\geq c$ and then we can conclude that $B\geq c^*$. Since $c<c^*$, there exists $c_0\in\mathcal{C}$ such that $c<c_0$ (otherwise $c=\displaystyle\sup\mathcal{C}$). Since we assumed that for every $c<c^*$, $\displaystyle\bigcap_{y\in Y}H_{y,c}\neq\emptyset$, and since $c_0\leq c^*$, $c_0\in\mathcal{C}$, we have $\displaystyle\bigcap_{y\in Y} H_{y,c_0}\neq\emptyset$. By Corollary \ref{n.e.intersection}, because $c<c_0$, then $\displaystyle\bigcap_{y\in Y} H_{y,c}\neq\emptyset$, i.e. there exists $x_0\in X$ such that for all $y\in Y$, $\varphi(x_0,y)\geq c$, which implies $\displaystyle\inf_{y\in Y}\varphi(x_0,y)\geq c$ and hence $B=\displaystyle\sup_{x\in X}\inf_{y\in Y}\varphi(x,y)\geq c$.

\end{enumerate}

Now we suppose $B=J$ and fix $c<c^*$. We'll show that $\displaystyle\bigcap_{y\in Y}H_{y,c}\neq\emptyset$. Since $c<c^*$, there exists $c_0\in\mathcal{C}$ such that $c<c_0$. This means that there exists $x_0\in X$ such that for all $y\in Y$, $\varphi(x_0,y)\geq c_0$, i.e. $J\geq c_0$. Since $J=B$, there exists $x_1\in X$ such that for all $y\in Y$, $\varphi(x_1,y)\geq c_0>c$, i.e. $\displaystyle\bigcap_{y\in Y}H_{y,c}\neq\emptyset$.\qed
\end{proofofProposition5}

For a fixed arbitrary norm $\|\cdot\|$ on $\r^{nN}$ and a fixed arbitrary matrix $A\in\r^{nN\times nN}$, define $\varphi:S^{nN-1}\times (0,1)\to\r$ by $$\varphi(v,h)=\displaystyle\frac{1}{h}\left(\|v+hAv\|-1\right),$$ where $S^{nN-1}=\{v\in\r^{nN}: \|v\|=1\}$. For any $h\in(0,1)$ and $c\in\r$, let $H_{h,c}=\{v\in S^{nN-1}: \varphi(v,h)\geq c\}$ and let $\mathcal{C}$ be the set of all real numbers $c$ such that  $H_{h,c}\neq\emptyset$ whenever $h\in(0,1)$. Let $c^*=\sup\mathcal{C}$.

\begin{lemma}\label{f-dec}
$\varphi(v,h)=\displaystyle\frac{1}{h}(\|v+hAv\|-1)$ is non-increasing as $h\to 0^+$.
 \end{lemma}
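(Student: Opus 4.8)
The plan is to recognize $\varphi(v,\cdot)$ as the chord--slope function of a convex function of one real variable, and then invoke the elementary monotonicity of such slopes. Fix $v\in S^{nN-1}$ and define $g\colon[0,1)\to\r$ by $g(h)=\|v+hAv\|$. The map $h\mapsto v+hAv$ is affine and $\|\cdot\|$ is convex, so $g$ is convex (and finite) on all of $[0,1)$; moreover $g(0)=\|v\|=1$, hence $\varphi(v,h)=\bigl(g(h)-g(0)\bigr)/h$ is precisely the slope of the secant line of $g$ through the points $0$ and $h$.

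The one substantive step is the three--slopes inequality for convex functions. For $0<h_1<h_2<1$, write $h_1=(1-t)\cdot 0+t\cdot h_2$ with $t=h_1/h_2\in(0,1)$; convexity of $g$ gives $g(h_1)\le(1-t)g(0)+tg(h_2)$, and subtracting $g(0)$ and dividing by $h_1=th_2$ yields
\[
\frac{g(h_1)-g(0)}{h_1}\;\le\;\frac{g(h_2)-g(0)}{h_2}.
\]
Thus $h\mapsto\varphi(v,h)$ is non-decreasing on $(0,1)$, i.e.\ $\varphi(v,h)$ is non-increasing as $h\to 0^+$, which is the assertion of the lemma. As a byproduct, $\lim_{h\to0^+}\varphi(v,h)$ exists in $[-\infty,\infty)$ and equals $\inf_{0<h<1}\varphi(v,h)$; this monotonicity is exactly what makes the nested family $\{H_{h,c}\}$ and the value $c^*$ in the appendix behave well as $h$ shrinks.

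I anticipate no genuine obstacle. The only points requiring a little care are that $g$ need not be differentiable, so one argues via secant slopes rather than via monotonicity of a derivative, and that one should record that $g$ is finite and convex on the entire interval $[0,1)$ so that the inequality applies to every admissible pair $h_1<h_2$. Both facts are immediate from the properties of a norm.
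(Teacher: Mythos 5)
Your proof is correct and is in essence the paper's own argument: the paper's proof is a bare-hands triangle-inequality computation showing $\varphi(v,\alpha h)\le\varphi(v,h)$ for $\alpha<1$, which is exactly the secant-slope (three-chords) inequality you invoke for the convex function $g(h)=\|v+hAv\|$ at base point $0$ with $g(0)=\|v\|=1$. The only difference is packaging --- you cite convexity of the norm and the chord-slope lemma, while the paper unwinds that lemma into the explicit estimate --- so nothing is missing.
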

 
\begin{proof}
Let $\alpha<1$. Since $\|v\|=1$,
\[
\begin{array}{rcl}
\varphi(v,\alpha h)&=&\displaystyle\frac{1}{\alpha h}\left(\|v+\alpha hAv\|-1\right)\\
&=&\displaystyle\frac{1}{h}\left(\|\frac{v}{\alpha}+hAv\|-\frac{1}{\alpha}\right)\\
&=&\displaystyle\frac{1}{h}\left(\|(\frac{v}{\alpha}-v)+(v+hAv)\|-(\frac{1}{\alpha}-1)-1\right)\\
&\leq&\displaystyle\frac{1}{h}\left(\|(\frac{v}{\alpha}-v)\|+\|(v+hAv)\|-(\frac{1}{\alpha}-1)-1\right)\\
&=&\displaystyle\frac{1}{h}(\|v+hAv\|-1)\\
&=&\varphi(v,h).
 \end{array}
\]
\end{proof}

\begin{corollary}\label{n-dec}
For any matrix $A$, $\displaystyle\frac{1}{h}(\|I+hA\|-1)$ is non-increasing as $h\to 0^+$.
 \end{corollary}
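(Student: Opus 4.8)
The plan is to derive Corollary \ref{n-dec} directly from Lemma \ref{f-dec}, by writing the induced operator norm as a supremum over unit vectors and using that a pointwise supremum of monotone functions is monotone in the same sense.

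First I would recall that, by definition of the induced operator norm, $\|I+hA\| = \sup_{\|v\|=1}\|v+hAv\|$. Since for each fixed $h>0$ the map $s\mapsto\frac{1}{h}(s-1)$ is affine with positive slope, it commutes with the supremum, so
\[
\frac{1}{h}\bigl(\|I+hA\|-1\bigr) \;=\; \sup_{\|v\|=1}\frac{1}{h}\bigl(\|v+hAv\|-1\bigr) \;=\; \sup_{\|v\|=1}\varphi(v,h),
\]
where $\varphi(v,h)=\frac{1}{h}(\|v+hAv\|-1)$ is the function appearing in Lemma \ref{f-dec}.

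Next I would invoke Lemma \ref{f-dec}: for each fixed $v$ with $\|v\|=1$, the function $h\mapsto\varphi(v,h)$ is non-increasing as $h\to0^+$, i.e. $\varphi(v,h_1)\le\varphi(v,h_2)$ whenever $0<h_1\le h_2<1$. Fixing such $h_1\le h_2$, for every unit vector $v$ one then has $\varphi(v,h_1)\le\varphi(v,h_2)\le\sup_{\|w\|=1}\varphi(w,h_2)$; taking the supremum over $v$ on the left-hand side yields $\sup_{\|v\|=1}\varphi(v,h_1)\le\sup_{\|v\|=1}\varphi(v,h_2)$. By the displayed identity, this is precisely the assertion that $\frac{1}{h}(\|I+hA\|-1)$ is non-increasing as $h\to0^+$.

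There is essentially no obstacle here: the only points requiring (minor) care are the exchange of the supremum with the affine rescaling $s\mapsto\frac{1}{h}(s-1)$ — legitimate because $\frac{1}{h}>0$, exactly as already noted in the remark following the definition of $\mu_X$ — and correctly tracking the direction of monotonicity inherited from Lemma \ref{f-dec}. Everything else is a one-line consequence of the stability of monotonicity under pointwise suprema.
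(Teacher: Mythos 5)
Your proposal is correct and follows essentially the same route as the paper: both write $\frac{1}{h}(\|I+hA\|-1)=\sup_{\|v\|=1}\varphi(v,h)$ using the induced operator norm and the positivity of $\frac{1}{h}$, and then transfer the pointwise monotonicity of Lemma \ref{f-dec} through the supremum. The only difference is cosmetic: you compare $h_1\leq h_2$ directly, while the paper parametrizes the smaller value as $\alpha h$ with $\alpha<1$.
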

 
\begin{proof}
Let $\alpha<1$. By the definition of norm of a matrix and Lemma \ref{f-dec}
\[
\begin{array}{rcl}
\displaystyle\frac{1}{\alpha h}(\|I+\alpha hA\|-1)&=&\displaystyle\sup_{\|v\|=1}\frac{1}{\alpha h}\left(\|v+\alpha hAv\|-1\right)\\
&=&\displaystyle\sup_{\|v\|=1}\varphi(v,\alpha h)\\
&\leq&\displaystyle\sup_{\|v\|=1}\varphi(v, h)\\
&=&\displaystyle\frac{1}{h}(\|I+hA\|-1).
 \end{array}
\]
\end{proof}

\begin{proofoftheoremmu} 
{\em Claim $1$.} $$\displaystyle\sup_{v\in S^{nN-1}}\inf_{h\in (0,1)}\varphi(v,h)=\displaystyle\inf_{h\in (0,1)}\sup_{v\in S^{nN-1}}\varphi(v,h).$$

{\em Proof of Claim $1$.} To apply Proposition \ref{Joo}, we'll show that for $c<c^*$, $\displaystyle\bigcap_{h\in (0,1)}H_{h,c}\neq\emptyset$, where $H_{h,c}=\{v\in S^{nN-1}: \varphi(v,h)\geq c\}$ and $c^*$ is defined as above. By Lemma \ref{f-dec}, $\varphi(v,h)$ is decreasing in $h$ which implies $ H_{h_1,c}\subset H_{h_2,c}$ when $h_1<h_2$. Also by the definition of $c^*$, $c<c^*$ implies that $H_{h,c}\neq\emptyset$ for any $h\in (0,1)$.  On the other hand, each $H_{h,c}$ is a closed subset of $S^{nN-1}$, so they are all compact. Hence their intersection is non-empty.

{\em Claim $2$.} $$\sup_{v\in S^{nN-1}}\lim_{h\to0^+}\varphi(v,h)=\sup_{v\in S^{nN-1}}\inf_{h\in (0,1)}\varphi(v,h)$$ and 
$$\lim_{h\to0^+}\sup_{v\in S^{nN-1}}\varphi(v,h)=\inf_{h\in (0,1)}\sup_{v\in S^{nN-1}}\varphi(v,h).$$

{\em Proof of claim $2$.} By Lemma \ref{f-dec}, since $f(v,h)$ is non-increasing as $h\to0^+$, $\displaystyle\sup_{v\in S^{nN-1}}\lim_{h\to0^+}\varphi(v,h)=\displaystyle\sup_{v\in S^{nN-1}}\inf_{h\in (0,1)}\varphi(v,h)$. By Corollary \ref{n-dec}, since $\displaystyle\frac{1}{h}(\|I+hA\|-1)$ is non-increasing as $h\to0^+$, 
$\displaystyle\lim_{h\to0^+}\sup_{v\in S^{nN-1}}\varphi(v,h)=\displaystyle\inf_{h\in (0,1)}\sup_{v\in S^{nN-1}}\varphi(v,h).$
By claim $1$, the right hand side of the equalities in Claim $2$ are equal, and therefore so are their left hand sides:
\[\displaystyle\sup_{v\in S^{nN-1}}\lim_{h\to0^+}\varphi(v,h)=\displaystyle\lim_{h\to0^+}\sup_{v\in S^{nN-1}}\varphi(v,h),\]
 which implies $\mu(A)=\displaystyle\sup_{\|v\|=1}\lim_{h\to 0^+}\frac{1}{h}(\|v+hAv\|-1)$.\qed
\end{proofoftheoremmu}

\subsection{Proof of Lemma \ref{p-limit}.}
 \begin{proof}
Fix $p<q$. Then there exists $r>0$ such that $\displaystyle\frac{1}{q}+\frac{1}{r}=\displaystyle\frac{1}{p}$. Indeed $r=\displaystyle\frac{1}{\frac{1}{p}-\frac{1}{q}}$. Using Holder's inequality, 
\[
\begin{array}{rcl}
\displaystyle\left(\int_{\Omega}\abs{f}^p\right)^{\frac{1}{p}}&\leq&\displaystyle\left(\int_{\Omega}\abs{f}^q\right)^{\frac{1}{q}}\left(\int_{\Omega}1^r\right)^{\frac{1}{r}}\\
&\leq&\displaystyle\left(\int_{\Omega}\abs{f}^q\right)^{\frac{1}{q}}\abs{\Omega}^{\frac{1}{p}-\frac{1}{q}}\\
\Rightarrow\;\; \displaystyle\left(\frac{1}{\abs{\Omega}}\int_{\Omega}\abs{f}^p\right)^{\frac{1}{p}}&\leq&\displaystyle\left(\frac{1}{\abs{\Omega}}\int_{\Omega}\abs{f}^q\right)^{\frac{1}{q}}.
\end{array}
\]
Hence $F$ is an increasing function of $p$. Now we'll show that as $p\to\infty$, $F(p)\to\|f\|_{\infty}$. Note that for any $p$, $F(p)\leq\|f\|_{\infty}$ and therefore 
$\displaystyle\lim_{p\to\infty}F(p)\leq\|f\|_{\infty}$. To prove the converse inequality, for any $\epsilon>0$, we define $E_{\epsilon}:=\displaystyle\{\omega\in\Omega: \abs{f(\omega)}>\|f\|_{\infty}-\epsilon\}$ which by the definition of $\|f\|_{\infty}$ has positive measure, i.e. $\displaystyle\abs{E_{\epsilon}}>0.$
\[
\begin{array}{rcl}
\displaystyle\left(\int_{\Omega}\abs{f}^p\right)^{\frac{1}{p}}&\geq&\displaystyle\left(\int_{E_{\epsilon}}\abs{f}^p\right)^{\frac{1}{p}}\\
&\geq&\displaystyle(\|f\|_{\infty}-\epsilon)\left(\frac{\abs{E_{\epsilon}}}{\abs{\Omega}}\right)^{\frac{1}{p}}
\end{array}
\]
But because $\displaystyle\frac{\abs{E_{\epsilon}}}{\abs{\Omega}}>0$, $\displaystyle\left(\frac{\abs{E_{\epsilon}}}{\abs{\Omega}}\right)^{\frac{1}{p}}\to1$ as $p\to\infty$. Therefore for any arbitrary $\epsilon>0$, $$\displaystyle\lim_{p\to\infty}F(p)\geq\|f\|_{\infty}-\epsilon,$$ which implies $$\displaystyle\lim_{p\to\infty}F(p)\geq\|f\|_{\infty}.$$
\end{proof}


\begin{thebibliography}{10}

\bibitem{kholodenko1999}
G.~C. Brown and B.~N. Kholodenko.
\newblock {{S}patial gradients of cellular phospho-proteins}.
\newblock {\em FEBS Lett.}, 457(3):452--454, Sep 1999.

\bibitem{kalab2002}
P.~Kalab, K.~Weis, and R.~Heald.
\newblock {{V}isualization of a {R}an-{G}{T}{P} gradient in interphase and
  mitotic {X}enopus egg extracts}.
\newblock {\em Science}, 295(5564):2452--2456, Mar 2002.

\bibitem{kholodenko2006}
B.N. Kholodenko.
\newblock Cell-signalling dynamics in time and space.
\newblock {\em Nature Reviews Molecular Cell Biology}, 7:165--176, 2006.

\bibitem{xiong2011}
Y.~Xiong, C.~H. Huang, P.~A. Iglesias, and P.~N. Devreotes.
\newblock {{C}ells navigate with a local-excitation, global-inhibition-biased
  excitable network}.
\newblock {\em Proc. Natl. Acad. Sci. U.S.A.}, 107(40):17079--17086, 2010.

\bibitem{murray2002}
J.D. Murray.
\newblock {\em Mathematical Biology, I, II: An introduction}.
\newblock Springer-Verlag, New York, 2002.

\bibitem{keshet}
L.~Edelstein-Keshet.
\newblock {\em Mathematical models in biology}.
\newblock Society for Industrial and Applied Mathematics (SIAM), 2005.

\bibitem{basu2005}
S.~Basu, Y.~Gerchman, C.~H. Collins, F.~H. Arnold, and R.~Weiss.
\newblock {{A} synthetic multicellular system for programmed pattern
  formation}.
\newblock {\em Nature}, 434(7037):1130--1134, Apr 2005.

\bibitem{weiss12homeostasis}
M.~Miller, M.~Hafner, E.D. Sontag, N.~Davidsohn, S.~Subramanian, P.~E.~M.
  Purnick, D.~Lauffenburger, and R.~Weiss.
\newblock Modular design of artificial tissue homeostasis: robust control
  through synthetic cellular heterogeneity.
\newblock {\em PLoS Computational Biology}, 8:e1002579--, 2012.

\bibitem{Turing}
A.~M. Turing.
\newblock {The Chemical Basis of Morphogenesis}.
\newblock {\em Philosophical Transactions of the Royal Society of London.
  Series B, Biological Sciences}, 237(641):37--72, 1952.

\bibitem{GiererMeinhardt1972}
A.~Gierer and H.~Meinhardt.
\newblock {{A} theory of biological pattern formation}.
\newblock {\em Kybernetik}, 12(1):30--39, Dec 1972.

\bibitem{Gierer1981}
A.~Gierer.
\newblock {{G}eneration of biological patterns and form: some physical,
  mathematical, and logical aspects}.
\newblock {\em Prog. Biophys. Mol. Biol.}, 37(1):1--47, 1981.

\bibitem{Othmer1969}
H.G. Othmer and L.E. Scriven.
\newblock Interactions of reaction and diffusion in open systems.
\newblock {\em Ind. Eng. Chem. Fundamentals}, 8:302–313, 1969.

\bibitem{Segel1972}
L.~A. Segel and J.~L. Jackson.
\newblock {{D}issipative structure: an explanation and an ecological example}.
\newblock {\em J. Theor. Biol.}, 37(3):545--559, Dec 1972.

\bibitem{Cross1978}
G.W. Cross.
\newblock Three types of matrix stability.
\newblock {\em Linear algebra and its applications}, 20:253–262, 1978.

\bibitem{Conway1978}
E.~Conway, D.~Hoff, , and J.~Smoller.
\newblock Large time behavior of solutions of systems of nonlinear
  reaction–diffusion equations.
\newblock {\em SIAM Journal on Applied Mathematics}, 35:1–16, 1978.

\bibitem{Othmer1980}
H.G. Othmer.
\newblock Synchronized and differentiated modes of cellular dynamics.
\newblock In H.~Haken, editor, {\em Dynamics of Synergetic Systems}, page
  191–204. Springer, 1980.

\bibitem{Borckmans1995}
P.~Borckmans, G.~Dewel, A.~De~Wit, and D.~Walgraef.
\newblock Turing bifurcations and pattern selection.
\newblock In R.~Kapral and K.~Showalter, editors, {\em Chemical Waves and
  Patterns}, page 323–363. Kluwer, 1995.

\bibitem{Castets1990}
V.~Castets, E.~Dulos, J.~Boissonade, and P.~De~Kepper.
\newblock {Experimental evidence of a sustained standing Turing-type
  nonequilibrium chemical pattern}.
\newblock {\em Physical Review Letters}, 64(24):2953--2956, 1990.

\bibitem{maini2000}
R.~A. Satnoianu, M.~Menzinger, and P.~K. Maini.
\newblock {{T}uring instabilities in general systems}.
\newblock {\em J Math Biol}, 41(6):493--512, Dec 2000.

\bibitem{ddv07}
D.~{Del Vecchio}, A.J. Ninfa, and E.D. Sontag.
\newblock Modular cell biology: Retroactivity and insulation.
\newblock {\em Nature Molecular Systems Biology}, 4:161, 2008.

\bibitem{Russo}
E.~D.~Sontag G.~Russo, M. di~Bernardo.
\newblock Global entrainment of transcriptional systems to periodic inputs.
\newblock {\em PLoS Comput. Biol.}, 6(4), 2010.

\bibitem{IEEEsysbio_JAS}
M.R. Jovanovi\'c, M.~Arcak, and E.D. Sontag.
\newblock A passivity-based approach to stability of spatially distributed
  systems with a cyclic interconnection structure.
\newblock {\em IEEE Transactions on Circuits and Systems, Special Issue on
  Systems Biology}, 55:75--86, 2008.

\bibitem{limingwang2012}
L.~Wang.
\newblock A passivity-based stability criterion for reaction diffusion systems
  with interconnected structure.
\newblock {\em Discrete and Continuous Dynamical Systems - Series B},
  17:303--323, 2012.

\bibitem{Arcak}
M.~Arcak.
\newblock Certifying spatially uniform behavior in reaction-diffusion pde and
  compartmental ode systems.
\newblock {\em Automatica}, 47(6):1219--1229, 2011.

\bibitem{Deimling}
K.~Deimling.
\newblock {\em Nonlinear Functional Analysis}.
\newblock Springer, 1985.

\bibitem{Soderlind}
G.~Soderlind.
\newblock The logarithmic norm. history and modern theory.
\newblock {\em BIT}, 46(3):631–652, 2006.

\bibitem{sontag-control}
E.~D. Sontag.
\newblock {\em Mathematical Control Theory: Deterministic Finite Dimensional
  Systems}.
\newblock Springer, second edition, 1998.

\bibitem{Desoer}
M.~Vidyasagar C.~A.~Desoer.
\newblock {\em Feedback systems: input-output properties}.
\newblock Electrical Science. Academic Press [Harcourt Brace Jovanovich,
  Publishers], 1975.

\bibitem{Smith}
H.~Smith.
\newblock {\em Monotone Dynamical Systems: An Introduction to the Theory of
  Competitive and Cooperative Systems}.
\newblock American Mathematical Society, 1995.

\bibitem{Cantrell}
C.~Cosner R.~S.~Cantrell.
\newblock {\em Spatial ecology via reaction-diffusion equations}.
\newblock Wiley Series in Mathematical and Computational Biology, 2003.

\bibitem{Evans}
L.~C. Evans.
\newblock {\em Partial differential equations}.
\newblock American Mathematical Society, 2010.

\bibitem{Soderlind2}
G.~Soderlind.
\newblock Bounds on nonlinear operators in finite-dimensional banach spaces.
\newblock {\em Numer}, 50(1):27--44, 1986.

\bibitem{Joo}
I.~Joo.
\newblock Note on my paper: A simple proof for von neumann's minimax theorem.
\newblock {\em Acta Sci. Math.}, 42(3-4):363–365, 1980.

\end{thebibliography}

\end{document}